\documentclass[pdflatex,sn-mathphys-num]{sn-jnl}

\usepackage{amsmath, amssymb, mathtools}
\usepackage{amsthm} 

\usepackage{graphicx}
\usepackage[table]{xcolor}
\usepackage{array}
\usepackage{booktabs}
\usepackage{multirow}
\usepackage{makecell}
\usepackage{adjustbox}
\usepackage{calc}
\usepackage{colortbl}    
\usepackage{hyperref}

\usepackage{algorithm}
\usepackage{algpseudocode}
\usepackage{listings}

\usepackage{tikz}
\usetikzlibrary{arrows.meta, positioning, matrix, fit, backgrounds, shadings}
\pgfdeclarelayer{background}
\pgfsetlayers{background,main,foreground}

\usepackage{subcaption} 

\usepackage{textcomp}

\usepackage{cleveref} 

\newcommand{\defword}[1]{{\em #1}}
\newcommand{\equals}{:=}

\newtheorem{theorem}{Theorem}[section]
\newtheorem{lemma}{Lemma}[section]
\newtheorem{definition}{Definition}[section]

\newtheorem{corollary}{Corollary}[section]

\newtheorem{example}{Example}[section]

\theoremstyle{remark}
\newtheorem{remark}{Remark}[section]

\crefname{hypothesis}{Hypothesis}{Hypotheses}
\crefname{fact}{Fact}{Facts}

\makeatletter
\newcommand{\hyperfixedlabel}[2]{%
  \hypertarget{#1}{}
  \protected@write\@auxout{}{\string\newlabel{#1}{{#2}{\thepage}{}{#1}{}}}
}
\makeatother

\newcounter{matnum}[table]
\renewcommand{\thematnum}{\thetable.\arabic{matnum}}

\newcommand{\matref}[1]{\hyperlink{rommat:#1}{(\thetable.#1)}}

\usepackage{xr-hyper}
\usepackage{hyperref} 

\definecolor{sand}{RGB}{245,240,215}
\definecolor{palelilac}{RGB}{230,225,238}
\definecolor{taupe}{RGB}{225,215,205}
\definecolor{sagegreen}{RGB}{205,225,205}
\definecolor{coralorange}{RGB}{255,209,188}
\definecolor{LightPink}{HTML}{FFDEE9}
\definecolor{MintGreen}{HTML}{a1ffce}
\definecolor{lightblue}{RGB}{204,255,255}
\definecolor{lightgreen}{RGB}{210,245,210}
\definecolor{lightpurple}{RGB}{218, 112, 214}

\definecolor{nodeSingle}{RGB}{68,114,196}    
\definecolor{nodeMix}{RGB}{255,242,204}     
\definecolor{branchBg}{RGB}{226,240,217}    
\definecolor{branchEdge}{RGB}{146,208,80}   
\begin{document}

\title[Article Title]{Obstruction of Absolute Concentration Robustness by Conservation Laws in Non-Redundant Zero-One Networks}


\author[1]{\fnm{Xinyi} \sur{Si}}
\email{\text{sta\_sxy@buaa.edu.cn}}

\author*[1]{\fnm{Xiaoxian} \sur{Tang}}\email{xiaoxian@buaa.edu.cn}

\affil[1]{\orgdiv{School of Mathematical Sciences}, \orgname{Beihang University}, \orgaddress{
\postcode{100191}, \state{Beijing}, \country{China}}}


\abstract{Absolute concentration robustness (ACR) is a structural property of biochemical reaction networks in which a species attains the same steady-state concentration at every positive steady state, independently of initial conditions and rate constants. Existing detection methods rely on algebraic elimination and typically scale exponentially with network size. We develop a topology-based alternative for non-redundant zero-one networks of stoichiometric dimension at most two, a class that already captures enzyme catalysis, carbon-nanotube transitions, and other elementary biochemical mechanisms. Organizing our analysis around a structural index $s^*$, the number of distinct rows in the stoichiometric matrix, we obtain a complete classification of all such networks admitting non-vacuous ACR for generic rate constants. In dimension one, ACR occurs only for the elementary inflow and outflow module. In dimension two, ACR is possible if and only if $s^*\leq 3$; for $s^*=3$, the admissible networks are precisely those obtained as species refinements of consistent subnetworks of five canonical biochemical prototypes. For $s^*\geq 4$, non-vacuous ACR is impossible for any generic rate assignment.
}

\keywords{chemical reaction network, mass-action kinetics, absolute concentration robustness, conservation law, stoichiometric matrix, structural classification}


\pacs[MSC Classification]{37N25, 92C40, 92C42}

\maketitle

\section{Introduction}

Mass-action kinetics provides a standard framework for modeling biochemical reaction networks \cite{HoJa1972}. Within this setting, absolute concentration robustness (ACR) identifies species whose steady-state concentrations are invariant across all positive steady states and independent of initial conditions and rate constants \cite{Shinar2010}. ACR is therefore a key indicator of structural robustness in cellular signaling and metabolic regulation, and has been studied from algebraic, dynamical, and control-theoretic perspectives \cite{Joshi2021, Joshi2023a, Joshi2023b, Yi2000, Briat2016, Cappelletti2019}.

Most existing ACR detection methods reduce the problem to analyzing steady-state polynomial systems via computational algebraic geometry \cite{Karp2012, PerezMillan2015, Puente2025, Baldi2021}. Although effective in moderate dimensions, such approaches---including Gr\"{o}bner basis elimination and real radical computation---generally scale exponentially with network size \cite{Schenck2003} and can fail when robust concentrations are irrational.

A rank-based criterion in \cite[Theorem 4.4]{Feliu2024}, built from the stoichiometric and reactant matrices, gives a necessary condition for ACR. When a network is not full-dimensional and is therefore subject to conservation laws, this necessary condition admits a more direct structural formulation, recorded here as an obstruction principle  (Theorem~\ref{thm:main}). The theorem considers removing a single species from a network while keeping the stoichiometric dimension unchanged. If the resulting network is nondegenerate, then the removed species cannot exhibit non-vacuous ACR for generic rate constants. Keeping the dimension unchanged upon removal means that the concentration of the removed species is coupled to the remaining species through a conservation law. Theorem~\ref{thm:main} therefore makes explicit an obstruction principle: conservation law structure, together with nondegeneracy of the resulting network, forbids ACR in the removed species. Although Theorem~\ref{thm:main} is not the main classification result of this paper, it serves a dual purpose beyond its later use in proofs. It explains why conservation laws are natural suspects in the search for structural obstructions to ACR, and it motivates the central question we address: how does conservation law structure determine when ACR can nevertheless occur?

Many biochemically studied ACR examples are naturally formulated as networks of elementary reactions with binary stoichiometric coefficients and no species appearing on both sides of any reaction. Fig.~\ref{fig:networks_background} collects representative models from the literature, including Phospho-OmpR dephosphorylation \cite[Fig.~2D]{Shinar2010}, the IDHKP--IDH glyoxylate bypass \cite[Example 3.3]{Joshi2024}, and a signal transduction module \cite[Example 4.17]{Puente2025}. Although these networks differ in scale and biological context, they share the same zero-one, non-redundant structure and therefore motivate our focus on this class as the appropriate setting for a complete structural classification of ACR.

\begin{figure}[!tbh]
    \centering
    \includegraphics[width=0.75\linewidth]{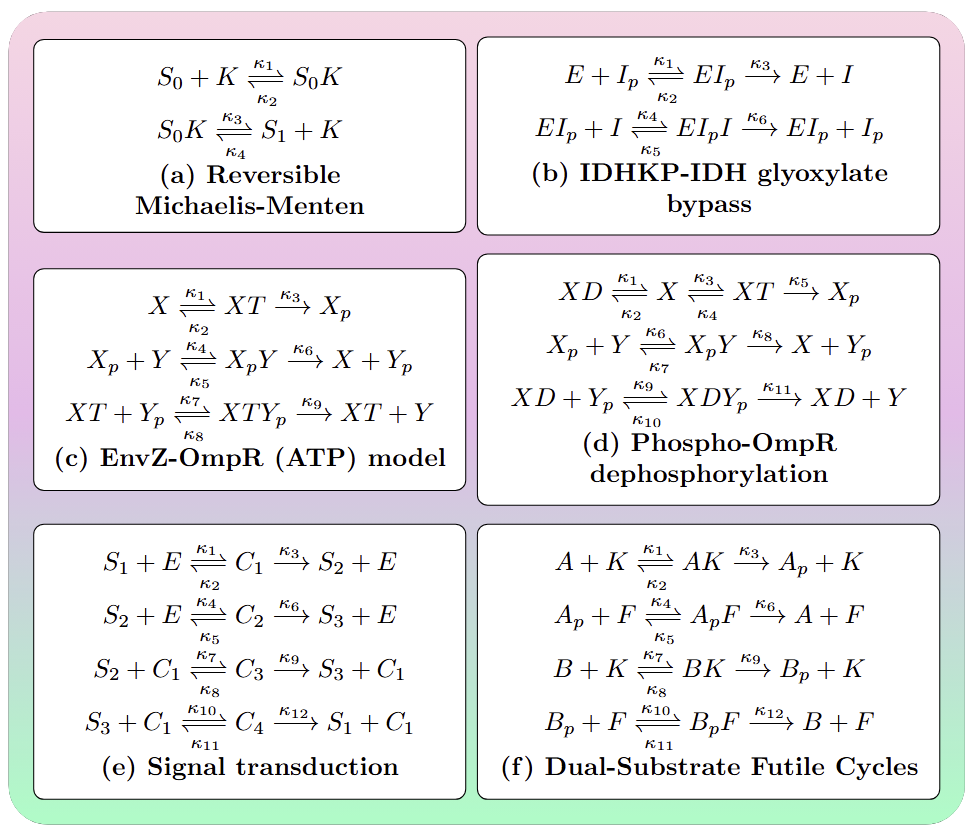}
    \caption{Biochemically motivated reaction networks from the literature that motivate the study of non‑redundant zero‑one networks.}
    \parbox{\linewidth}{\footnotesize Networks (a)--(f) are taken from: (a) \cite[Example 3.11]{Puente2025}; (b) \cite[Example 3.3]{Joshi2024}; (c) \cite[Fig.~2B]{Shinar2010}; (d) \cite[Fig.~2D]{Shinar2010}; (e) \cite[Example 4.17]{Puente2025}; (f) \cite[Supplement, \S 6.1]{Conradi2017a}.}
    \label{fig:networks_background}
\end{figure}

This paper gives a complete classification of non-vacuous ACR in non-redundant zero-one networks of stoichiometric dimension at most two, working directly from the stoichiometric matrix without explicit steady-state elimination. The classification is organized around a structural index $s^*$, defined as the number of distinct rows of the stoichiometric matrix.

Our main classification results are as follows.
\begin{enumerate}
    \item[(I)] \textbf{Dimension one.} The only non-redundant zero-one network with non-vacuous ACR for generic rate constants is the elementary inflow--outflow module $X_1\xrightleftharpoons[]{}0$ (Theorem~\ref{thm:one-dim ACR}).
    \item[(II)] \textbf{Dimension two, $s^*=2$.} A network admits non-vacuous ACR for generic rate constants if and only if it is nondegenerate and exactly one stoichiometric row is distinct from all remaining rows (Theorem~\ref{thm:2d,s,2s^*,has ACR i.f.f.}).
    \item[(III)] \textbf{Dimension two, $s^*=3$.} Non-vacuous ACR occurs if and only if the network is a natural species refinement (Definition~\ref{def:species_refinement}) of a consistent subnetwork of one of the five canonical prototypes in Fig.~\ref{fig:total_networks}, with the stoichiometric row of the ACR species appearing exactly once (Theorem~\ref{thm:2d,3s,3s^* with ACR table}).
    \item[(IV)] \textbf{Dimension two, $s^*\geq 4$.} No network admits non-vacuous ACR for any generic rate assignment (Theorem~\ref{thm:2d,s*≥4 no ACR}).
\end{enumerate}

\begin{figure}[htbp]
    \centering
    \begin{subfigure}{0.48\textwidth}
        \centering
        \includegraphics[height=1.9cm]{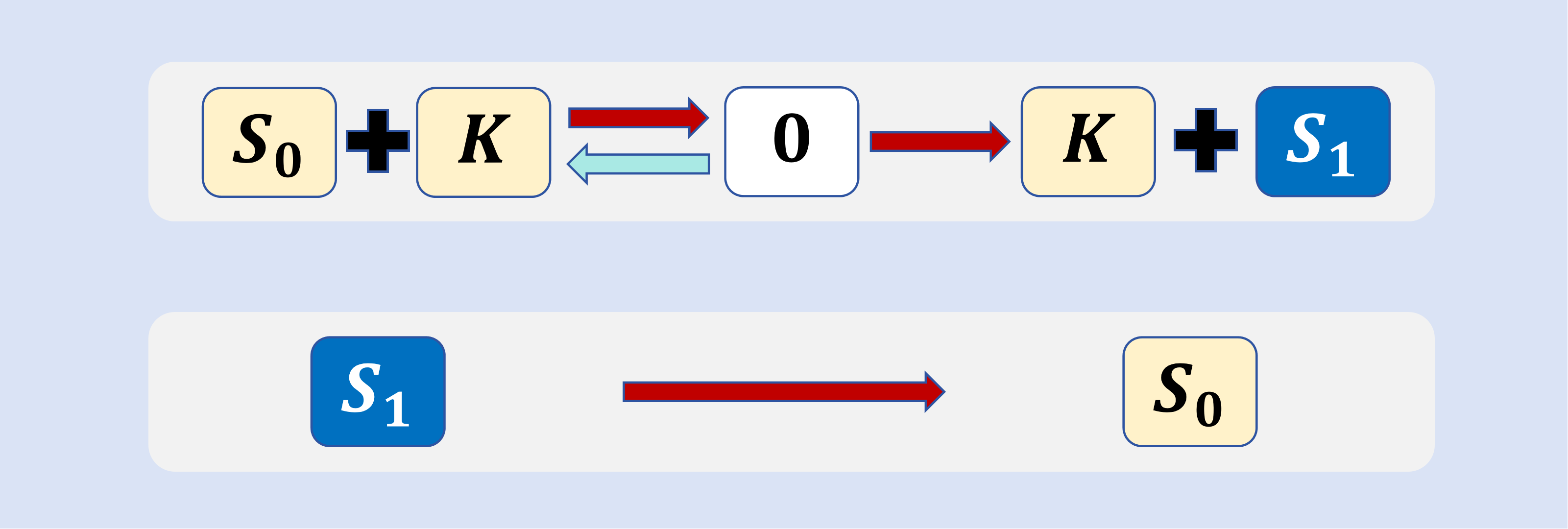}
        \caption{Conservative enzyme-catalyzed network}
        \label{fig:1a}
    \end{subfigure}
    \hfill
    \begin{subfigure}{0.48\textwidth}
        \centering
        \includegraphics[height=1.9cm]{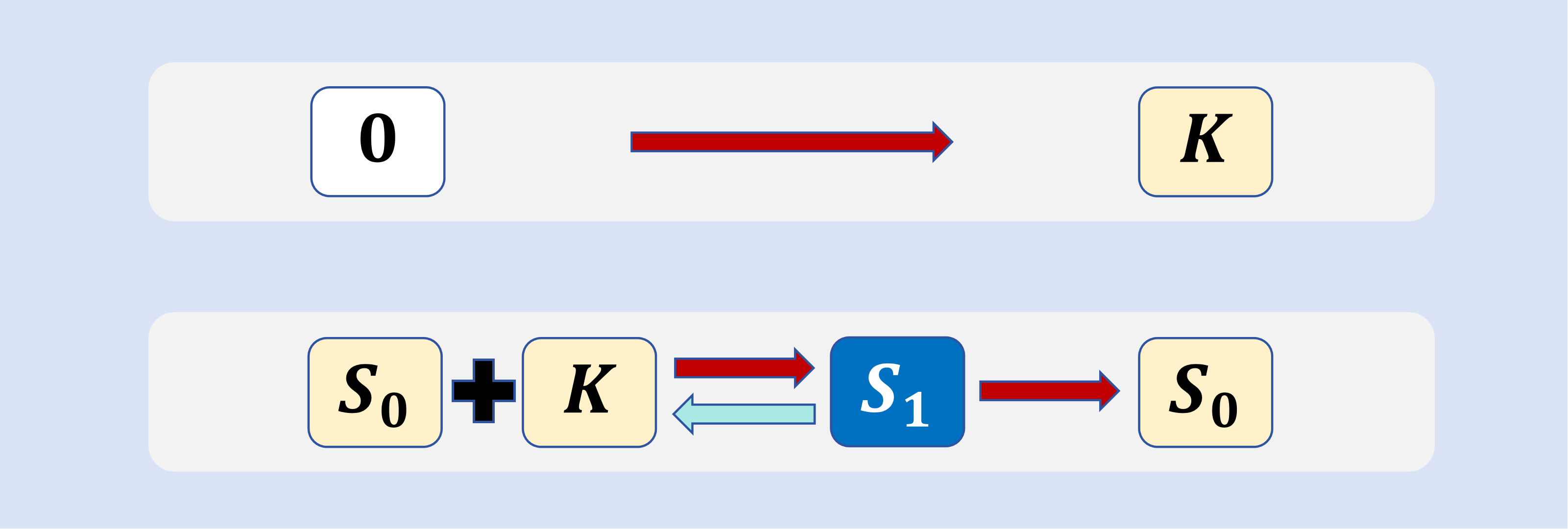}
        \caption{Anticoupled enzyme-catalyzed network}
        \label{fig:1b}
    \end{subfigure}
    \vspace{1pt}

    \begin{subfigure}{0.48\textwidth}
        \centering
        \includegraphics[height=2cm]{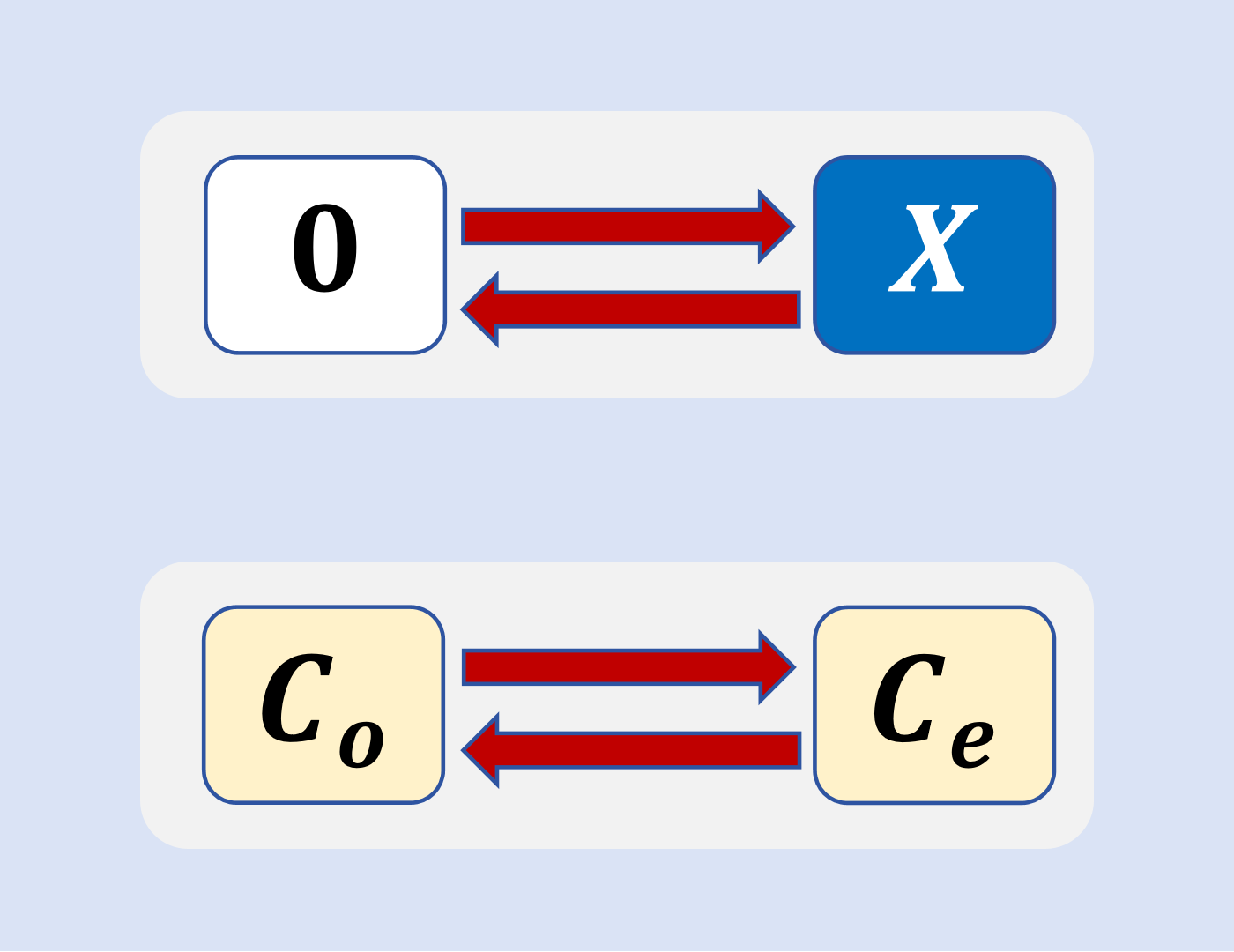}
        \caption{Decoupled carbon nanotube ropes}
        \label{fig:1c}
    \end{subfigure}
    \hfill
    \begin{subfigure}{0.48\textwidth}
        \centering
        \includegraphics[height=2cm]{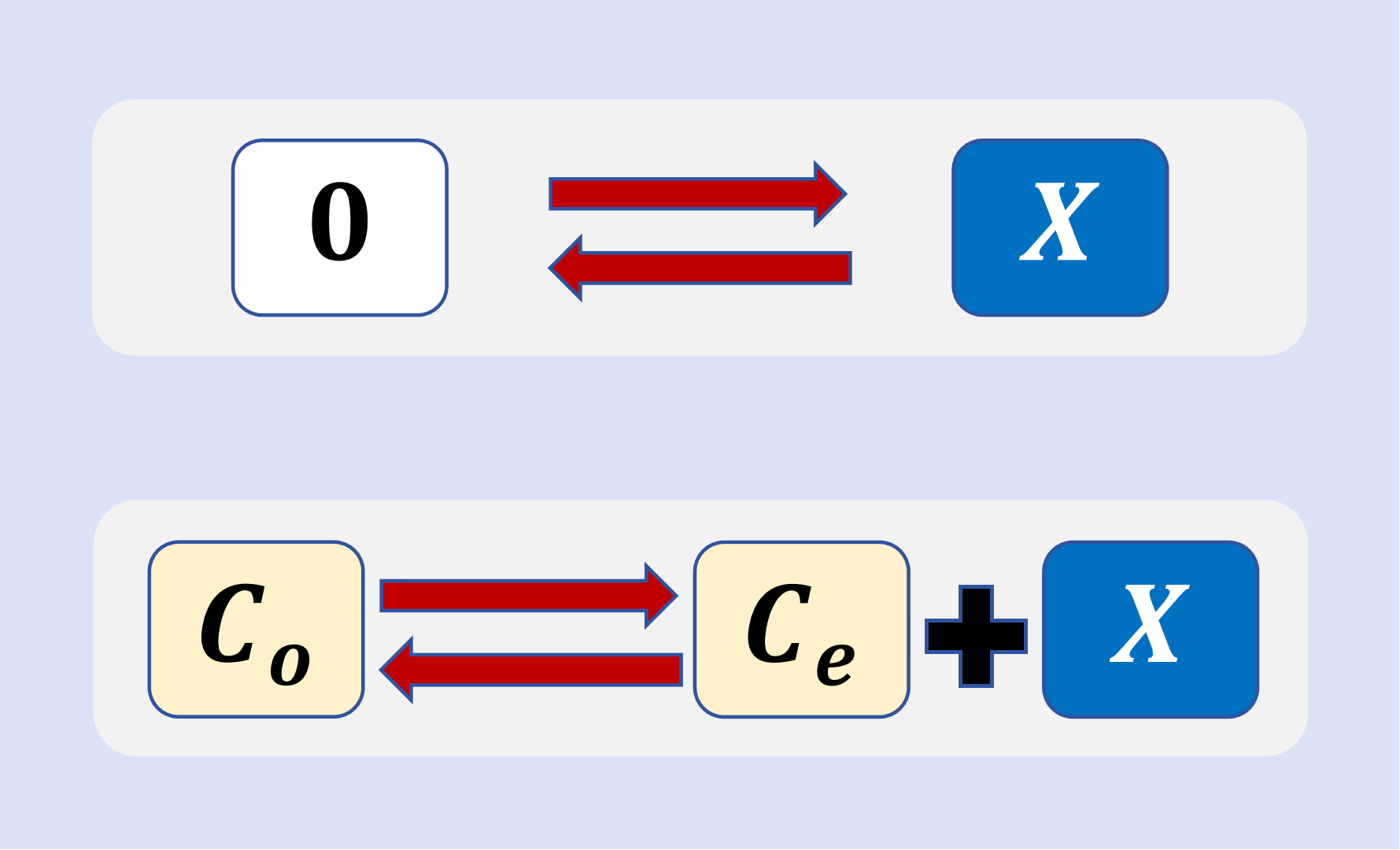}
        \caption{Inflow-outflow carbon nanotube ropes}
        \label{fig:1d}
    \end{subfigure}
    \vspace{1pt}

    \begin{subfigure}{0.95\textwidth}
        \centering
        \includegraphics[height=2cm]{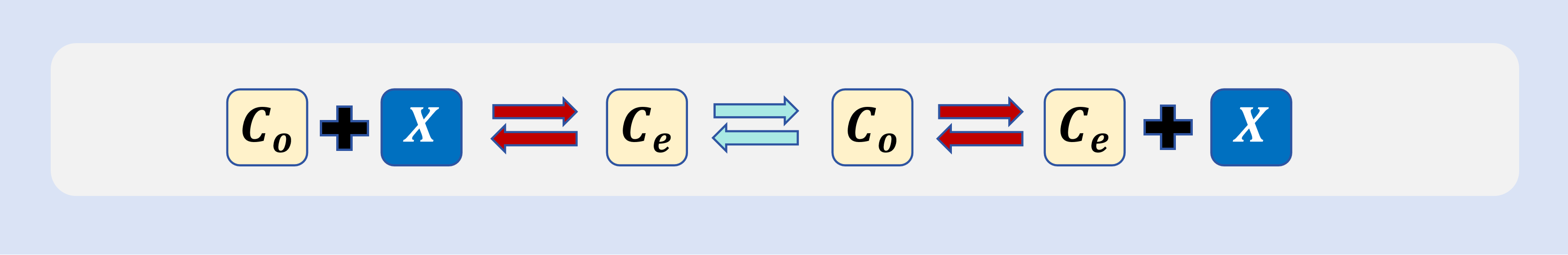}
        \caption{Connected carbon nanotube ropes}
        \label{fig:1e}
    \end{subfigure}
    \vspace{-8pt}
    \caption{Five canonical prototypes of two-dimensional ACR networks with $s^*=3$.}
    \label{fig:total_networks}  
    \vspace{-8pt}
\end{figure}

Result~(I) shows that one-dimensional ACR requires the absence of conservation laws. Results~(II)--(IV) reveal a sharp complexity threshold: as $s^*$ increases, the diversity of stoichiometric rows enriches conservation-law structure, increases the prevalence of nondegenerate networks obtained by species removal while preserving dimension, and systematically obstructs non-vacuous ACR. In the zero-one setting, ACR can persist only when this structural complexity remains sufficiently low, a condition precisely captured by the index $s^*$.

The five canonical prototypes in Fig.~\ref{fig:total_networks} connect this classification to classical models such as Michaelis--Menten kinetics \cite{MiMen1913} and carbon-nanotube ropes \cite{Cobden1998}, and complement the motif-based sufficient conditions of \cite{Joshi2023b}.

The paper is organized as follows.
Section~\ref{sec:backgrond} reviews preliminaries on reaction networks, ACR, and conservation laws. Section~\ref{sec:criterion} states Theorem~\ref{thm:main}, the mass-conserving reformulation of the necessary condition in \cite[Theorem 4.4]{Feliu2024}, which motivates and supports the classification. Section~\ref{sec:non-redundant network} presents the complete classification (Theorems~\ref{thm:one-dim ACR}--\ref{thm:2d,s*≥4 no ACR}). Section~\ref{sec:discussion} discusses extensions to higher dimensions.

\section{Preliminary}\label{sec:backgrond}
This section provides a concise overview of the standard terminology and definitions for reaction network. Comprehensive treatments can be found in \cite{Feinberg2019}. 

\subsection{Reaction network}

Formally, a \defword{reaction network} $G$ (often abbreviated to {\em network}) is defined by $s$ species $X_1$, $\cdots$, $X_s$ and $m$ reactions
\begin{equation}\label{eq:network}
\alpha_{j1}X_1 +
 \dots +
\alpha_{js}X_s
~ \xrightarrow{\kappa_j} ~
\beta_{j1}X_1 +
 \dots +
\beta_{js}X_s,\quad{\rm for}~ j=1,\;\ldots,\;m,
\end{equation}
where the \defword{stoichiometric coefficients} $\alpha_{ji}$ and $\beta_{ji}$ are non-negative integers. Without loss of generality, we assume that $(\alpha_{j1},\;\ldots,\;\alpha_{js})\neq(\beta_{j1},\;\ldots,\;\beta_{js})$. 
Each $\kappa_j\in \mathbb{R}_{>0}$ associated with the $j$-th reaction in \eqref{eq:network} is called a \defword{rate constant}. 
A network is \defword{zero-one} if all its stoichiometric coefficients $\alpha_{ji}$ and $\beta_{ji}$ take values in $\{0,1\}$. 

For a given network $G$, two fundamental matrices are defined as follows. First, the $s\times m$ matrix with $(i, j)$-entry equal to $\alpha_{ji}$ is the \defword{reactant matrix}, denoted by $\mathcal{Y}$. Second, the $s \times m$ matrix with $(i, j)$-entry equal to $\beta_{ji}-\alpha_{ji}$ is the \defword{stoichiometric matrix}, denoted by $\mathcal{N}$. 
The real linear subspace spanned by the columns of $\mathcal{N}$ is called the \defword{stoichiometric subspace}, denoted by $S$.
Notice that the dimension of $S$ equals ${\rm rank}(\mathcal N)$, and the \defword{dimension} of the network is defined as the dimension of  $S$. 
Given a network $G$, a \defword{subnetwork} of $G$ is a network obtained by removing some reactions from $G$ while preserving the same dimension as $G$. 

We say that \defword{two  matrices are equivalent} if one of the matrices can be obtained from the other by permuting the rows or the columns. Notice that the stoichiometric matrix $\mathcal{N}$ of a non-redundant zero-one network uniquely determines the reactant matrix $\mathcal{Y}$. We say that \defword{two non-redundant zero-one networks are equivalent} if their stoichiometric matrices are equivalent. 

\subsection{Steady state}
For each species $X_i$, its concentration is denoted by $x_i$. 
Under the assumption of {\em mass-action kinetics}, the time evolution of these concentrations is governed by the following ordinary differential equations (ODEs). 
\begin{equation}\label{eq:sys}
\dot{x} = f(\kappa, x) := \mathcal{N}\cdot\begin{pmatrix}
    \kappa_1 \prod_{i=1}^s x_i^{\alpha_{1i}}\\\vdots\\\kappa_m \prod_{i=1}^s x_i^{\alpha_{mi}}
\end{pmatrix},
\end{equation}
where $\kappa$ denotes the vector $(\kappa_1,\;\dots,\;\kappa_m)$, and $x$ denotes the vector $(x_1,\;\ldots,\;x_s)$.  
We denote by $Jac_f$  the Jacobian matrix of $f$ w.r.t. (with respect to) $x$. 

For a given rate-constant vector $\kappa^*\in \mathbb{R}_{>0}^m$, a \defword{steady state} of~\eqref{eq:sys} is a concentration vector $x^*\in\mathbb{R}_{>0}^s$ at which $f(\kappa, \;x)$ on the right-hand side of the ODEs \eqref{eq:sys} vanishes, i.e., $f(\kappa^*,\;x^*)=0$. We call the system \eqref{eq:sys} \defword{the steady-state system} of the network \eqref{eq:network}. 
We remark that all steady states in our context are strictly positive.  We say a steady state \( x^* \) is \defword{degenerate} if $
\text{im}\left(Jac_f(\kappa^*, x^*)|_S \right) \neq S.$
We say a network $G$ \defword{admits $n$ steady states} if there exists $\kappa^*\in \mathbb{R}_{>0}^m$ such that the network has $n$ steady states.  
If a network $G$ admits at least one steady state, then we say the network is \defword{consistent}. 
If a network $G$ admits at least two steady states, then we say the network  \defword{admits multistationarity}. 
If a network $G$ admits only degenerate steady states, then we say $G$ is \defword{degenerate}. Otherwise, if a network $G$ admits at least one nondegenerate steady state, then we say $G$ is \defword{nondegenerate}.


\subsection{ACR}\label{subsec:def of ACR}
For a given network $G$ and a fixed rate-constant vector $\kappa^*$, we say $G$ \defword{has non-vacuous ACR in species $X_i$} if $G$ has steady states and the value of $x_i^*$ is identical for every steady state $x^*$. We say a network has \defword{non-vacuous ACR} if it has non-vacuous ACR in some species. We remark that a network has \defword{vacuous ACR} if it is not consistent. 

\subsection{Conservation law}
Given a network $G$ as in \eqref{eq:network}, assume that $s > \mathrm{rank}(\mathcal{N})$. Let $d = s - \mathrm{rank}(\mathcal{N})$. A \defword{conservation-law matrix} of $G$, denoted by $W$, is any row-reduced $d \times s$ matrix whose rows constitute a basis for the orthogonal complement $S^{\perp}$, wherein $\mathrm{rank}(W) = d$. It follows that system \eqref{eq:sys} satisfies the identity $W\dot{x} = 0$. Consequently, any trajectory $x(t)$ originating from a non-negative vector $x(0)\in\mathbb{R}_{\geq0}^s$ stays, at all positive times, within the following \defword{stoichiometric compatibility class} associated with the \defword{total-constant vector} $c = Wx(0)\in\mathbb{R}^d$:
$\mathcal{P}_c = \bigl\{ x\in\mathbb{R}_{\geq0}^s \,\big|\, Wx = c \bigr\}$. 
Let $I = \{i_1,\dots,i_d\}$ denote the indices corresponding to the first non-zero entry in each row of $W$, and we assume that $i_1 < \cdots < i_d$. We then define $h_j$ as
\begin{equation}\label{eq:h_eta}
    h_j =
    \begin{cases}
        f_j & \text{if } j\notin I,\\
        (Wx - c)_k & \text{if } j = i_k \in I,
    \end{cases}
\end{equation}
where $f_1,\;\ldots,\;f_s$ are the polynomial functions specified in \eqref{eq:sys}. Thus, we define the system $h(\kappa,\;c,\;x)$ (abbreviated as $h$) as $h\equals(h_1,\;\ldots,\;h_s)$, and we call the system $h$ \defword{the steady-state system augmented by conservation laws}. Notice that a steady state $x^*$ is nondegenerate if and only if the Jacobian matrix $Jac_h(\kappa^*, x^*)$  has full rank \cite{Banaji2018}.

\subsection{Transformed Jacobian matrix}
For a network $G$ \eqref{eq:network}, let $\mathcal{N}$ and $\mathcal{Y}$ denote its stoichiometric matrix and reactant matrix, respectively. 
For the matrix $\mathcal{N}$, the corresponding \defword{flux cone} is defined by $\mathcal{F}(\mathcal{N}) := \left\{\mu\in \mathbb{R}_{\geq 0}^m \mid \mathcal{N} \mu = \mathbf{0} \right\}$. 
Let $\ell^{(1)},\ldots,\ell^{(n)}\in\mathbb{R}_{\geq0}^m$ be a set of generators of $\mathcal{F}(\mathcal{N})$. Then, any $\mu\in \mathcal{F}(\mathcal{N})$ can be written as $\mu=\sum_{i=1}^{n} \lambda_i \ell^{(i)}$, with $\lambda_i\geq0$ for all $i\in\{1,\;\ldots,\;n\}$. 
We define the \defword{partial transformed Jacobian matrix} in terms of $\lambda$ as \begin{equation}\label{eq:A}
    A(\lambda)\equals\mathcal{N}\;diag\left(\sum_{i=1}^n\lambda_i\ell^{(i)}\right)\;\mathcal{Y}^{\top}.
\end{equation}
Notice that the Jacobian matrix $Jac_f(\kappa,\;x)$ associated with the steady-state system $f$ \eqref{eq:sys} can be written as the \defword{transformed Jacobian matrix}
$J(p,\;\lambda) := A(\lambda)~diag(p)$, where $p=\left( p_1,\;\ldots,\;p_s \right)\equals\left(\begin{smallmatrix}
    \frac{1}{x_1},\;\ldots,\;\frac{1}{x_s} 
\end{smallmatrix}\right)$. 

Given an $k\times k$ matrix $M$, suppose $I,\;J\subseteq\{1,\;\ldots,\;k\}$ and $|I|=|J|$. The submatrix of $M$ comprising the rows indexed by $I$ and the columns indexed by $J$ is denoted by $M[I,\;J]$.

\begin{lemma}\cite{Feliu2024}\label{lm:degenerate equivals to A is zero poly}
    Suppose $G$ is an $r$-dimensional network with $s$ species. Assume that $A(\lambda)$ is the partial transformed Jacobian matrix of $G$. Then, $G$ is degenerate if and only if, for all $I\subseteq \{1,\;\ldots,\;s\}$ with $|I|=r$, $det(A(\lambda)[I,\;I])$ is the zero polynomial. 
\end{lemma}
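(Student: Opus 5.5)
This lemma is cited from \cite{Feliu2024}; the plan is to reduce degeneracy to the vanishing of the $r$-th coefficient of a characteristic polynomial. Then we use the flux-cone parametrization to pass from $(\kappa,x)$ to $(\lambda,p)$, and finally an algebraic-independence argument in $p$.

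First, fix a positive steady state $x^*$ for $\kappa^*$. Since $\operatorname{im} Jac_f \subseteq S$ and $\dim S=r$, the standard fact (as in \cite{Banaji2018}) is that $x^*$ is nondegenerate iff the sum of the principal $r\times r$ minors of $Jac_f(\kappa^*,x^*)$ is nonzero. This sum is, up to sign, the coefficient of $t^{s-r}$ in $\det(tI-Jac_f)$. To justify it, write $Jac_f = \mathcal N\, \mathrm{diag}(v)\,\mathcal Y^\top \mathrm{diag}(1/x)$ and choose a basis adapted to $S\oplus S^{\perp}$. The matrix is then block upper triangular with a zero block on $S^\perp$-coordinates, so the coefficient equals $\det$ of the restriction $Jac_f|_S\colon S\to S$. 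That determinant is nonzero exactly when $\operatorname{im}(Jac_f|_S)=S$.

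Second, reparametrize. The map $(\kappa,x)$ with $f(\kappa,x)=0$ corresponds bijectively to pairs $(x,\mu)$ with $\mu=(\kappa_j x^{\alpha_j})_j \in \mathcal F(\mathcal N)\cap\mathbb R^m_{>0}$. Every positive flux vector is $\sum\lambda_i\ell^{(i)}$ with $\lambda_i\ge0$, and conversely every such $\mu$ with positive entries, paired with any $x>0$, gives $\kappa_j=\mu_j/x^{\alpha_j}>0$. Hence $Jac_f=A(\lambda)\,\mathrm{diag}(p)$. By Cauchy--Binet on principal minors, the sum of principal $r$-minors equals
\[
\sum_{|I|=r}\det(A(\lambda)[I,I])\prod_{i\in I}p_i .
\]
So $G$ is nondegenerate iff this polynomial is nonzero at some $p>0$ and some $\lambda\ge0$ with $\sum\lambda_i\ell^{(i)}>0$.

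Third, conclude. The monomials $\prod_{i\in I}p_i$ are distinct for distinct $I$, so the polynomial in $(\lambda,p)$ is identically zero iff every $\det(A(\lambda)[I,I])$ is the zero polynomial in $\lambda$. If some $\det(A(\lambda)[I,I])$ is nonzero as a polynomial, the full polynomial is nonzero. It therefore cannot vanish on the set of admissible $(\lambda,p)$, since that set has nonempty interior in $\mathbb R^{n}\times\mathbb R^s$ (take all $\lambda_i>0$). Hence some nondegenerate steady state exists. Conversely, if every minor polynomial vanishes, all steady states are degenerate. If no positive flux exists, $G$ is inconsistent, and the equivalence should be read with that convention, or consistency assumed.

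The main obstacle is the first step: making the equivalence between $\operatorname{im}(Jac_f|_S)=S$ and the nonvanishing of the coefficient rigorous for a non-symmetric matrix whose image lies in $S$. I would handle it via the block-triangular change of basis described above. A secondary subtlety is ensuring that the admissible $\lambda$ region is full-dimensional. This holds because all $\lambda_i>0$ gives positive flux whenever the union of the generators' supports covers all reactions, which is exactly consistency.
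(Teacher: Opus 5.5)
The paper gives no proof of this lemma; it is imported from \cite{Feliu2024}. Your route is the standard one and is sound as far as it goes: identify nondegeneracy with $E_r(Jac_f)=\det(Jac_f|_S)\neq 0$, reparametrize steady states by $(\lambda,p)$, then separate the squarefree monomials $\prod_{i\in I}p_i$. Step~1 correctly uses the sum of principal $r\times r$ minors, which detects $\operatorname{im}(Jac_f|_S)=S$, rather than the weaker $\operatorname{rank}Jac_f=r$. The identity $E_r(A\,\mathrm{diag}(p))=\sum_{|I|=r}\det(A[I,I])\prod_{i\in I}p_i$ is just multiplicativity of the determinant applied to $(A\,\mathrm{diag}(p))[I,I]=A[I,I]\,\mathrm{diag}(p)[I,I]$; it is not Cauchy--Binet.

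The gap is the inconsistent case, which you flag but do not settle. Under the paper's definitions, $G$ is nondegenerate exactly when it admits a nondegenerate steady state, so an inconsistent network is degenerate. The forward implication therefore requires every $\det(A(\lambda)[I,I])$ to vanish identically when no strictly positive flux exists. Your Step~3 cannot give this, because the admissible $(\lambda,p)$-set is then empty. Assuming consistency only proves a weaker statement.

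The missing fact is as follows. Let $R'$ be the union of the supports of the generators $\ell^{(i)}$, and let $\mu'=\sum_i\ell^{(i)}$; then $G$ is consistent iff $R'=\{1,\ldots,m\}$. Suppose $j\notin R'$ and the $j$-th column $\nu_j$ of $\mathcal N$ satisfied $\nu_j=\sum_{k\in R'}a_k\nu_k$. Then for $t\gg 0$, the vector $e_j-\sum_{k\in R'}a_ke_k+t\mu'$ would lie in $\mathcal F(\mathcal N)$ with $j$ in its support, a contradiction. Hence the columns indexed by $R'$ span a proper subspace of $S$. Since $\sum_i\lambda_i\ell^{(i)}$ is supported on $R'$, we get $A(\lambda)=\mathcal N_{R'}\,\mathrm{diag}(\mu_{R'})\,\mathcal Y_{R'}^{\top}$, where the subscript means restriction to the columns in $R'$. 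This matrix has rank $<r$ for every $\lambda$, so all its $r\times r$ minors are identically zero.

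This fact also shows that the statement as written is true only under the vacuous convention (inconsistent means degenerate). If degeneracy presupposed consistency, the reverse implication would fail for inconsistent networks. With this added, your proof is complete. The direction actually used in Theorem~\ref{thm:main} (nondegenerate $\Rightarrow$ some principal minor is nonzero) is already fully covered by your argument, since a nondegenerate network is consistent.
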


\section{Rank-based criterion}\label{sec:criterion}

In this section, we provide  an obstruction principle for an $r$-dimensional $s$-species ($r<s$) network to admit no non-vacuous ACR in species $X_i$, see Theorem \ref{thm:main}, which is derived  from Lemma \ref{lm:degenerate equivals to A is zero poly} and  \cite[Theorem 4.4]{Feliu2024}. As an immediate corollary of Theorem \ref{thm:main}, Corollary \ref{coro: all s-1 nondegenerate then s no ACR} provides a criterion for detecting whether a network admits no non-vacuous ACR. Its usage is illustrated in Example \ref{exp: ACR‑2d 4s}. 
Although the converse of Theorem \ref{thm:main} does not hold in general (Example \ref{exp:reverse main}),  Remark \ref{rem:s-1 degenerate, x_i has ACR} indicates that the converse becomes valid when a network admits no multistationarity.
The  practical effectiveness and computational complexity  of the method are explained in Remark \ref{rem:thm_efficiency}.

\begin{theorem}[Obstruction Principle]\label{thm:main}
 Suppose $G$ is an $r$-dimensional network with $s$ species ($r<s$). Let $G_{/i}$ denote an $r$-dimensional network obtained by removing a single species $X_i$ from $G$. If $G_{/i}$ is nondegenerate, then for any generic rate-constant vector $\kappa \in {\mathbb R}^m_{>0}$, the network $G$ has no non-vacuous ACR in species $X_i$. 
\end{theorem}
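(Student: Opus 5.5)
We plan to prove the contrapositive-style statement directly from the nondegeneracy of $G_{/i}$, using the algebraic characterization of degeneracy in Lemma~\ref{lm:degenerate equivals to A is zero poly} together with an implicit-function argument on the augmented steady-state system $h$. The first observation is structural. Removing $X_i$ deletes the $i$-th row of $\mathcal N$ and of $\mathcal Y$. The reaction columns and the flux cone are unchanged, because $G_{/i}$ is still $r$-dimensional and its kernel contains $\ker\mathcal N$. Note that the kernel could in principle grow; we will argue it does not. Since $\operatorname{rank}$ is preserved, the $i$-th row of $\mathcal N$ is a linear combination of the other rows. Hence $\ker \mathcal N_{/i}=\ker\mathcal N$, the flux cones coincide, and the partial transformed Jacobian $A_{/i}(\lambda)$ of $G_{/i}$ is exactly the submatrix of $A(\lambda)$ obtained by deleting row $i$ and column $i$. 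By Lemma~\ref{lm:degenerate equivals to A is zero poly}, nondegeneracy of $G_{/i}$ then gives an index set $I\subseteq\{1,\dots,s\}\setminus\{i\}$ with $|I|=r$ such that $\det(A(\lambda)[I,I])$ is a nonzero polynomial.

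Next we pass to steady states of $G$. Positive steady states of $G$ are parametrized by $x$ with $p=1/x$ and $\lambda\ge 0$, via $\kappa_j = \mu_j/x^{\mathcal Y_{\cdot j}}$ with $\mu=\sum\lambda_k\ell^{(k)}$. The principal minor of $J(p,\lambda)=A(\lambda)\,\mathrm{diag}(p)$ on $I$ equals $\det(A(\lambda)[I,I])\prod_{k\in I}p_k$. This is a nonzero polynomial, so it is nonzero on a generic (open dense) set of $(x,\lambda)$, and therefore for generic $\kappa$ there are steady states where this minor is nonzero. At such a point the Jacobian restricted to $S$ is onto $S$, and moreover the coordinates indexed by $I$ can serve as local coordinates transverse to the compatibility class, in the sense that $\mathcal N$ has rank $r$ on rows $I$.

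The key step is to show that $x_i$ is not constant along the steady-state set. Fix such a nondegenerate steady state $x^*$ for $\kappa$. The positive steady-state set near $x^*$ is a smooth manifold of dimension $s-r$, namely $\{f=0\}$ locally, transverse to the classes and meeting each nearby class once. Its tangent space is $\ker Jac_f(\kappa,x^*)$. We must exhibit a tangent vector $v$ with $v_i\ne0$. Since $A[I,I]\,\mathrm{diag}(p_I)$ is invertible and $i\notin I$, we can solve $Jac_f\,v=0$ with $v_i=1$, free coordinates outside $I$ set arbitrarily, and $v_I$ determined. Here we use that the rows of $Jac_f$ outside $I$ are linear combinations of the rows in $I$, because the row space of $Jac_f$ lies in the row space of $\mathcal N$ and rows $I$ of $\mathcal N$ span it. Therefore $x_i$ varies along the steady-state manifold, which means nearby total constants $c$ give steady states with different $x_i$, so ACR in $X_i$ fails for that $\kappa$.

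The main obstacle is the genericity bookkeeping: we need the conclusion for \emph{generic} $\kappa$, not just for some $\kappa$. Our plan is to note that the map $(x,\lambda)\mapsto\kappa$ is a dominant rational parametrization of the set of $\kappa$ admitting steady states. The bad locus, where the minor vanishes at every steady state of $\kappa$, is the image of a proper algebraic subset. It therefore lies in a proper semialgebraic subset of the consistent $\kappa$'s, which yields the statement. Alternatively, we may simply invoke \cite[Theorem 4.4]{Feliu2024} at this point, after verifying via the first step that its rank hypothesis is exactly the nonvanishing of a principal $r$-minor of $A(\lambda)$ avoiding index $i$.
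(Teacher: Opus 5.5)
Your fallback of invoking \cite[Theorem 4.4]{Feliu2024} is exactly the paper's main-text proof. Rank preservation gives equal flux cones, so $A_{/i}(\lambda)$ is the principal submatrix of $A(\lambda)$ obtained by deleting row and column $i$. Lemma~\ref{lm:degenerate equivals to A is zero poly} then yields a nonzero principal $r$-minor avoiding $i$, so $A(\lambda)$ with column $i$ deleted still has rank $r$. (The minor implies this rank condition; it is not equivalent to it.)

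Your self-contained route differs from the paper's Appendix~\ref{app:proof}. There the authors form the sums $B$, $B_{/s}$ of principal $r$-minors of $J$. They pick a steady state that is nondegenerate ($\det Jac_h=B\neq 0$) with $B_{/s}\neq0$, and apply Cramer's rule to the augmented system $h$ to get $\partial x_s/\partial c_s\neq 0$, i.e.\ $x_s$ changes across nearby compatibility classes. You instead fix $\kappa$ and apply the implicit function theorem to $f_I=0$. The rows of $\mathcal N$ indexed by $I$ have rank $r$ (forced by $\det A(\lambda)[I,I]\not\equiv 0$), so $\{f=0\}=\{f_I=0\}$. Invertibility of $Jac_f[I,I]$ with $i\notin I$ then makes the coordinates $x_k$, $k\notin I$ (among them $x_i$), local coordinates on the positive steady-state set. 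Hence the same $\kappa$ already has positive steady states with different $x_i$. This is shorter and needs neither $h$ nor nondegeneracy. It also works verbatim if some $r\times r$ minor of $Jac_f$ with columns avoiding $i$ is nonzero, so it proves directly the implication the paper imports from \cite[Theorem 4.4]{Feliu2024}.

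Two points in your write-up need correcting, though neither is load-bearing once repaired.

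First, one nonzero principal minor gives $\operatorname{rank}Jac_f=r$, i.e.\ $\operatorname{im}Jac_f=S$, but not $\operatorname{im}(Jac_f|_S)=S$. Nondegeneracy means $\sum_{|I'|=r}\det Jac_f[I',I']\neq 0$, and a single nonzero term can be cancelled (e.g.\ $\left(\begin{smallmatrix}1&-1\\1&-1\end{smallmatrix}\right)$ with $S=\mathrm{span}\{(1,1)\}$). So your claims that the point is ``nondegenerate'', that the steady-state set is ``transverse to the classes and meeting each nearby class once'', and the appeal to nearby $c$ are all unjustified. Drop them; the free coordinate $x_i$ already refutes ACR.

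Second, the genericity step. ``The bad locus is the image of a proper algebraic subset, hence lies in a proper semialgebraic subset'' does not follow. The map $\Phi\colon(x,\lambda)\mapsto\kappa$ has positive-dimensional fibres (generically at least $(s-r)$-dimensional, as your own implicit-function step shows). So the image of $Z=\{\det Jac_f[I,I]=0\}$ can be full-dimensional whenever $Z$ meets most fibres. This really occurs: for $2X_1+X_2\to 3X_1$, $X_1\to X_2$, the appendix's $B=\lambda(p_1-p_2)$ vanishes at a steady state of every $\kappa$. Indeed the appendix's assertion $S_1\subsetneq\mathbb C^m$, which rests on the same inference, fails for this network.

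The correct argument is as follows. Let $D$ be the open set of $(x,\lambda)$ with $x>0$, $\lambda>0$. These reach every positive steady state, since a strictly positive flux lies in the relative interior of $\mathcal F(\mathcal N)$, i.e.\ is a strictly positive combination of the generators. The bad locus is $\Phi(D)\setminus\Phi(D\setminus Z)$. Since $D\setminus Z$ is dense in $D$, $\Phi(D\setminus Z)$ is dense in $\Phi(D)$. Hence the bad locus is a semialgebraic set with empty interior, so it has dimension $<m$ and lies in a proper algebraic subset, which is what ``generic'' requires. If $G$ is inconsistent there is nothing to prove.
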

\begin{proof}
Since $G_{/i}$ is obtained by removing the species $X_i$ from $G$ and $G_{/i}$ is nondegenerate, by 
Lemma \ref{lm:degenerate equivals to A is zero poly},
there exists a non-zero $r\times r$ principle minor of $A(\lambda)$, where 
$A(\lambda)$ is the partial transformed Jacobian matrix of $G$ defined as in \eqref{eq:A}. Let $\tilde A(\lambda)$ denote the matrix obtained by removing the $i$-th column of $A(\lambda)$. Then, 
the rank of $\tilde A(\lambda)$ is still $r$ for some choice of $\lambda$. So,
    by \cite[Theorem 4.4]{Feliu2024}, for any generic rate-constant vector $\kappa \in {\mathbb R}^m_{>0}$, the network $G$ has no non-vacuous ACR in species $X_i$. 
\end{proof}

\begin{remark}
In Theorem \ref{thm:main}, by a generic rate-constant vector $\kappa\in {\mathbb R}_{>0}^m$, we mean there exists an affine variety ${\mathcal V}\subset {\mathbb C}^m$, such that $\kappa \in {\mathbb R}_{>0}^m\backslash {\mathcal V}$, where an \defword{affine variety} is 
the set of common zeros in $\mathbb{C}^m$ of a family of polynomials.   
We provide a self-contained proof of Theorem \ref{thm:main} that does not depend on \cite[Theorem 4.4]{Feliu2024}, see Appendix \ref{app:proof}, where we  explicitly construct an affine variety ${\mathcal V}$ such that for any $\kappa \in {\mathbb R}_{>0}^m\backslash {\mathcal V}$, a given network $G$ has no non-vacuous ACR in $X_i$ when $G_{/i}$ is  nondegenerate.    
\end{remark}

\begin{corollary}\label{coro: all s-1 nondegenerate then s no ACR}
     Suppose $G$ is an $r$-dimensional network with $s$ species ($r<s$). For any network $G_{/i}$ obtained by removing a single species $X_i$ from $G$, if $G_{/i}$ is $r$-dimensional and nondegenerate, then for any generic rate-constant vector $\kappa\in\mathbb{R}^m_{>0}$, the network $G$ has no non-vacuous ACR. 
\end{corollary}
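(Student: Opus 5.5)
The plan is to apply Theorem~\ref{thm:main} once for each species and then combine the exceptional sets of rate constants. Fix $i\in\{1,\ldots,s\}$. By hypothesis $G_{/i}$ is $r$-dimensional and nondegenerate. Theorem~\ref{thm:main} then gives a proper affine variety $\mathcal{V}_i\subset\mathbb{C}^m$ with the following property: for every $\kappa\in\mathbb{R}^m_{>0}\setminus\mathcal{V}_i$, the network $G$ has no non-vacuous ACR in $X_i$. In the notation of the remark after Theorem~\ref{thm:main}, $\mathcal{V}_i$ is the variety built explicitly in Appendix~\ref{app:proof}. Its only role here is to be the zero set of a nonzero polynomial family, so that it is a proper subvariety.

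Next set $\mathcal{V}:=\mathcal{V}_1\cup\cdots\cup\mathcal{V}_s$. A finite union of affine varieties is again an affine variety: it is the zero set of all pairwise products of the defining polynomials. Each $\mathcal{V}_i$ is proper, so there is a nonzero polynomial $g_i$ vanishing on it. The product $g_1\cdots g_s$ is then nonzero and vanishes on $\mathcal{V}$, so $\mathcal{V}$ is still proper and $\mathbb{R}^m_{>0}\setminus\mathcal{V}$ is generic in the sense of the remark. Now take any $\kappa$ outside $\mathcal{V}$. For every $i$, $\kappa\notin\mathcal{V}_i$, so $G$ has no non-vacuous ACR in $X_i$. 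By the definition in Subsection~\ref{subsec:def of ACR}, non-vacuous ACR means non-vacuous ACR in some species, so $G$ has no non-vacuous ACR at $\kappa$.

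The only point needing care is the meaning of ``generic'' in Theorem~\ref{thm:main}. The statement quantifies over generic $\kappa$ separately for each species, and the exceptional set may depend on $i$. The argument therefore relies on each exceptional set being contained in a proper variety, so that finitely many of them can be united. This is exactly what the remark and the appendix construction supply. If one instead read ``generic'' as merely ``outside a measure-zero set'', the same finite-union argument would still work, since a finite union of null sets is null.
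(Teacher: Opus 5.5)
Your argument is correct and follows the paper's route: the paper treats this as an immediate corollary of Theorem~\ref{thm:main} applied to each species $X_i$ in turn, and you make explicit the step it leaves implicit, namely that the finitely many exceptional sets can be combined, which your nonzero product $g_1\cdots g_s$ settles. One minor wording slip: for $s\ge 3$ the union $\mathcal{V}_1\cup\cdots\cup\mathcal{V}_s$ is cut out by products taking one defining polynomial from each $\mathcal{V}_i$, not by pairwise products, though your properness argument does not depend on this.
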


\begin{example}\cite[Example 14]{Joshi2021}(Nanotube rope network)
\label{exp: ACR‑2d 4s}
\begin{equation*}
C_o + X\xrightleftharpoons[\kappa_2]{\kappa_1} C_e\;\;,\quad
C_o\xrightleftharpoons[\kappa_4]{\kappa_3} C_e + X.
\end{equation*}
Let $X_1$, $X_2$, and $X_3$ denote the species $C_o$, $C_e$, and $X$, respectively. 

Notice that
$\mathcal{N}=\left(\begin{smallmatrix}
-1&\;\;1&-1&\;\;1\\
\;\;1&-1&\;\;1&-1\\
-1&\;\;1&\;\;1&-1
\end{smallmatrix}\right)$, and
$\mathcal{Y}=\left(\begin{smallmatrix}
\;\;1&\;\;0&\;\;1&\;\;0\\
\;\;0&\;\;1&\;\;0&\;\;1\\
\;\;1&\;\;0&\;\;0&\;\;1
\end{smallmatrix}\right)$.
Also note that $r=\operatorname{rank}(\mathcal{N})=2$, and $\mathcal{N}_2$, $\mathcal{N}_3$ are linearly independent. The vectors $\ell^{(1)}=(0,0,1,1)$, and $\ell^{(2)}=(1,1,0,0)$ are the extreme rays of the flux cone $\mathcal{F}(\mathcal{N})$. Hence, by \eqref{eq:A}, we have
\begin{equation*}
A(\lambda)=\left(\begin{smallmatrix}
-\lambda_2-\lambda_1&\;\;\lambda_2+\lambda_1&-\lambda_2+\lambda_1\\
\;\;\lambda_2+\lambda_1&-\lambda_2-\lambda_1&\;\;\lambda_2-\lambda_1\\
-\lambda_2+\lambda_1&\;\;\lambda_2-\lambda_1&-\lambda_2-\lambda_1
\end{smallmatrix}\right).
\end{equation*}
For $i=1$, let $\mathcal{N}_{/1}$ denote the stoichiometric matrix corresponding to the network obtained by removing species $X_1$ from the given network. Notice that $\operatorname{rank}\left(\mathcal{N}_{/1}\right)=r$, and $\det(A(\lambda)[\{2,3\},\;\{2,3\}])=4\lambda_1\lambda_2$. 
For $i=2$, in the same way, we remove species $X_2$ from the given network. Notice that $\operatorname{rank}\left(\mathcal{N}_{/2}\right)=r$, and $\det(A(\lambda)[\{1,3\},\;\{1,3\}])=4\lambda_1\lambda_2$. 
By Lemma \ref{lm:degenerate equivals to A is zero poly}, the $2$-species  networks  $G_{/1}$ and $G_{/2}$ are nondegenerate. Hence, by Theorem  \ref{coro: all s-1 nondegenerate then s no ACR}, the given network has no non‑vacuous ACR in species $X_1$ or $X_2$ for any generic choice of rate constants.  
For $i=3$, in the same way, we remove species $X_3$ from the given network. Notice that $\operatorname{rank}(\mathcal{N}_{/3})=r-1$ and 
Theorem \ref{thm:main} is not directly  applicable.  However, by a similar idea to the proof of Theorem \ref{thm:main} (Appendix \ref{app:proof}), we  find  that $\det(A(\lambda)[\{2,3\},\;\{1,2\}])$ is identically zero, and we conclude that the network possibly admits non-vacuous ACR in $X_3$.  
\end{example}

\begin{example}\label{exp:reverse main}
    This example demonstrates that Theorem \ref{thm:main} provides a necessary but not  sufficient condition for the existence of non-vacuous ACR. Consider the following network $G$: $3X_1+X_2\xrightarrow{\kappa_1}4X_1,\;\;2X_1+X_2\xrightarrow{\kappa_2}3X_2,\;\;X_1+X_2\xrightarrow{\kappa_3}2X_1$. The network $G_{/1}$ obtained by removing species $X_1$ from $G$ is $X_2\xrightarrow{\kappa_1}0,\;X_2\xrightarrow{\kappa_2}3X_2,\;X_2\xrightarrow{\kappa_3}0$. 
    The steady-state system of $G_{/1}$ is $\tilde{f}=x_2(-\kappa_1+2\kappa_2-\kappa_3)$. Notice that if $-\kappa_1+2\kappa_2-\kappa_3\neq0$, then $G_{/1}$ has no positive steady states. If $-\kappa_1+2\kappa_2-\kappa_3\equiv0$, then $G_{/1}$ has infinitely many degenerate positive steady states for specific rate constants, and $G_{/1}$ is degenerate. Notice that $G_{/1}$ has the same dimension as $G$. 
    
    The steady-state system augmented by the conservation law of $G$ is 
    \begin{equation*}\begin{cases}
        h_1&= x_1 x_2(\kappa_1x_1^2-2\kappa_2x_1+\kappa_3),\\
    h_2&= x_1+x_2-c. 
    \end{cases}\end{equation*}
    Let $h_1=0$ and $h_2=0$. For any generic choice of rate constants, the network $G$ either has two positive steady states with distinct coordinate values for $x_1$, or has no positive steady states. Hence, $G$ admits no non-vacuous ACR in species $X_1$ generically. 
\end{example}

\begin{remark}\label{rem:s-1 degenerate, x_i has ACR}
    This remark presents the contrapositive statement of Theorem \ref{thm:main} under additional conditions, which follows from the proof of Theorem \ref{thm:main} \cite[Theorem 4.4]{Feliu2024}.  Let $G_{/i}$ be a network obtained by removing a single species $X_i$ from $G$, where $G_{/i}$ remains the same dimension as $G$. If $G_{/i}$ is degenerate and $G$ admits exactly one positive steady state, then $G$ has non-vacuous ACR in the removed species $X_i$ for any generic choice of rate constants.  Since the coexistence of multistationarity and ACR is known to be rare with high probability \cite{JoKaAn2023}, Theorem \ref{thm:main} is highly effective in applications, see Appendix \ref{subsec:alg}--Table~\ref{tab:algorithm summary}. 
\end{remark}

\begin{example}
We provide an example for Remark \ref{rem:s-1 degenerate, x_i has ACR}.
    Consider the following network $G$: $X_3\xrightarrow{\kappa_1}X_1,\;\; 0\xrightarrow{\kappa_2}X_2+X_3,\;\;X_1+X_2\xrightarrow{\kappa_3}0$. 
    The network $G_{/3}$ obtained by removing species $X_3$ from $G$ is $0\xrightarrow{\kappa_1}X_1,\;\;0\xrightarrow{\kappa_2}X_2,\;\;X_1+X_2\xrightarrow{\kappa_3}0$. The steady-state system $\tilde{f}$ of $G_{/3}$ is composed of $\tilde{f}_1= \kappa_1 - \kappa_3x_1x_2$, and $\tilde{f}_2= \kappa_2 - \kappa_3x_1x_2$. Notice that $G_{/3}$ has the same dimension as $G$.   Since $G_{/3}$ has no conservation laws, $\tilde{h}=\tilde{f}$. Then, we have that $det(Jac_{\tilde{h}})=\left|\begin{smallmatrix}
        -\kappa_3x_2 & -\kappa_3x_1 \\
            -\kappa_3x_2 & -\kappa_3x_1 
    \end{smallmatrix}\right|$ is the zero polynomial at all positive steady states. Hence, $G_{/3}$ is degenerate. 
    
   One may notice by \cite[Theorem 3]{Jiao2025} that $G$ admits no  multistationarity since it is a nondegenerate $2$D $3$-species zero-one network. 
 On the other hand, for the network $G$, note that $f_3=-\kappa_1x_3+\kappa_2$, where $f$ is the steady-state system of $G$. By $f_3=0$, for any generic choice of rate constants, we can see that $G$ has non-vacuous ACR in species $X_3$ since $x_3=\frac{\kappa_2}{\kappa_1}$. 
\end{example}

\begin{remark}\label{rem:thm_efficiency}
For Example~\ref{exp: ACR‑2d 4s},  we can indeed verify the non‑vacuous ACR in species $X_3$ as follows. The steady‑state system is
\begin{equation*}
\begin{cases}
f_1&=-\kappa_1x_1x_3+\kappa_2x_2-\kappa_3x_1+\kappa_4x_2x_3,\\
f_2&=\;\;\;\kappa_1x_1x_3-\kappa_2x_2+\kappa_3x_1-\kappa_4x_2x_3,\\
f_3&=-\kappa_1x_1x_3+\kappa_2x_2+\kappa_3x_1-\kappa_4x_2x_3.
\end{cases}
\end{equation*}
By $f_2+f_3=0$, we have $x_1 = \frac{\kappa_4x_2x_3}{\kappa_3}$. By $f_2-f_3=0$, we have $x_1 = \frac{\kappa_2x_2}{\kappa_1x_3}$. Hence, we obtain $\frac{\kappa_4x_2x_3}{\kappa_3} = \frac{\kappa_2x_2}{\kappa_1x_3}$, i.e., $x_3=\sqrt{\frac{\kappa_2\kappa_3}{\kappa_1\kappa_4}}$, which verifies the existence of non-vacuous ACR. For small networks, one can similarly confirm the existence of non-vacuous ACR by solving the steady‑state system algebraically.
However, for larger networks like those shown in Fig.~\ref{fig:networks_background},  it might be expensive or technical to solve the systems~\cite{Shinar2010, Joshi2023a}. By Remark~\ref{rem:s-1 degenerate, x_i has ACR}, one way is to check if these systems admit no multistationarity by looking at the deficiency~\cite{Puente2025}. Systematically, we develop Algorithm~\ref{alg:main}, which takes only seconds to provide reliable ACR information for the list of biochemical networks (Appendix \ref{subsec:alg}). In practical implementations, the method illustrated in Algorithm~\ref{alg:main} shows that detecting ACR requires computing at most $s\cdot\binom{s}{r}$ determinants of $r\times r$ matrices, polynomial in network size, in contrast to the exponential complexity of Gr\"{o}bner basis elimination.
\end{remark}

\section{Non-redundant zero-one networks}\label{sec:non-redundant network}
 We discuss $1$D and $2$D non-redundant zero-one networks in Section \ref{subsec: one-dimensional} and  Section \ref{subsec: two-dimensional}, respectively.  

\begin{definition}\label{def:non-redundant species & network}
    A zero-one network is called \defword{non-redundant} if there is no species appearing on both sides of any reaction.  
\end{definition}

\subsection{$\mathbf{1}$D networks}\label{subsec: one-dimensional}

\begin{theorem}\label{thm:one-dim ACR}
    Suppose $G$ is a $1$D non-redundant zero-one network. Then, $G$ has non-vacuous ACR for generic rate constants if and only if $G$ is equivalent to \begin{equation}\label{al: net of one-dim and one species}
        X_1\xrightarrow{\kappa_1}0,\quad0\xrightarrow{\kappa_2}X_1.
    \end{equation}
\end{theorem}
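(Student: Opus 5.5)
The plan is to use the rigidity of the zero-one, one-dimensional setting to reduce $G$ to a single reversible pair of reactions, and then to read off the steady-state locus directly rather than invoking Theorem~\ref{thm:main}.

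\textbf{Step 1 (structure of $\mathcal N$).} Since $\operatorname{rank}(\mathcal N)=1$, every column of $\mathcal N$ is a real multiple of one nonzero vector $v$. Because $G$ is zero-one and non-redundant, each column has all entries in $\{-1,0,1\}$ and is nonzero. Two such vectors can be proportional only with factor $\pm1$. So, after normalizing $v$, every column equals $v$ or $-v$. Put
\[
P=\{i : v_i=1\},\qquad Q=\{i: v_i=-1\},
\]
which are disjoint and not both empty. Since the reactant matrix is determined by $\mathcal N$ (reactant entries are exactly the $-1$ entries), a column $v$ is the reaction $\sum_{i\in Q}X_i\to\sum_{i\in P}X_i$, and a column $-v$ is its reverse. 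Hence, up to equivalence, $G$ is a subnetwork of this reversible pair. We may also assume each species occurs in some reaction, so $P\cup Q$ is the full set of $s$ species.

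\textbf{Step 2 (consistency).} If only one direction is present, then $f=\pm\kappa\,x^{Q}v$ (or the analogous expression with $x^{P}$). This never vanishes on $\mathbb R^s_{>0}$, so $G$ has only vacuous ACR. Non-vacuous ACR therefore forces both reactions to be present, with rate constants $\kappa_1$ (forward) and $\kappa_2$ (reverse). The positive steady states are then exactly
\[
\kappa_1\prod_{i\in Q}x_i=\kappa_2\prod_{i\in P}x_i,\qquad x\in\mathbb R^s_{>0}.
\]

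\textbf{Step 3 (dichotomy on $s=|P|+|Q|$).}

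If $s=1$, one of $P,Q$ is empty. Then $G$ is equivalent to $X_1\rightleftharpoons 0$, i.e.\ to \eqref{al: net of one-dim and one species}. Its steady state is $x_1=\kappa_2/\kappa_1$ (or the reciprocal, depending on orientation), which is non-vacuous ACR for every choice of rate constants.

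If $s\ge2$, fix any species $X_i$ and choose another species $X_j$ with $j\ne i$. Assign arbitrary positive values to all coordinates other than $x_j$. Because $x_j$ appears to the first power on exactly one side of the binomial equation, it can be solved for uniquely as a positive number. As $x_i$ ranges over all of $\mathbb R_{>0}$, this yields positive steady states with distinct $x_i$. So no species has ACR, for any rate constants, and the converse direction follows.

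Alternatively, for $s\ge2$ one could apply Corollary~\ref{coro: all s-1 nondegenerate then s no ACR}. Each $G_{/i}$ is again a nondegenerate one-dimensional reversible pair, whose $1\times1$ minors of $A(\lambda)$ are $\pm2\lambda\neq0$. The direct argument above is simpler, however, and holds for all rate constants, not only generic ones.

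\textbf{Main obstacle.} The only delicate point is Step 1: justifying that proportional $\{-1,0,1\}$ columns must be equal up to sign, and that non-redundancy pins down the reactant complexes. Everything after that is elementary.
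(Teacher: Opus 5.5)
Your proof is correct, but it takes a different route from the paper. The paper proves ($\Rightarrow$) by importing machinery. It cites \cite[Lemma 14]{Jiao2025} that every $1$D zero-one network is nondegenerate. Hence for $s\ge 2$ each $G_{/i}$ is a nondegenerate $1$D network, Theorem~\ref{thm:main} rules out ACR in every $X_i$ for generic $\kappa$, and this forces $s=1$. You instead use the rigidity of rank one with $\{-1,0,1\}$ entries to pin $G$ down as a subnetwork of the reversible pair $\sum_{Q}X_i\rightleftharpoons\sum_{P}X_i$. You then read off the steady-state locus as the binomial hypersurface $\kappa_1x^{Q}=\kappa_2x^{P}$, in which every coordinate is free once $s\ge 2$.

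Your route has three advantages. It is self-contained, with no appeal to \cite[Theorem 4.4]{Feliu2024} or to Jiao's lemma. It gives a stronger conclusion: no ACR for \emph{every} rate-constant vector, not only generic ones. And it makes explicit two points the paper leaves implicit. First, the single-reaction case is inconsistent, so a one-species network is not automatically equivalent to \eqref{al: net of one-dim and one species}. Second, the standing assumption that every species occurs in some reaction is genuinely needed. Adding an inert species to $X_1\rightleftharpoons 0$ would otherwise give a counterexample, and there Theorem~\ref{thm:main} does not even apply to $X_1$, since removing it drops the dimension.

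The paper's route, in turn, fits its theme that conservation laws plus nondegeneracy obstruct ACR. It is also the template reused in the $2$D classification, where no explicit parametrization of the steady states is available.

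One small slip in your aside: for the reversible pair, $A(\lambda)=\lambda\, v\,(y_Q-y_P)^{\top}$, where $y_Q,y_P$ are the indicator vectors of $Q,P$. Its $1\times 1$ principal minors are therefore $-\lambda$, not $\pm 2\lambda$. They are still nonzero, so the alternative argument via Corollary~\ref{coro: all s-1 nondegenerate then s no ACR} stands.
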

\begin{proof}
   ($\Leftarrow$)   It is straightforward to observe that $G$ has non-vacuous ACR in species $X_1$ for any generic choice of rate constants since  
the steady-state system of \eqref{al: net of one-dim and one species} is 
    $f=-\kappa_1x_1+\kappa_2$.  

    ($\Rightarrow$)By \cite[Lemma 14]{Jiao2025}, any $1$D zero-one network is nondegenerate. So, by Theorem \ref{thm:main}, if  $G$ admits non-vacuous ACR generically, then $G$ contains only one species. Thus, the conclusion follows from the fact that a $1$-species non-redundant zero-one network must be equivalent to \eqref{al: net of one-dim and one species}. 
\end{proof}

The above result implies that, among all $1$D networks, the only network that admits non-vacuous ACR is the one with no conservation laws. Hence, we infer that conservation laws may preclude the emergence of non-vacuous ACR. In practical applications, multi-species networks are more common, and such networks usually contain many conservation laws. We therefore proceed to discuss $2$D networks with multiple species, and explore whether networks with numerous conservation laws  admit non-vacuous ACR or not. 

\subsection{$\mathbf {2}$D networks}\label{subsec: two-dimensional}

Suppose $G$ is a non-redundant zero-one network with a rank-$2$ stoichiometric matrix $\mathcal{N}$. Let $\mathcal{N}_k$ denote the $k$-th row of $\mathcal{N}$. Since ${\rm rank}(\mathcal{N})=2$, any row of $\mathcal{N}$ can be expressed as a linear combination of two linearly independent rows of $\mathcal{N}$. 
We present a key notation for Section \ref{subsec: two-dimensional}. Let $s^*$ denote the number of distinct rows in the stoichiometric matrix $\mathcal{N}$. 
Since ${\rm rank}(\mathcal{N})=2$, we have $s^*\geq2$. Then, we discuss the $2$D networks in Section \ref{sssec:2d, s^*=2}, Section \ref{sssec:2d,s^*=3}, and Section \ref{sssec:2d,s^*>=4}, corresponding to the cases $s^*=2$, $s^*=3$, and $s^*\geq4$, respectively.

\subsubsection{Networks with $\mathbf{s^*=2}$}\label{sssec:2d, s^*=2}
In this section, we state the main result in Theorem \ref{thm:2d,s,2s^*,has ACR i.f.f.}, where for networks with $s^*=2$, we characterize all networks that admit non-vacuous ACR. We first classify all networks with $s^*=2$ into two classes according to whether there exists a row of $\mathcal{N}$ that appears only once, as detailed in cases (i) and (ii) of Remark \ref{rem:2d,2s,2s*,classification}. By Lemma \ref{lm: symmetry for no ACR}, we show that networks in case (i) admit no non-vacuous ACR. 
To complete the discussion of  case (ii), we characterize all $2$-species degenerate networks in Lemma \ref{lm:2d,2s,degnerate if and only if 1378=0},  
and the final proof implies that any $s$-species ACR network should be built based on a nondegenerate $2$-species network. 

\begin{theorem}\label{thm:2d,s,2s^*,has ACR i.f.f.}
    Suppose $G$ is a $2$D non-redundant zero-one network and $s^*=2$. Assume that the first two rows of the stoichiometric matrix $\mathcal{N}$ ($\mathcal{N}_1$, $\mathcal{N}_2$) are linearly independent. Then, $G$ has non-vacuous ACR in species $X_i$ for any generic choice of rate constants if and only if both of the following conditions hold. \begin{enumerate}
        \item [(i)] {$\begin{pmatrix}
        \mathcal{N}_1\\\mathcal{N}_2
    \end{pmatrix}$ is not equivalent to $\left(\begin{array}{rrrr}
        \;\;1&\;\;0&-1&\;\;1\\0&1&-1&1
    \end{array}\right)$ or $\left(\begin{array}{rrr}
        \;\;1&\;\;0&-1\\0&1&-1
    \end{array}\right)$. }
        \item [(ii)] {$\mathcal{N}_k=\mathcal{N}_1$ for all $k\in\{3,\;\ldots,\;s\}\setminus\{i\}$, or $\mathcal{N}_k=\mathcal{N}_2$ for all $k\in\{3,\;\ldots,\;s\}\setminus\{i\}$. }
    \end{enumerate}
\end{theorem}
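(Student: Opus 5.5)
Since $s^*=2$ and $\mathcal{N}_1,\mathcal{N}_2$ are linearly independent, every row $\mathcal{N}_k$ ($k\ge 3$) equals $\mathcal{N}_1$ or $\mathcal{N}_2$. So, up to relabeling, $G$ is the $2$-species network $G_0$ with rows $\mathcal{N}_1,\mathcal{N}_2$ in which some species are duplicated. Duplicated species have identical rows in $\mathcal{N}$. Because the network is non-redundant, they also have identical rows in $\mathcal{Y}$: a coefficient $-1$ in $\mathcal{N}$ means the species is a reactant, and $0$ or $1$ means it is not.

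Each duplicate $X_k$ of $X_1$ gives the conservation law $x_k-x_1=c_k$. Hence, whenever $\mathcal{N}_k$ equals a row occurring at least twice, removing $X_k$ keeps the dimension at $2$ and returns a network of the same type with one fewer copy. This gives the plan for ($\Rightarrow$) and ($\Leftarrow$).

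\textbf{Necessity.} I would first establish the symmetry statement (the role of Lemma~\ref{lm: symmetry for no ACR}). If $\mathcal{N}_i$ is a row occurring at least twice, then $G$ has no non-vacuous ACR in $X_i$ for generic $\kappa$. The idea is that $x_i$ and its twin $x_k$ enter every monomial identically. The steady-state equations therefore depend on $x_i$ only through the product $x_ix_k$ together with the constraint $x_k=x_i+c$. If $G$ has a positive steady state for $\kappa$, then varying $c$ over an open set produces steady states with varying $x_i$. Concretely, I would show that the steady-state variety projects onto an open set of $(x_1x_k, x_2\cdots)$ values and is fibered along the conservation direction.

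Alternatively, I would apply Theorem~\ref{thm:main} directly. Remove $X_i$; the result $G_{/i}$ is still $2$D, and it is nondegenerate provided the reduced network is. Degeneracy is governed by Lemma~\ref{lm:2d,2s,degnerate if and only if 1378=0}, which characterizes degenerate $2$-species networks. Up to equivalence, these are exactly the two matrices listed in (i), which I would verify via the $2\times2$ principal minor of $A(\lambda)$ from Lemma~\ref{lm:degenerate equivals to A is zero poly}.

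The argument then runs as follows:
\begin{itemize}
\item If (i) fails, the underlying $2$-species network is degenerate. In that case I would show that the steady-state set is either empty for generic $\kappa$ or a curve along which every coordinate varies, so no ACR is possible.
\item If (i) holds but (ii) fails, then both rows occur at least twice outside $X_i$'s class. Removing a twin of some other species preserves nondegeneracy, and removing $X_i$ itself is covered by Theorem~\ref{thm:main}: $X_i$'s row either occurs at least twice, or $X_i$ is the unique copy while both other rows are duplicated. In the latter case I would still get obstruction by Theorem~\ref{thm:main} applied after collapsing twins, since the twin structure induces a nondegenerate reduction.
\end{itemize}

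\textbf{Sufficiency.} Assume (i) and (ii), so $X_i$'s row is unique and all remaining species share one row, say $\mathcal{N}_1$. Eliminate the twins via $x_k=x_1+c_k$. The steady-state system reduces to the two equations of a nondegenerate $2$-species network in $(x_1,x_i)$, with rate constants modified by factors $\prod_k (x_1+c_k)$ attached to the reactions where $X_1$ is a reactant. Since $X_i$'s row is independent and unique, with no conservation law involving $x_i$, I would take the combination of the two equations that eliminates the $X_1$-dependent part. Concretely, this is the left-kernel direction of the $2\times m$ matrix restricted to the reactions involving $X_1$; it yields a polynomial equation in $x_i$ alone with $\kappa$-dependent coefficients. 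Nondegeneracy (condition (i)) should guarantee that this equation is nontrivial and has a unique positive root, yielding ACR.

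The main obstacle is this sufficiency step: showing that the eliminated equation is univariate with a single positive root for every nondegenerate $2$-species zero-one network. I expect to handle it by enumerating the finitely many non-redundant zero-one $2$-species stoichiometric matrices and checking each one.
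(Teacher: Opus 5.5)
\paragraph{Necessity.} This direction of your proposal is sound. Your observation that twins enter every monomial only through $x_ix_k$ means $(x_i,x_k)\mapsto(tx_i,x_k/t)$ maps steady states to steady states. That gives a cleaner proof of Lemma~\ref{lm: symmetry for no ACR} than the relabeling argument. With $s^*=2$ your second subcase cannot occur: if (ii) fails, both $\mathcal{N}_1$ and $\mathcal{N}_2$ are repeated, hence so is $\mathcal{N}_i$.

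\paragraph{The gap is in sufficiency.} You eliminate the $X_1$-dependence by a \emph{constant} combination $af_1+bf_2$, i.e.\ a left-kernel vector of the columns of the reactions consuming $X_1$. Such a vector exists only when those columns are all proportional, which fails in general. Take $X_1\xrightarrow{\kappa_1}0$, $0\xrightarrow{\kappa_2}X_1$, $0\xrightarrow{\kappa_4}X_2$, $X_1+X_2\xrightarrow{\kappa_5}0$, add twins of $X_1$ (so $X_2$ is your $X_i$), and let $Y$ be the product of the concentrations with row $\mathcal{N}_1$. Then $f_1=\kappa_2-\kappa_1Y-\kappa_5Yx_2$ and $f_2=\kappa_4-\kappa_5Yx_2$. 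The consuming columns $(-1,0)^\top$ and $(-1,-1)^\top$ are independent, and the $Y$-part $-(a\kappa_1+(a+b)\kappa_5x_2)Y$ of $af_1+bf_2$ vanishes only for $a=b=0$. Yet every positive steady state has $x_2=\kappa_1\kappa_4/(\kappa_5(\kappa_2-\kappa_4))$. So ACR holds, but your elimination cannot detect it.

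\paragraph{The missing idea (the paper's).} Lump the twins instead of using the conservation laws. Twins have identical reactant rows and the network is zero-one, so the species with row $\mathcal{N}_1$ enter only through $Y$. Each equation is then affine in $Y$, and $f=\tilde f|_{x_1=Y}$ with $\tilde f$ as in \eqref{sys:2d,2s}. Substituting $x_k=x_1+c_k$, as you propose, gives a system that is not mass-action in $(x_1,x_i)$, so facts about two-species networks do not apply to it directly.

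With lumping, every positive steady state of $G$ yields a positive steady state $(Y,x_2)$ of the two-species network $\widetilde G$. You then eliminate $Y$ with $x_2$-dependent multipliers, which gives $g_1g_4-g_2g_3=0$. What remains is uniqueness of $x_2$, which you flagged as your main obstacle. The paper obtains it from nondegeneracy of $\widetilde G$ (condition (i) with Lemma~\ref{lm:2d,2s,degnerate if and only if 1378=0}) together with \cite[Theorem A]{Feliu2024} and \cite[Theorem 3]{Jiao2025}, which give exactly one positive steady state for generic $\kappa$.

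\paragraph{A self-contained alternative.} One computation replaces your enumeration. For the full system \eqref{sys:total_exm_of_lm_degenerate}, $g_1g_4-g_2g_3$ is a quadratic in $x_2$ with leading coefficient $\kappa_5(\kappa_3+2\kappa_8)\ge 0$ and constant term $-(\kappa_1+\kappa_7)(\kappa_4+\kappa_6)-\kappa_7(\kappa_2+\kappa_6)\le 0$. So it has at most one positive root unless it vanishes identically. Identical vanishing for generic $\kappa$ would make positive steady states non-isolated, contradicting nondegeneracy.

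\paragraph{A remark on the statement.} Condition (ii) as literally written does not make $X_i$'s row unique; for $i\ge 3$ it never is. Like the paper, which proves ACR in $X_2$ after assuming $\mathcal{N}_k=\mathcal{N}_1$ for $k\ge 3$, you are proving the intended statement about the species whose row occurs exactly once.
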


To complete the proof of Theorem \ref{thm:2d,s,2s^*,has ACR i.f.f.}, we first prepare some useful lemmas. 

\begin{remark}\label{rem:2d,2s,2s*,classification}
    Suppose $G$ is a network with $s$ species and $s^*=2$.  Assume that the first two rows of the stoichiometric matrix $\mathcal{N}$ are distinct. If $s\geq3$, then one of the following holds. 
    \begin{enumerate}
        \item [(i)] {There exist $i,\;j\in\{3,\;\ldots,\;s\}$ such that $\mathcal{N}_i=\mathcal{N}_1$ and $\mathcal{N}_j=\mathcal{N}_2$. }
        \item [(ii)] {For all $k\in\{3,\;\ldots,\;s\}$, $\mathcal{N}_k=\mathcal{N}_1$, or for all $k\in\{3,\;\ldots,\;s\}$, $\mathcal{N}_k=\mathcal{N}_2$. }
    \end{enumerate}
\end{remark}

\begin{lemma}\label{lm: symmetry for no ACR}
  If two rows in the stoichiometric matrix  $\mathcal{N}$  of a non-redundant zero-one network $G$ are the same, say $\mathcal{N}_i=\mathcal{N}_j$, then $G$ has non-vacuous ACR in neither species $X_i$ nor species $X_j$ for any choice of rate constants. 
\end{lemma}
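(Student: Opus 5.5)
The plan is to exploit the fact that, in a non-redundant zero-one network, the stoichiometric matrix $\mathcal{N}$ determines the reactant matrix $\mathcal{Y}$ entrywise. From any positive steady state we then build a one-parameter family of positive steady states along which $x_i$ varies. No rank or degeneracy machinery (Lemma \ref{lm:degenerate equivals to A is zero poly}, Theorem \ref{thm:main}) is needed, which fits the claim holding for \emph{every} choice of rate constants rather than generic ones.

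\textbf{Step 1: equal rows of $\mathcal{N}$ give equal rows of $\mathcal{Y}$.} Since $G$ is zero-one and non-redundant, for each reaction $j'$ and species $k$ the pair $(\alpha_{j'k},\beta_{j'k})$ is one of $(0,0)$, $(1,0)$, $(0,1)$; the pair $(1,1)$ is excluded. Hence $\alpha_{j'k}=1$ exactly when $\mathcal{N}_{k j'}=-1$. Consequently, if $\mathcal{N}_i=\mathcal{N}_j$, then the $i$-th and $j$-th rows of $\mathcal{Y}$ coincide. In other words, for every reaction $j'$ the monomial $\prod_k x_k^{\alpha_{j'k}}$ contains $x_i$ if and only if it contains $x_j$, and then both appear to the first power. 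So every reaction monomial depends on $(x_i,x_j)$ only through the product $x_ix_j$.

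\textbf{Step 2: rescaling construction.} Fix any $\kappa^*\in\mathbb{R}^m_{>0}$. If $G$ has no positive steady state for $\kappa^*$, there is nothing to prove, since non-vacuous ACR requires steady states. Otherwise, take a positive steady state $x^*$. For $t>0$ define $x(t)$ by $x(t)_i=t\,x_i^*$, $x(t)_j=x_j^*/t$, and $x(t)_k=x_k^*$ for $k\neq i,j$. By Step 1 every monomial in the vector in \eqref{eq:sys} takes the same value at $x(t)$ as at $x^*$. Hence $f(\kappa^*,x(t))=f(\kappa^*,x^*)=0$, and $x(t)$ is a positive steady state for every $t>0$.

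\textbf{Step 3: conclusion.} Choosing $t\neq1$ gives two positive steady states whose $i$-th coordinates differ ($t x_i^*\neq x_i^*$) and whose $j$-th coordinates differ. So $G$ has non-vacuous ACR in neither $X_i$ nor $X_j$ for $\kappa^*$, and $\kappa^*$ was arbitrary. The family $x(t)$ generally leaves the stoichiometric compatibility class of $x^*$, but this is irrelevant because the definition of ACR in Section \ref{subsec:def of ACR} quantifies over all positive steady states. The only step requiring care is Step 1, namely checking that non-redundancy rules out a $-1$ entry of $\mathcal{N}$ coexisting with a product coefficient; this is immediate from Definition \ref{def:non-redundant species & network}.
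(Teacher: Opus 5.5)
Your proof is correct, and it takes a genuinely different route from the paper's. The paper uses the \emph{discrete} symmetry that exchanges $X_i$ and $X_j$. Because $\mathcal{N}_i=\mathcal{N}_j$, relabeling the two species leaves the network unchanged, so non-vacuous ACR in $X_i$ with value $a$ forces non-vacuous ACR in $X_j$ with the same value $a$. The paper then asserts a contradiction with the conservation law $x_i=x_j+c$. You instead use the \emph{continuous} symmetry $(x_i,x_j)\mapsto(tx_i,x_j/t)$. It rests on the same structural input (in the non-redundant zero-one setting, equal rows of $\mathcal{N}$ force equal rows of $\mathcal{Y}$), but it directly produces a curve of positive steady states along which $x_i$ and $x_j$ vary.

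The paper's version is shorter, but its last step leaves something implicit. Knowing $x_i^*=x_j^*=a$ at every positive steady state only says that all steady states lie in compatibility classes with $c=0$, which is not contradictory by itself. One still has to exhibit a steady state with $x_i^*\neq x_j^*$, and your rescaling supplies exactly that.

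Your argument is therefore self-contained and needs neither conservation laws nor the notion of network equivalence. It also proves more: whenever $G$ has a positive steady state for $\kappa^*$, the coordinate $x_i$ (and likewise $x_j$) takes every value in $\mathbb{R}_{>0}$ on the steady-state set. You were right to flag that $x(t)$ leaves the compatibility class of $x^*$, and right that this is harmless because the paper's definition of ACR quantifies over all positive steady states.
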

\begin{proof}
  Notice that the new network obtained by relabeling $X_i$ and $X_j$ in $G$ is equivalent to $G$. Hence, if there exists a rate-constant vector $\kappa^*$ such that $G$ has non-vacuous ACR in $X_i$ (or in $X_j$), then 
  $G$ also has non-vacuous ACR in $X_j$ (or in $X_i$), which contradicts the conservation law $x_i=x_j+c$. 
\end{proof}

\begin{lemma}\label{lm:2d,2s,degnerate if and only if 1378=0}
    Suppose $G$ is a $2$D non-redundant zero-one network with $2$ species. Then, $G$ is degenerate if and only if  the stoichiometric matrix
    $\mathcal{N}$ is equivalent to \begin{equation}\label{al:stoichiometric of 2s degenerate}
        \left(\begin{array}{rrrr}
            \;\;1&\;\;0&-1&\;\;1\\0&1&-1&1
        \end{array}\right)\text{ or }\left(\begin{array}{rrr}
            \;\;1&\;\;0&-1\\0&1&-1
        \end{array}\right).\;
    \end{equation}
\end{lemma}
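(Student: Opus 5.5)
Since $G$ has two species and dimension two, there are no conservation laws and $r=s=2$. By Lemma~\ref{lm:degenerate equivals to A is zero poly}, $G$ is degenerate if and only if the single polynomial $\det A(\lambda)$ is identically zero. The plan is to expand this determinant with the Cauchy--Binet formula, show that every term has a fixed sign, and then reduce degeneracy to a finite combinatorial condition that can be checked by enumeration.

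\textbf{Step 1: Cauchy--Binet expansion.} Write $\mu=\sum_i\lambda_i\ell^{(i)}$. Cauchy--Binet applied to $A(\lambda)=\mathcal N\,\mathrm{diag}(\mu)\,\mathcal Y^{\top}$ gives
\[
\det A(\lambda)=\sum_{j<k}\mu_j\mu_k\,\det(\mathcal N_{\cdot j},\mathcal N_{\cdot k})\,\det(\mathcal Y_{\cdot j},\mathcal Y_{\cdot k}).
\]
In a non-redundant zero-one network each column of $\mathcal N$ lies in $\{-1,0,1\}^2\setminus\{0\}$, and the reactant column is forced: $\mathcal Y_{\cdot j}=\max(-\mathcal N_{\cdot j},0)$. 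So there are only eight possible reactions, and the pairs with $\det(\mathcal Y_{\cdot j},\mathcal Y_{\cdot k})\neq0$ are of two kinds:
\begin{itemize}
\item one reactant equal to $X_1$ and the other equal to $X_2$;
\item one reactant equal to $X_1+X_2$, i.e. the column $(-1,-1)^{\top}$, and the other equal to $X_1$ or $X_2$.
\end{itemize}
A direct check of these finitely many pairs should show that every product $\det(\mathcal N)\det(\mathcal Y)$ is $\geq0$. For instance, $(-1,0)$ with $(0,-1)$ gives $1$, $(-1,-1)$ with $(-1,1)$ gives $2$, and $(-1,1)$ with $(1,-1)$ gives $0$.

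\textbf{Step 2: reduction to a co-support condition.} Because the coefficients are nonnegative and each $\mu_j\mu_k$ is a nonzero polynomial with nonnegative coefficients in $\lambda$ whenever $j,k$ lie in a common generator support, there is no cancellation. Hence $\det A\equiv0$ if and only if no pair $(j,k)$ with a strictly positive product has both $j$ and $k$ in the support of the flux cone. Equivalently, there is no $\mu\in\mathcal F(\mathcal N)$ with $\mu_j\mu_k>0$ for such a pair. If $\mathcal F(\mathcal N)=\{0\}$, the network is inconsistent and trivially degenerate; I expect the intended reading excludes this case, or else it is absorbed by requiring consistency.

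\textbf{Step 3: enumeration.} The remaining step is to classify the rank-two sets of columns, drawn from the eight possibilities, whose flux cone has some support. I expect this to be the main obstacle.
\begin{itemize}
\item \emph{No reaction with reactant $X_1+X_2$.} For consistency, the outflow of each species must be balanced by some reaction consuming it. Consumption of $X_1$ comes from $(-1,0)$ or $(-1,1)$, and of $X_2$ from $(0,-1)$ or $(1,-1)$. The only positive pair avoiding a positive product is $\{(-1,1),(1,-1)\}$, which has rank one. Adding inflow columns to reach rank two forces a positive pair into the support, so no degenerate network arises.
\item \emph{A column $(-1,-1)$ is present.} Degeneracy then forbids any reaction with reactant $X_1$ or $X_2$ from lying in the support together with it. So the supported reactions are $(-1,-1)$ together with inflow columns from $\{(1,0),(0,1),(1,1)\}$, possibly plus reactions outside the support.
\end{itemize}
In the second case, I would show that unsupported reactions cannot exist: a non-redundant column with reactants would either create a positive pair or break non-redundancy. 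Then I would check that rank two together with a positive flux through $(-1,-1)$ leaves exactly $\{(1,0),(0,1),(-1,-1)\}$ and $\{(1,0),(0,1),(-1,-1),(1,1)\}$, up to permutation of rows and columns, which are the matrices in \eqref{al:stoichiometric of 2s degenerate}. The converse direction is immediate, since in both matrices only one column has a nonzero reactant vector, so every $\det(\mathcal Y_{\cdot j},\mathcal Y_{\cdot k})$ vanishes.
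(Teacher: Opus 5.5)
Your Steps 1 and 2 are correct, and they are essentially the paper's computation in different coordinates. The paper expands $\det(Jac_h)$ for the full eight-reaction network and obtains $\kappa_1\kappa_3+\kappa_1\kappa_8+\kappa_3\kappa_7+\kappa_5(\kappa_1+2\kappa_7)x_1+\kappa_5(\kappa_3+2\kappa_8)x_2$. After substituting $\mu_j=\kappa_j x^{y_j}$ and dividing by $x_1x_2$, its monomials are exactly your positive pairs, with the same weights.

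The gap is in Step 3, where the actual work lies.
\begin{enumerate}
\item Your plan to show that ``unsupported reactions cannot exist'' does not work. A reaction outside the support of $\mathcal F(\mathcal N)$ contributes nothing to $\det A(\lambda)$, so it cannot ``create a positive pair''. Non-redundancy is a property of the column alone and has nothing to do with support.
\item Without an extra hypothesis, unsupported reactions do exist. Take $X_1+X_2\to 0$, $0\to X_1+X_2$, $X_2\to 0$. This network is $2$D, its flux cone is generated by $(1,1,0)$, and $\det A(\lambda)\equiv 0$. Yet it is not equivalent to either matrix in \eqref{al:stoichiometric of 2s degenerate}.
\item The same example shows your consistency caveat is too narrow: inconsistency is not the same as $\mathcal F(\mathcal N)=\{0\}$. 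So the ($\Rightarrow$) direction is false as literally stated unless consistency is assumed.
\item In your first case, adding inflow columns never ``forces a positive pair'', because inflows have zero reactant vectors. What actually fails there is consistency: $(-1,1)$ and $(1,-1)$ preserve $x_1+x_2$, so nothing can balance an inflow.
\end{enumerate}

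The missing idea is to make consistency an explicit hypothesis of ($\Rightarrow$), as the paper's proof does with ``only networks satisfying case (iv) are consistent'', and then use it. A positive steady state $x^*$ gives the strictly positive flux vector $\mu_j=\kappa_j (x^*)^{y_j}$. Hence every reaction lies in the support, and your Step 2 collapses to the condition that no two reactions of $G$ form a positive pair. The enumeration is then short.
\begin{enumerate}
\item If $(-1,-1)$ is present, every reaction with reactant $X_1$ or $X_2$ is excluded. This leaves subsets of $\{(1,0),(0,1),(-1,-1),(1,1)\}$, and a direct check shows the only consistent rank-two ones are the two listed matrices.
\item If $(-1,-1)$ is absent, a consistent $2$D network must consume both species. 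The only pair consisting of an $X_1$-consumer and an $X_2$-consumer with zero product is $\{(-1,1),(1,-1)\}$. Reaching rank two then requires an inflow, which cannot be balanced.
\end{enumerate}
This is the same case split as the paper's cases (i)--(iv). Your ($\Leftarrow$) argument is fine, since both listed networks are consistent and $A(\lambda)$ has rank one.
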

\begin{proof}
    Notice that for a $2$D non-redundant zero-one network with $2$ species, $\mathcal{N}$ must be composed of certain columns of \begin{equation}\label{ma:N8 of lm 2s degenerate}
        \left(\begin{array}{rrrrrrrr}
        -1&\;\;1&0&\;\;0&-1&\;\;1&-1&1\\0&0&-1&1&-1&1&1&-1
    \end{array}\right). 
    \end{equation}
     The steady-state system corresponding to \eqref{ma:N8 of lm 2s degenerate} is 
    \begin{equation}\label{sys:total_exm_of_lm_degenerate}
        \begin{cases}
            f_1&=-\kappa_1x_1+\kappa_2-\kappa_5x_1x_2+\kappa_6-\kappa_7x_1+\kappa_8x_2,\;\\f_2&=-\kappa_3x_2+\kappa_4-\kappa_5x_1x_2+\kappa_6+\kappa_7x_1-\kappa_8x_2,\;
        \end{cases}\;
    \end{equation}where $\kappa\in\mathbb{R}^8_{\geq0}$. 
    Notice that if the $i$-th column of \eqref{ma:N8 of lm 2s degenerate} does not appear in $G$, then $\kappa_i=0$ in \eqref{sys:total_exm_of_lm_degenerate}.
    Since $G$ contains no conservation laws, by \eqref{eq:h_eta}, we have $f=h$. Hence, by \eqref{sys:total_exm_of_lm_degenerate}, we have 
    \begin{equation}\label{al:2d,2s,det(Jach)}
        det(Jac_h)=\kappa_1\kappa_3+\kappa_1\kappa_8+\kappa_3\kappa_7+\kappa_5(\kappa_1+2\kappa_7)~x_1+\kappa_5(\kappa_3+2\kappa_8)~x_2.\;
    \end{equation}
    
    ($\Leftarrow$)Since $\kappa_1=\kappa_3=\kappa_7=\kappa_8=0$, by  \eqref{al:2d,2s,det(Jach)}, we have that $det(Jac_h)$ is the zero polynomial. Hence, $G$ is degenerate. 
    
    ($\Rightarrow$) If $G$ is degenerate, then $det(Jac_h)$ is the zero polynomial. By \eqref{al:2d,2s,det(Jach)}, one of the following cases holds: (i) $\kappa_1=\kappa_3=\kappa_5=0$, (ii) $\kappa_1=\kappa_5=\kappa_7=0$, (iii) $\kappa_3=\kappa_5=\kappa_8=0$, and (iv) $\kappa_1=\kappa_3=\kappa_7=\kappa_8=0$. 
    Notice that only networks satisfying case (iv) are consistent. So their stoichiometric matrices can only be equivalent to  \eqref{al:stoichiometric of 2s degenerate}. 
\end{proof}

\noindent\textbf{\textit{Proof of Theorem \ref{thm:2d,s,2s^*,has ACR i.f.f.}}}
   ($\Leftarrow$) Assume that statements (i) and (ii) of Theorem \ref{thm:2d,s,2s^*,has ACR i.f.f.} hold simultaneously.  Let $\widetilde{G}$ be the non-redundant zero-one network corresponding to the first two rows of ${\mathcal N}$. Since $\widetilde{G}$ is zero-one,  its steady-state system $\tilde{f}$ can be written as \begin{equation}\label{sys:2d,2s}
        \begin{cases}
            \tilde{f}_1&=x_1g_1(\kappa,\;x_2)+g_2(\kappa,\;x_2),\;\\
        \tilde{f}_2&=x_1g_3(\kappa,\;x_2)+g_4(\kappa,\;x_2).\;
        \end{cases}\;
    \end{equation}
By condition (ii), without loss of generality, assume that $\mathcal{N}_k=\mathcal{N}_1$ for all $k\in\{3,\;\ldots,\;s\}$. Let 
        $Y=\prod_{\{k|\mathcal{N}_k=\mathcal{N}_1\}}x_k$.
   Then, the steady-state system of $G$ can be written as     \begin{equation}\label{sys:2d,2s*,s,Nk=N1}
        f=\tilde{f}|_{x_1=Y}. 
    \end{equation} 
   Since condition (i)  holds, by Lemma \ref{lm:2d,2s,degnerate if and only if 1378=0}, we know that $\widetilde{G}$ is nondegenerate. By \cite[Theorem A]{Feliu2024} and \cite[Theorem 3]{Jiao2025},  for any generic choice of rate-constant vector $\kappa^*$,  $\widetilde{G}$ has only one nondegenerate positive steady state $(x_1^*,\;x_2^*)$. By 
   \eqref{sys:2d,2s}, $(\kappa^*, x_2^*)$ definitely satisfies the implicit function 
   $x_2^*=x_2(\kappa^*)$ defined by 
   \begin{equation}\notag\begin{vmatrix}
        g_1(\kappa,\;x_2) & g_2(\kappa,\;x_2) \\
        g_3(\kappa,\;x_2) & g_4(\kappa,\;x_2)
    \end{vmatrix}=0. 
    \end{equation}
    So, by \eqref{sys:2d,2s} and \eqref{sys:2d,2s*,s,Nk=N1}, 
    any steady state $x^*$ of $G$ also satisfies $x_2^*=x_2(\kappa^*)$. That means the network $G$ has non-vacuous ACR in species $X_2$ for any generic choice of rate constants.  

    ($\Rightarrow$)
     If $G$ has non-vacuous ACR for any generic choice of rate constants, then by Lemma \ref{lm: symmetry for no ACR} and Remark \ref{rem:2d,2s,2s*,classification}, we have either $\mathcal{N}_k=\mathcal{N}_1$ for all $k\in\{3,\;\ldots,\;s\}$ or $\mathcal{N}_k=\mathcal{N}_2$ for all $k\in\{3,\;\ldots,\;s\}$, which is exactly the  condition (ii).
   Assume that the condition (i) does not hold. If the first two rows of 
   ${\mathcal N}$ are $\left(\begin{smallmatrix}
       \;\;1&\;\;0&-1&\;\;1\\\;\;0&\;\;1&-1&\;\;1
   \end{smallmatrix}\right)$,   
then, under condition (ii),  the steady-state system is 
\begin{equation}\label{sys:2d,s,degenerate}
        \begin{cases}
            f_1=\kappa_1-\kappa_3\prod_{i=1}^s x_i+\kappa_4,\\
            f_2=\kappa_2-\kappa_3\prod_{i=1}^s x_i+\kappa_4.
        \end{cases}\;
    \end{equation}
    Notice that the network even has no positive steady states when $\kappa_1\neq \kappa_2$,  which 
    contradicts the hypothesis that $G$ has non-vacuous ACR for  generic rate constants.  
    If the first two rows of 
  ${\mathcal N}$ are $\left(\begin{smallmatrix}
       \;\;1&\;\;0&-1\\\;\;0&\;\;1&-1
   \end{smallmatrix}\right)$, then the contradiction happens in the same way by setting $\kappa_4=0$ in \eqref{sys:2d,s,degenerate}. \hfill$\square$

\subsubsection{Networks with $\mathbf{s^*=3}$}\label{sssec:2d,s^*=3}
In this section, we state the main results in Theorem \ref{thm:2d,3s,3s^* with ACR table}, where we characterize all networks with $s^*=3$ that admit non-vacuous ACR. We prove the main results in a constructive way in two steps. The first step  focuses on $3$-species networks with non-vacuous ACR. Notice that any $2$-species network obtained by removing a single species from a $3$-species network is either $1$D or $2$D. We characterize the two cases respectively in Lemma \ref{lm:if 2s degenerate, then 3-th species has ACR}  and Lemma \ref{lm:if 2s 1d, then 3-th species has ACR}. 
Then, we identify all the $3$-species ACR networks in Lemma \ref{lm:2d,3s,equival,has ACR in X1,X2,X3 i.f.f.}, which explains the structures of the stoichiometric matrices \eqref{mat:1}--\eqref{mat:5} in Theorem~\ref{thm:2d,3s,3s^* with ACR table}.   The second step is to prove that any $s$-species $(s\geq 3)$ network with non-vacuous ACR must be built based on a $3$-species ACR network in a structural way, which is given in Lemma \ref{lm:2d,s,3s^*}. 

\begin{definition}\label{def:species_refinement}
Let $G$ be a network defined as in \eqref{eq:network} with species $X_1,\;\ldots,\;X_s$. For each $i\in\{1,\;\ldots,\;s\}$, choose $n_i\geq 1$ new species $Z_{i,1},\;\ldots,\;Z_{i,n_i}$ so that no species appears in more than one of the lists $\{Z_{i,1},\;\ldots,\;Z_{i,n_i}\}$. Define a map $\phi$ on the species set $\{X_1,\;\ldots,\;X_s\}$ by
\begin{equation*}
\phi(X_i)=Z_{i,1}+\cdots+Z_{i,n_i},\quad i=1,\;\ldots,\;s.
\end{equation*}
We call $\phi$ a \defword{natural species refinement map}. In every reaction of $G$, replace each $X_i$ by $\phi(X_i)$, keeping the same stoichiometric coefficients and rate constants. Denote the network obtained in this way by $\phi(G)$:
\begin{equation*}
\sum_{i=1}^s \alpha_{ij}\,\phi(X_i) \xrightarrow{\kappa_j} \sum_{i=1}^s \beta_{ij}\,\phi(X_i),\quad j=1,\;\ldots,\;m.
\end{equation*}
We call $\phi(G)$ a \defword{natural species refinement} of $G$, and $G$ a \defword{lumping network} of $\phi(G)$.
\end{definition}

\begin{theorem}\label{thm:2d,3s,3s^* with ACR table}
    Suppose $G$ is a $2$D non-redundant zero-one network with $s$ species and $s^*=3$. 
    Then, $G$ has non-vacuous ACR in species $X_i$ for any generic choice of rate constants if and only if there exist a $2$D non-redundant zero-one network $\widetilde{G}$ with $3$ species $\widetilde{X}_1$, $\widetilde{X}_2$, $\widetilde{X}_3$, a consistent subnetwork $\widetilde{G}_{\mathrm{sub}}$ of $\widetilde{G}$, and a natural species refinement map $\phi$ such that $G=\phi(\widetilde{G}_{\mathrm{sub}})$ and $\phi(\widetilde{X}_3)=X_i$, where the stoichiometric matrix $\widetilde{\mathcal{N}}$ of $\widetilde{G}$ is one of \eqref{mat:1}--\eqref{mat:5}.
    \par\smallskip
    \noindent
    \begin{minipage}{0.48\textwidth}
    \begin{equation}\label{mat:1} 
        \left(\begin{array}{rrrr}
        \cellcolor{lightblue!120} 1 & \cellcolor{lightblue!120} \;\;0 & \cellcolor{lightblue!120} -1 & \cellcolor{lightblue!120} \;\;1 \\
        \cellcolor{lightblue!120} 0 & \cellcolor{lightblue!120} 1 & \cellcolor{lightblue!120} -1 & \cellcolor{lightblue!120} 1 \\
        -1 & 1 & 0 & 0
    \end{array}\right)\quad
    \end{equation} 
    \end{minipage}
    \hfill\begin{minipage}{0.48\textwidth}
    \begin{equation}\label{mat:2} \left(\begin{array}{rrrr}
        \cellcolor{lightblue!120} 1 & \cellcolor{lightblue!120} 0 & \cellcolor{lightblue!120} -1 & \cellcolor{lightblue!120} 1 \\
        \cellcolor{lightblue!120} 0 & \cellcolor{lightblue!120} 1 & \cellcolor{lightblue!120} -1 & \cellcolor{lightblue!120} 1 \\
        -1 & 0 & 1 & -1
    \end{array}\right)\end{equation}
    \end{minipage}

    \noindent
    \begin{minipage}{0.48\textwidth}
    \begin{equation}\label{mat:3} \left(\begin{array}{rrrr}
        \cellcolor{yellow!75} -1 & \cellcolor{yellow!75} 1 & \cellcolor{lightpurple!60} 0 & \cellcolor{lightpurple!60} 0 \\
        \cellcolor{yellow!75} 1 & \cellcolor{yellow!75} -1 & \cellcolor{lightpurple!60} 0 & \cellcolor{lightpurple!60} 0 \\
        \cellcolor{yellow!75}0 & \cellcolor{yellow!75}0 & \cellcolor{lightpurple!60}-1 & \cellcolor{lightpurple!60}1
    \end{array}\right)\quad\end{equation} \end{minipage}
    \hfill
    \begin{minipage}{0.48\textwidth}
    \begin{equation}\label{mat:4} \left(\begin{array}{rrrr}
        \cellcolor{lightpurple!60} 0 & \cellcolor{lightpurple!60} 0 & \cellcolor{lightgreen!120} -1 & \cellcolor{lightgreen!120} 1 \\
        \cellcolor{lightpurple!60} 0 & \cellcolor{lightpurple!60} 0 & \cellcolor{lightgreen!120} 1 & \cellcolor{lightgreen!120} -1 \\
        \cellcolor{lightpurple!60}-1 & \cellcolor{lightpurple!60}1 & \cellcolor{lightgreen!120}1 & \cellcolor{lightgreen!120}-1
    \end{array}\right)\end{equation}
    \end{minipage}

    \noindent
    \begin{minipage}{\textwidth}
    \begin{equation}\label{mat:5} 
        \left(\begin{array}{rrrrrr}
        \cellcolor{yellow!75} -1 & \cellcolor{yellow!75} 1 & \cellcolor{coralorange!200} -1 & \cellcolor{coralorange!200} 1 & \cellcolor{lightgreen!120} -1 & \cellcolor{lightgreen!120} 1 \\
        \cellcolor{yellow!75} 1 & \cellcolor{yellow!75} -1 & \cellcolor{coralorange!200} 1 & \cellcolor{coralorange!200} -1 & \cellcolor{lightgreen!120} 1 & \cellcolor{lightgreen!120} -1 \\
        \cellcolor{yellow!75} 0 & \cellcolor{yellow!75} 0 & \cellcolor{coralorange!200} -1 & \cellcolor{coralorange!200} 1 & \cellcolor{lightgreen!120} 1 & \cellcolor{lightgreen!120} -1
    \end{array}\right)
    \end{equation}
    \end{minipage}
    \par\smallskip

\end{theorem}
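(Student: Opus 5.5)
The proof splits into a reduction to three species and a classification of the three-species case, following the two-step outline announced before Definition~\ref{def:species_refinement}.

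\textbf{Reduction from $s$ species to three.} Since $s^*=3$, the species of $G$ fall into three classes according to their stoichiometric rows. By Lemma~\ref{lm: symmetry for no ACR}, an ACR species $X_i$ must have a row appearing exactly once. Lumping each class into a single species gives a three-species lumping network $\widetilde G'$ with $G=\phi(\widetilde G')$. Here $\phi$ refines each row class, and $\phi(\widetilde X_3)=X_i$. Non-redundancy and the zero-one property pass to $\widetilde G'$, because all species in a class have the same row and hence appear on the same side of every reaction.

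We then relate steady states, as in the proof of Theorem~\ref{thm:2d,s,2s^*,has ACR i.f.f.}. The steady-state system of $G$ is the steady-state system of $\widetilde G'$ with $\widetilde x_k$ replaced by $Y_k=\prod_{\text{class }k}x_j$. The map $x\mapsto(Y_1,Y_2,x_i)$ from positive steady states of $G$ onto positive steady states of $\widetilde G'$ is surjective. Therefore $G$ has non-vacuous ACR in $X_i$ generically if and only if $\widetilde G'$ has it in $\widetilde X_3$. This is the content of Lemma~\ref{lm:2d,s,3s^*}.

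\textbf{Classification of three-species ACR networks.} Let $\widetilde G'$ be a 2D three-species non-redundant zero-one network with distinct rows and ACR in $\widetilde X_3$. By Theorem~\ref{thm:main}, removing $\widetilde X_3$ must give either a 1D network or a 2D degenerate network.

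In the 2D case, Lemma~\ref{lm:2d,2s,degnerate if and only if 1378=0} forces the first two rows to be a column subset of $\left(\begin{smallmatrix}1&0&-1&1\\0&1&-1&1\end{smallmatrix}\right)$, possibly with repeated columns. The third row is then constrained by zero-one non-redundancy and rank $2$. Enumerating the finitely many options and discarding those that fail ACR or consistency (Lemma~\ref{lm:if 2s degenerate, then 3-th species has ACR}) should yield \eqref{mat:1} and \eqref{mat:2}.

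In the 1D case, the first two rows are proportional, so $\mathcal N_2=-\mathcal N_1$ since the rows are distinct. Each column restricted to rows $1,2$ is one of $(\pm1,\mp1)$ or $(0,0)$. Adjoining the possible third-row entries $\{-1,0,1\}$ gives a finite candidate set of columns. Checking ACR by directly solving the small steady-state systems, as in Remark~\ref{rem:thm_efficiency}, should give \eqref{mat:3}--\eqref{mat:5} (Lemma~\ref{lm:if 2s 1d, then 3-th species has ACR}).

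Any consistent $\widetilde G'$ that survives is then a consistent subnetwork of one of these maximal networks. We also need that Theorem~\ref{thm:main} was applied only to the species $\widetilde X_3$ itself, and that removals of $\widetilde X_1$ or $\widetilde X_2$ impose no contradictions. Lemma~\ref{lm:2d,3s,equival,has ACR in X1,X2,X3 i.f.f.} packages this enumeration.

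\textbf{Sufficiency.} For the converse, we check for each of \eqref{mat:1}--\eqref{mat:5}, and each consistent subnetwork of it, that $x_3$ is generically fixed. Take \eqref{mat:5} as an example: the combinations $f_1+f_3$ and $f_1-f_3$ give $x_3^2=$ a ratio of rate constants, as in Remark~\ref{rem:thm_efficiency}. Dropping columns while keeping the dimension equal to $2$ and consistency preserves a relation of this type. Refinement then preserves ACR by the surjectivity argument above.

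\textbf{Main obstacle.} The hardest part is the exhaustive enumeration in the three-species classification. We have to show that no other column configurations produce ACR, and that every consistent 2D subnetwork of the five prototypes retains ACR; in particular, subnetworks that drop one direction of a reversible pair must not become inconsistent or lose ACR. I would organize this by sign patterns of the columns and use Lemma~\ref{lm:degenerate equivals to A is zero poly} to prune candidates quickly.
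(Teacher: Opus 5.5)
Your proposal is correct and follows the paper's own route. Your lumped network $\widetilde G'$ is the paper's $G_{/4,\ldots,s}$, the reduction is Lemma~\ref{lm: symmetry for no ACR} plus Lemma~\ref{lm:2d,s,3s^*}, the three-species classification is Lemma~\ref{lm:2d,3s,equival,has ACR in X1,X2,X3 i.f.f.} (assembled from Theorem~\ref{thm:main} and Lemmas~\ref{lm:2d,2s,degnerate if and only if 1378=0}, \ref{lm:if 2s degenerate, then 3-th species has ACR}, \ref{lm:if 2s 1d, then 3-th species has ACR}), and $\phi$ is built from the row classes exactly as in the paper.

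There is one slip in your sufficiency example. For \eqref{mat:5} itself, eliminating $x_1/x_2$ from $f_1\pm f_3=0$ gives the quadratic $\mathcal{J}(\kappa,x_3)=0$ of \eqref{al:Jx3}, not $x_3^2=$ const; that simpler form only appears once the $\mathcal{B}_1$ columns are dropped. The unique positive $x_3$ instead comes from the coefficient sign pattern $(+,+,-)$ of $\mathcal{J}$, which persists when columns are deleted, and you need the check that $\mathcal{J}$ never vanishes identically on a consistent 2D subnetwork, which holds.
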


Matrices \eqref{mat:1} and \eqref{mat:2}
exactly correspond to networks \hyperref[fig:1a]{(\subref{fig:1a})} and \hyperref[fig:1b]{(\subref{fig:1b})} in Fig.~\ref{fig:total_networks}. 
Consider the first two rows (blue entries) from \eqref{mat:1}–\eqref{mat:2}. Alongside the matrix resulting from removing the last column of the blue‑colored submatrix, they correspond to the degenerate $2$-species networks characterized in Lemma~\ref{lm:2d,2s,degnerate if and only if 1378=0}. Analogous to Michaelis-Menten kinetics \cite{MiMen1913}, the networks \hyperref[fig:1a]{(\subref{fig:1a})} and \hyperref[fig:1b]{(\subref{fig:1b})} model enzyme-catalyzed reactions exhibiting ACR in species $S_1$. In both topologies, an initial state $S_0$ is converted to the state $S_1$ (an enzyme-substrate complex) via catalyst $K$, followed by the spontaneous decay of $S_1$ back to $S_0$. Structurally, removing the ACR species $S_1$ from either network yields the same degenerate $2$-species network:
$0 \xrightarrow{} S_0,\;S_0+K \rightleftharpoons 0 \xrightarrow{} K$,
which constitutes the unique degenerate structure in $2$-species non-redundant networks. 
The fundamental distinction lies in their system invariants: network \hyperref[fig:1b]{(\subref{fig:1b})} exhibits a strict anti-coupling where $\dot{S}_1 = -\dot{S}_0$, effectively reducing the system dynamics; whereas network \hyperref[fig:1a]{(\subref{fig:1a})} involves a full-state conservation law $\dot{S}_0+\dot{S}_1=\dot{K}$, implying a tighter coupling among all $3$ species.
   It appears counterintuitive that  $S_1$ maintains ACR in network \hyperref[fig:1a]{(\subref{fig:1a})} despite its dynamics depending on other $2$ species, seemingly contradicting the notion that conservation laws hinder robustness.
 However, we demonstrate that this conservative topology is the sole exception in the $2$D non-redundant family ($s^*=3$). 
These claims are rigorously established later in Lemma \ref{lm:2d,3s,equival,has ACR in X1,X2,X3 i.f.f.} (i). 

    Notice that matrices \eqref{mat:3}--\eqref{mat:5} exactly correspond to Fig.~\hyperref[fig:1c]{\ref{fig:total_networks} (\subref{fig:1c})}, \hyperref[fig:1d]{(\subref{fig:1d})} and \hyperref[fig:1e]{(\subref{fig:1e})}.
     Networks \hyperref[fig:1c]{(\subref{fig:1c})}--\hyperref[fig:1e]{(\subref{fig:1e})} represent variations of the carbon nanotube rope model, characterized by an ACR species $X$. Structurally, removing $X$ from these networks reduces the dynamics of the remaining species ($C_0,\;C_e$) to a $1$D manifold. Physically, this corresponds to transitions between ground states of fractional or integral spin \cite{Cobden1998}, mediated by the absorption or release of a spin-$1/2$ electron ($X$). While the classic model \cite[Example 14]{Joshi2021} corresponds to the unique fully coupled subnetwork of \hyperref[fig:1e]{(\subref{fig:1e})} (obtained by removing blue-arrow reactions), we demonstrate that any consistent subnetwork of \hyperref[fig:1e]{(\subref{fig:1e})} inherently admits ACR. Furthermore, regarding structural robustness, introducing generic inflow-outflow reactions typically destroys this ACR property; persistence is guaranteed only if the system topology simplifies to that of \hyperref[fig:1c]{(\subref{fig:1c})} or \hyperref[fig:1e]{(\subref{fig:1e})}.
The formal mathematical treatment of these properties is detailed in Lemma
\ref{lm:2d,3s,equival,has ACR in X1,X2,X3 i.f.f.} (ii).

Theorem \ref{thm:2d,3s,3s^* with ACR table} establishes that any $s$-species ($s\geq 3$) network with non-vacuous ACR is a natural species refinement (Definition~\ref{def:species_refinement}) of a $3$-species network admitting non-vacuous ACR. In this construction, the $3$-species network serves as the lumping network; selected non-ACR species are replaced by sums of new species through a natural species refinement map $\phi$, while the ACR species is left unchanged. For instance, take the lumping network \hyperref[fig:1a]{(\subref{fig:1a})} and define $\phi$ by replacing $S_0$ and $K$ with the complexes $S_0+\sum_{i=1}^n S^{(i)}_0$ and $K+\sum_{i=1}^{\ell}K^{(i)}$, while fixing $S_1$:
\begin{equation*}S_0+\sum_{i=1}^n S^{(i)}_0+K+\sum_{i=1}^{\ell}K^{(i)} \rightleftharpoons 0 \xrightarrow{} K+\sum_{i=1}^{\ell}K^{(i)}+S_1,\;\; S_1 \xrightarrow{} S_0+\sum_{i=1}^n S^{(i)}_0.\end{equation*}
Crucially, each newly added species $X\in\{S^{(i)}_0,\;K^{(i)}\}$ inherits the stoichiometric row of its parent species in the lumping network, so that $\dot{X}\in\{\dot{S}_0,\;\dot{K}\}$. This is exactly the constraint imposed by a natural species refinement, and it ensures that the refined network preserves non-vacuous ACR in $S_1$. In contrast, introducing an intermediate species $SK$ with $\dot{SK}=-\dot{K}$ yields:
\begin{equation*}S_0+K \rightleftharpoons SK \xrightarrow{} K+S_1,\;\; S_1 \xrightarrow{} S_0.\end{equation*}
Here the stoichiometric matrix has four distinct rows, reflecting the four distinct types of concentration evolution $\dot{S}_0$, $\dot{K}$, $\dot{S}_1$, and $\dot{SK}$. The resulting network satisfies $s^*\geq 4$, so non-vacuous ACR in $S_1$ is precluded by Theorem~\ref{thm:2d,s*≥4 no ACR} in the next section.

\begin{remark}\label{rem:2d,3S*,ACR only in one speices}
  From Theorem \ref{thm:2d,3s,3s^* with ACR table} one can see that if $G$ has non-vacuous ACR, then $G$ has non-vacuous ACR only in one species, see the proof later in Corollary \ref{coro:2d,3s,N1N2-N1 has no ACR in X1 and X2}. 
   The same conclusion also holds for the $1$D case and the cases for $s^*=2$ in the previous section. 
\end{remark}

Below, we prepare lemmas for proving Theorem \ref{thm:2d,3s,3s^* with ACR table}.
\begin{lemma}\label{lm:2d,s,degenerate}
    Suppose $G$ is a $2$D non-redundant zero-one network with $s$ species. Assume  that the first two rows of the stoichiometric matrix $\mathcal{N}$ ($\mathcal{N}_1$, $\mathcal{N}_2$) are linearly independent. Then, $G$ is degenerate if and only if  the following two conditions hold simultaneously.   \begin{enumerate}
        \item [(i)] {$\begin{pmatrix}
            \mathcal{N}_1\\\mathcal{N}_2
        \end{pmatrix}$ is equivalent to $\left(\begin{array}{rrrr}
           \;\;1&\;\;0&-1&\;\;1\\\;\;0&\;\;1&-1&\;\;1
        \end{array}\right)$ or $\left(\begin{array}{rrr}
            \;\;1&\;\;0&-1\\\;\;0&\;\;1&-1
        \end{array}\right)$. }
        \item [(ii)] {$s^*=2$. }
    \end{enumerate}
\end{lemma}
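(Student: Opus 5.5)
We treat the two directions separately, reducing everything to the $2$-species case of Lemma~\ref{lm:2d,2s,degnerate if and only if 1378=0} through the partial transformed Jacobian and Lemma~\ref{lm:degenerate equivals to A is zero poly}.

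For the direction ($\Leftarrow$), assume (i) and (ii). Since $s^*=2$, every row $\mathcal{N}_k$ equals $\mathcal{N}_1$ or $\mathcal{N}_2$. Because the network is non-redundant and zero-one, $\mathcal{N}$ determines $\mathcal{Y}$. Hence every row of $\mathcal{Y}$ is also a copy of $\mathcal{Y}_1$ or $\mathcal{Y}_2$. Then in $A(\lambda)=\mathcal{N}\,diag(\mu)\,\mathcal{Y}^\top$ each entry $A_{kl}$ equals $A_{\sigma(k)\sigma(l)}$, where $\sigma(k)\in\{1,2\}$ records which row $\mathcal{N}_k$ copies. Consequently every $2\times2$ principal minor $\det A[\{k,l\},\{k,l\}]$ is either $\det$ of the $2$-species block $A[\{1,2\},\{1,2\}]$ (when $\sigma(k)\neq\sigma(l)$), or a minor with two equal rows, hence zero. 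By (i) and Lemma~\ref{lm:2d,2s,degnerate if and only if 1378=0}, the $2$-species network $\widetilde G$ given by rows $1,2$ is degenerate. Since $\widetilde G$ has no conservation law, its flux cone is the same as that of $G$ (the kernel of $\mathcal{N}$ is determined by rows $1,2$). Lemma~\ref{lm:degenerate equivals to A is zero poly} applied to $\widetilde G$ then says $\det A[\{1,2\},\{1,2\}]\equiv0$. So all principal minors of $G$ vanish, and $G$ is degenerate. Alternatively, one can use the explicit system \eqref{sys:2d,s,degenerate}.

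For the direction ($\Rightarrow$), assume $G$ is degenerate. We first show $s^*=2$. Suppose instead that some row $\mathcal{N}_k$ differs from both $\mathcal{N}_1$ and $\mathcal{N}_2$, so that $\mathcal{N}_k=a\mathcal{N}_1+b\mathcal{N}_2$. The plan is to exhibit a nonvanishing principal $2\times2$ minor, among pairs $\{1,k\}$, $\{2,k\}$, $\{1,2\}$, by computing $A(\lambda)$ from the columns of $\mathcal{N}$. Zero-one entries with rank $2$ leave only finitely many column patterns (entries in $\{-1,0,1\}$, columns of the $3$-row restriction). The same style of case analysis as in the proof of Lemma~\ref{lm:2d,2s,degnerate if and only if 1378=0} then applies, using that the minors $\det A[\{k,l\},\{k,l\}]$ are sums over pairs of flux vectors of nonnegative-coefficient-type terms. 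It is simpler to argue directly with the Jacobian $Jac_h$ of the $3$-species subnetwork on rows $1,2,k$, or with the polynomial $\sum_{|I|=2}\det Jac_f[I,I]$. Each zero-one, non-redundant $3$-row pattern with three distinct rows has to be shown nondegenerate whenever it is consistent. Once $s^*=2$ is established, the computation from the first direction shows that degeneracy of $G$ is equivalent to degeneracy of $\widetilde G$. Lemma~\ref{lm:2d,2s,degnerate if and only if 1378=0} then gives (i).

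The main obstacle is the $s^*\ge3$ exclusion: we must show that adding a third distinct row always produces a nonzero principal minor. We would try first to reduce it by noting that a degenerate $G$ forces every $2$-species projection $\{1,2\}$, $\{1,k\}$, $\{2,k\}$ of rank $2$ to be degenerate. By Lemma~\ref{lm:2d,2s,degnerate if and only if 1378=0}, each such pair must then be one of the two listed patterns. We would check that no three distinct rows can pairwise realize those patterns simultaneously, since the column sign structure $(1,0,-1,1),(0,1,-1,1)$ is incompatible with a third distinct zero-one row. For pairs that are linearly dependent, we would use the pair whose rows are independent instead. A subtlety to handle is that the projected networks may acquire redundancies or repeated columns, which must be handled when invoking the lemma.
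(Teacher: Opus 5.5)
Your argument is essentially the paper's. For ($\Leftarrow$), you use that rows of $\mathcal{N}$, hence of $\mathcal{Y}$, are copies of rows $1,2$, so every principal $2\times 2$ minor of $A(\lambda)$ is either $0$ or $\pm\det A(\lambda)[\{1,2\},\{1,2\}]$. For ($\Rightarrow$), your final paragraph is exactly how the paper proceeds: degeneracy of $G$ forces the $\{1,2\}$ projection to be one of the two patterns, giving (i), and a third distinct row is ruled out because some rank-two pair $\{i,k\}$ then projects to a non-pattern, hence nondegenerate, $2$-species network. The enumeration of all three-row patterns in your middle paragraph is an unnecessary detour. Finish instead with the four-case check $\mathcal{N}_k\in\{-\mathcal{N}_1,-\mathcal{N}_2,\pm(\mathcal{N}_1-\mathcal{N}_2)\}$, which (i) and the zero-one entries force. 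When $\mathcal{N}_k=-\mathcal{N}_1$, use the pair $\{2,k\}$, and when $\mathcal{N}_k=-\mathcal{N}_2$, use $\{1,k\}$.
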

\begin{proof}
    Let $G_{/3,\ldots,s}$ denote the network obtained by removing species $X_3$, $\cdots$, $X_s$ from $G$. Hence, the stoichiometric matrix of $G_{/3,\ldots,s}$ is composed of ${\mathcal N}_1$ and ${\mathcal N}_2$. Let $A(\lambda)$ and $A_{/3,\ldots,s}(\lambda)$ denote the partial transformed Jacobian matrix of $G$ and $G_{/3,\ldots,s}$, respectively.  
    
    ($\Leftarrow$) Since condition (i) holds, by Lemma \ref{lm:2d,2s,degnerate if and only if 1378=0}, $G_{/3,\ldots,s}$ is degenerate. Hence, by Lemma \ref{lm:degenerate equivals to A is zero poly}, we know that $det(A(\lambda)[\{1,2\},\;\{1,2\}])=det\left(A_{/3,\ldots,s}(\lambda)\right)$ is the zero polynomial. Since condition (ii) holds, for any $i,\;j\in\{1,\;\ldots,\;s\}$, either $\mathcal{N}_i=\mathcal{N}_j$ or $\{\mathcal{N}_i,\;\mathcal{N}_j\}=\{\mathcal{N}_1,\;\mathcal{N}_2\}$. If $\mathcal{N}_i=\mathcal{N}_j$, then $det(A(\lambda)[\{i,j\},\;\{i,j\}])$ is the zero polynomial. If $\{\mathcal{N}_i,\;\mathcal{N}_j\}=\{\mathcal{N}_1,\;\mathcal{N}_2\}$, then we have $det(A(\lambda)[\{i,j\},\;\{i,j\}])=\pm~det(A(\lambda)[\{1,2\},\;\{1,2\}])$. Therefore, by Lemma \ref{lm:degenerate equivals to A is zero poly}, $G$ is degenerate. 
    
    ($\Rightarrow$) Since $G$ is degenerate, by Lemma \ref{lm:degenerate equivals to A is zero poly},  $G_{/3,\ldots,s}$ is also degenerate. Hence, by Lemma \ref{lm:2d,2s,degnerate if and only if 1378=0},  condition (i) holds.
    Since $G$ is zero-one, for all $k\in\{3,\;\ldots,\;s\}$, \begin{equation*}
        \mathcal{N}_k\in\{\mathcal{N}_1,\;\mathcal{N}_2,\;-\mathcal{N}_1,\;-\mathcal{N}_2,\;\mathcal{N}_1-\mathcal{N}_2,\;-\mathcal{N}_1+\mathcal{N}_2\}. 
    \end{equation*}
    If condition (ii) does not hold, i.e., there exists $k\in\{3,\;\ldots,\;s\}$ such that $\mathcal{N}_k\notin\{\mathcal{N}_1,\;\mathcal{N}_2\}$, then it is easy to check that for any $i\in\{1,\;2\}$, $\left(\begin{smallmatrix}
        \mathcal{N}_i\\\mathcal{N}_k
    \end{smallmatrix}\right)$ is rank two and satisfies $\left(\begin{smallmatrix}
        \mathcal{N}_i\\\mathcal{N}_k
    \end{smallmatrix}\right)\notin\left\{\left(\begin{smallmatrix}
        \;\;1&\;\;0&-1&\;\;1\\\;\;0&\;\;1&-1&\;\;1
    \end{smallmatrix}\right),\;\left(\begin{smallmatrix}
        \;\;1&\;\;0&-1\\\;\;0&\;\;1&-1
    \end{smallmatrix}\right)\right\}$. Hence, by Lemma \ref{lm:2d,2s,degnerate if and only if 1378=0}, the $2$-species network  corresponding to the two rows $
        \mathcal{N}_i$ and $\mathcal{N}_k$ is nondegenerate. Hence, by Lemma \ref{lm:degenerate equivals to A is zero poly}, $G$ is nondegenerate, which leads to a contradiction.     
\end{proof}

\begin{lemma}\label{lm:if 2s degenerate, then 3-th species has ACR}
    Suppose $G$ is a $2$D non-redundant zero-one network with $s=s^*=3$. If the network obtained by removing species $X_3$ from $G$, say $G_{/3}$, is $2$D, then $G$ has non-vacuous ACR in species $X_3$ for any generic choice of rate constants if and only if $G_{/3}$ is degenerate. 
\end{lemma}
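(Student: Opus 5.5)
The forward direction is the contrapositive of Theorem~\ref{thm:main}: if $G_{/3}$ is $2$D and nondegenerate, then for generic $\kappa$ the network $G$ has no non-vacuous ACR in $X_3$. So the work is in the converse, and I would prove it in three steps.

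\textbf{Step 1: pin down $G_{/3}$ and the shape of $G$.} Assume $G_{/3}$ is $2$D and degenerate. By Lemma~\ref{lm:2d,2s,degnerate if and only if 1378=0}, the rows $\mathcal{N}_1,\mathcal{N}_2$ form, up to column permutation, $\left(\begin{smallmatrix}1&0&-1&1\\0&1&-1&1\end{smallmatrix}\right)$ or its first three columns. The reactions of $G$ are then the reactions $0\to X_1$, $0\to X_2$, $X_1+X_2\to 0$ and possibly $0\to X_1+X_2$, each with $X_3$ attached on some side or absent. Since $s^*=3$ and the network is zero-one, $\mathcal{N}_3$ is a distinct row in the span of $\mathcal{N}_1,\mathcal{N}_2$ with entries in $\{0,\pm1\}$. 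As in the proof of Lemma~\ref{lm:2d,s,degenerate}, this gives $\mathcal{N}_3\in\{-\mathcal{N}_1,-\mathcal{N}_2,\pm(\mathcal{N}_1-\mathcal{N}_2)\}$. Non-redundancy must also hold, so $X_3$ never appears on both sides of a reaction. Among these four candidate rows, $\pm(\mathcal{N}_1-\mathcal{N}_2)$ and $-\mathcal{N}_2$ are equivalent to $-\mathcal{N}_1$ or $\mathcal{N}_1-\mathcal{N}_2$ after swapping $X_1$ and $X_2$. This should reduce the situation to the two shapes \eqref{mat:1} and \eqref{mat:2}, each possibly with the last column deleted.

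\textbf{Step 2: show $x_3$ is forced by $\kappa$ alone.} For each shape I would write the steady-state system directly. The key structural fact is that each reaction of $G$ fires with monomial $x_1^{a}x_2^{b}x_3^{c}$, where $(a,b)$ are the reactant exponents from $G_{/3}$. In $G_{/3}$ the degeneracy appears as $\tilde f_1-\tilde f_2=\kappa_1-\kappa_2$, which is independent of $x_1$ and $x_2$. In $G$ the corresponding combination $f_1-f_2$ becomes a combination of $\kappa_1,\kappa_2$ multiplied by powers of $x_3$, because $X_3$ is attached to the inflow reactions.

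For \eqref{mat:1}, the column $(-1,1,0)^{\top}$ is $X_1\to X_2$, and the $-1$ entry of $\mathcal{N}_3$ sits in the $X_1+X_2\to 0$ column, giving $X_1+X_2+X_3\to 0$; the other columns contain no $X_3$. Then the difference $f_1-f_2$ should isolate a term like $\kappa_{\cdot}x_3-\kappa_{\cdot}$, or more generally $\kappa_{\cdot}x_3^{c}-\kappa_{\cdot}$. Setting it to zero gives $x_3$ as an explicit positive function of $\kappa$, for instance $x_3=\kappa_2/\kappa_1$, mirroring the worked example after Remark~\ref{rem:s-1 degenerate, x_i has ACR}. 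The general principle is to find a vector $w\perp$ the columns of $\mathcal{N}$ restricted to $\{1,2\}$ but not $\perp$ all of $\mathcal{N}$. Then $w^{\top}f$ involves only monomials in $x_3$, because the $x_1,x_2$-dependent fluxes cancel.

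\textbf{Step 3: check consistency.} Non-vacuity requires a positive steady state for generic $\kappa$. For each shape I would exhibit one explicitly: fix $x_3$ from Step 2, then solve the remaining equation for $x_1x_2$ or $x_1/x_2$, which gives a positive solution.

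The main obstacle is the case enumeration in Step 1. It has to be exhaustive: all admissible placements of $X_3$ among the columns must be covered, including those where a column of $G_{/3}$ such as $0\to X_1$ becomes $X_3\to X_1$. Each placement must satisfy both $s^*=3$ and the requirement that deleting $X_3$ recovers $G_{/3}$. To keep this from becoming a blind listing, I would organize the cases by which entries of $\mathcal{N}_3$ are nonzero, and discard any case that creates a fourth distinct row or violates non-redundancy.
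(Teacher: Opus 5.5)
Your forward direction is exactly the paper's. For the converse your route is correct but genuinely different, and your Step 2 contains a misreading that you need to fix.

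\textbf{Comparison with the paper.} The paper never determines the shape of $G$ in this lemma. It argues that $G$ itself must be nondegenerate, since otherwise Lemma~\ref{lm:2d,s,degenerate} would force $s^*=2$. It then imports generic monostationarity from \cite[Theorem 3]{Jiao2025} and \cite[Theorem A]{Feliu2024}. Finally it applies the partial converse recorded in Remark~\ref{rem:s-1 degenerate, x_i has ACR}. You instead front-load the classification that the paper carries out only afterwards, in Lemma~\ref{lm:2d,3s,equival,has ACR in X1,X2,X3 i.f.f.}, and then eliminate by hand. The paper's argument is shorter and structural. However, it leans on external results and on a remark whose justification is only sketched, and it leaves non-vacuity implicit. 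Your argument is self-contained, gives the ACR value explicitly, holds for every $\kappa\in\mathbb{R}^m_{>0}$ rather than only generic $\kappa$, and checks consistency directly.

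\textbf{Misreading of \eqref{mat:1}.} The vector $(-1,1,0,0)$ is the third row $\mathcal{N}_3$ of \eqref{mat:1}, not a column. The actual reactions are $X_3\to X_1$, $0\to X_2+X_3$, $X_1+X_2\to 0$ and $0\to X_1+X_2$. So $X_3$ sits on the inflow reactions and never on the $X_1+X_2$ reactions. The network you describe ($0\to X_1$, $0\to X_2$, $X_1+X_2+X_3\to 0$, $X_1\to X_2$) has rank $3$, and its $f_1-f_2$ does not isolate $x_3$.

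\textbf{Your general principle as stated is vacuous.} No nonzero $w$ supported on $\{1,2\}$ is orthogonal to all columns of the rank-$2$ matrix formed by $\mathcal{N}_1,\mathcal{N}_2$. What you need is a $w$ orthogonal to every column whose reactant complex contains $X_1$ or $X_2$. Here $w=(1,-1,0)$ works: it annihilates the columns $\pm(1,1,\ast)^{\top}$ and leaves only the two inflow columns.

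\textbf{The enumeration is not the obstacle you expect.} Write $\mathcal{N}_3=a\mathcal{N}_1+b\mathcal{N}_2$ with $(a,b)\in\{(-1,0),(0,-1),(1,-1),(-1,1)\}$. The reaction whose first two entries are $(1,0)$ consumes $X_3$ iff $a=-1$, and the one with $(0,1)$ iff $b=-1$. Exactly one of these holds for every admissible $(a,b)$. Hence, uniformly and without splitting into shapes, $f_1-f_2$ equals $\kappa_1x_3-\kappa_2$ or $\kappa_1-\kappa_2x_3$. You must state this ``exactly one'' check explicitly: equal exponents would give inconsistency, not ACR.

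\textbf{Non-vacuity.} Since $f_3=af_1+bf_2$, the only remaining equation is $f_2=0$. With $x_3$ fixed, it gives $x_1x_2$ as an explicit positive quantity, which settles non-vacuity for every positive $\kappa$.
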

\begin{proof}
    ($\Rightarrow$) Assume that $G$ has non-vacuous ACR for any generic choice of rate constants in $X_3$. Notice that both $G_{/3}$ and $G$ are $2$D. Hence, by Theorem \ref{thm:main}, $G_{/3}$ is degenerate. 
    
    ($\Leftarrow$) Assume that $G_{/3}$ is degenerate. By Lemma \ref{lm:2d,s,degenerate}, if $G$ is degenerate, then $G$ satisfies $s^*=2$, which contradicts the assumption that $G$ satisfies $s^*=3$. Hence, $G$ is nondegenerate. By \cite[Theorem 3]{Jiao2025} and \cite[Theorem A]{Feliu2024}, for any generic choice of rate constants, $G$ has exactly one positive steady state. Then, by Remark \ref{rem:s-1 degenerate, x_i has ACR}, $G$ has non-vacuous ACR in species $X_3$ for any generic choice of rate constants. 
\end{proof}

In matrices \eqref{mat:3}--\eqref{mat:5}, the yellow, purple, orange, and green columns represent the pairs $\mathcal{B}_1$, $\mathcal{B}_2$, $\mathcal{B}_3$, and $\mathcal{B}_4$, respectively, where $\mathcal{B}_1=\left\{\begin{smallmatrix}-1\; &\;\;1\\
    \;\;1, & -1 \\\;\; 0\; &\;\;0\end{smallmatrix}\right\}$, $\mathcal{B}_2=\left\{\begin{smallmatrix}
        \;\;0\; & \;\;0\\
        \;\;0, & \;\;0\\
        -1\; & \;\;1        \end{smallmatrix}\right\}$, $\mathcal{B}_3=\left\{\begin{smallmatrix}
        -1\; & \;\;1\\
        \;\;1, & -1\\
        -1\; & \;\;1    \end{smallmatrix}\right\}$, and $\mathcal{B}_4=\left\{\begin{smallmatrix}
        -1\;&\;\;1\\\;\;1, & -1\\
        \;\;1\; & -1
\end{smallmatrix}\right\}$. 
    
\begin{lemma}\label{lm:if 2s 1d, then 3-th species has ACR}
    Suppose $G$ is a $2$D non-redundant zero-one network with $s=s^*=3$. Assume that  the network obtained by removing species $X_3$ from $G$, say  $G_{/3}$, is $1$D. Let $\mathcal{N}$ denote the stoichiometric matrix of $G$. 
      \begin{enumerate}
     \item [(i)] If $\mathcal{N}$ draws vectors from at least three of $\mathcal{B}_1$, $\mathcal{B}_2$, $\mathcal{B}_3$, and $\mathcal{B}_4$ with $\mathcal{B}_2$ included, then $G$ has no non-vacuous ACR in species $X_3$ for any generic choice of rate constants. 
     \item [(ii)] If $\mathcal{N}$ draws vectors from exactly two of $\mathcal{B}_1$, $\mathcal{B}_2$, $\mathcal{B}_3$, and $\mathcal{B}_4$ with $\mathcal{B}_2$ included, then either $G$ is not consistent, or $G$ has non-vacuous ACR in species $X_3$ for any generic choice of rate constants. 
     \item[(iii)] If $\mathcal{N}$ contains no vectors from $\mathcal{B}_2$, then  either $G$ is not consistent, or $G$ has non-vacuous ACR in species $X_3$ for any generic choice of rate constants. 
    \end{enumerate}
\end{lemma}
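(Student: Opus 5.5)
Since $G_{/3}$ is $1$D and $s^*=3$, the rows $\mathcal{N}_1,\mathcal{N}_2$ are distinct, nonzero and proportional with entries in $\{-1,0,1\}$. Hence $\mathcal{N}_2=-\mathcal{N}_1$, and every column of $\mathcal{N}$ has the form $(a,-a,b)^{\top}$ with $(a,b)\neq(0,0)$. These are exactly the eight columns in $\mathcal{B}_1\cup\cdots\cup\mathcal{B}_4$. Non-redundancy fixes the reactant matrix, so I will work with the ``full'' network consisting of
\begin{itemize}
\item[] $X_1\xrightleftharpoons[\kappa_2]{\kappa_1}X_2$ ($\mathcal{B}_1$),
\item[] $X_3\xrightleftharpoons[\kappa_4]{\kappa_3}0$ ($\mathcal{B}_2$),
\item[] $X_1+X_3\xrightleftharpoons[\kappa_6]{\kappa_5}X_2$ ($\mathcal{B}_3$),
\item[] $X_1\xrightleftharpoons[\kappa_8]{\kappa_7}X_2+X_3$ ($\mathcal{B}_4$).
\end{itemize}
Any $G$ in the lemma is obtained by setting some $\kappa_j=0$, as in the proof of Lemma~\ref{lm:2d,2s,degnerate if and only if 1378=0}. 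The conservation law is $x_1+x_2=c$, and
\[
f_2=x_1(\kappa_1+\kappa_5x_3+\kappa_7)-x_2(\kappa_2+\kappa_6+\kappa_8x_3).
\]
Since row $3$ equals $-$row $2$ on $\mathcal{B}_3$ and $+$row $2$ on $\mathcal{B}_4$, we get
\[
f_3=\kappa_4-\kappa_3x_3+(\text{$\mathcal{B}_3,\mathcal{B}_4$ terms}).
\]

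From $f_2=0$ I solve $x_1=x_2R(x_3)$ with $R=(\kappa_2+\kappa_6+\kappa_8x_3)/(\kappa_1+\kappa_5x_3+\kappa_7)$. Substituting gives
\[
f_3=\kappa_4-\kappa_3x_3+x_2\,\frac{q(x_3)}{\kappa_1+\kappa_5x_3+\kappa_7},
\]
where a direct expansion yields
\[
q(x_3)=-2\kappa_5\kappa_8x_3^2-(\kappa_2\kappa_5+\kappa_1\kappa_8)x_3+(\kappa_1\kappa_6+2\kappa_6\kappa_7+\kappa_2\kappa_7).
\]
The three cases then go as follows.

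\emph{Case (iii), no $\mathcal{B}_2$.} Here $\kappa_3=\kappa_4=0$, so positive steady states require $q(x_3)=0$. The leading and linear coefficients of $q$ are nonpositive and the constant term is nonnegative, so by Descartes' rule $q$ has at most one positive root. Thus either $G$ is inconsistent, or $x_3$ is that unique root for every $c$, which is non-vacuous ACR.

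\emph{Case (ii), $\mathcal{B}_2$ plus exactly one other pair.} If the other pair is $\mathcal{B}_1$, then $f_3=\kappa_4-\kappa_3x_3$ directly. If it is $\mathcal{B}_3$ or $\mathcal{B}_4$, then $f_3=\kappa_4-\kappa_3x_3\mp f_2$, so at steady state $x_3=\kappa_4/\kappa_3$. Consistency forces both $\mathcal{B}_2$ columns to be present, and ACR follows.

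\emph{Case (i), $\mathcal{B}_2$ plus at least two of $\mathcal{B}_1,\mathcal{B}_3,\mathcal{B}_4$.} I would parametrize steady states by $x_3$:
\[
x_2=\frac{(\kappa_3x_3-\kappa_4)(\kappa_1+\kappa_5x_3+\kappa_7)}{q(x_3)},\qquad x_1=x_2R(x_3),\qquad c=x_1+x_2.
\]
The key step is to show that $q$ is nonconstant with a positive root $x^\dagger$ whenever $G$ is consistent. For example, $\mathcal{B}_1+\mathcal{B}_3$ gives $q=-\kappa_2\kappa_5x_3+\kappa_1\kappa_6$, and $\mathcal{B}_1+\mathcal{B}_4$ gives $q=\kappa_2\kappa_7-\kappa_1\kappa_8x_3$. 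For generic $\kappa$ we have $x^\dagger\neq\kappa_4/\kappa_3$. On the open interval between $x^\dagger$ and $\kappa_4/\kappa_3$, the numerator factor $\kappa_3x_3-\kappa_4$ and $q$ have the same sign, so $x_2>0$. This gives a continuum of positive steady states with different $x_3$ values. Hence $x_3$ is not constant over the steady-state set, and there is no non-vacuous ACR.

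The main obstacle is this last sign analysis. One must check, over all sub-choices of columns within the chosen pairs (possibly only one direction of a pair is present), that either $G$ is inconsistent or the interval above is nonempty. If a subcase breaks the pattern, I would fall back on a direct check of the linear or quadratic $q$ in that subcase.
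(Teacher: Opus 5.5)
Your proposal is correct. Parts (ii) and (iii) follow the paper's argument; part (i) takes a different route.

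\textbf{Parts (ii) and (iii).} Your $q$ is exactly $-\mathcal{J}$. Part (iii) rests on the same ``at most one positive root'' observation, and (ii) is the paper's rank-one argument written out explicitly. In (iii) you should add that $q\not\equiv 0$. Consistency requires a rate from $\{\kappa_1,\kappa_5,\kappa_7\}$ and one from $\{\kappa_2,\kappa_6,\kappa_8\}$, and $q\equiv 0$ would then leave only a single pair $\mathcal{B}_j$, i.e.\ a $1$D network.

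\textbf{Part (i): comparison of routes.} The paper argues by contradiction. ACR at a value $x_3^*$ is taken to force $\mathcal{J}(\kappa^*,x_3^*)=0$. This tacitly uses that $x_3^*$ is attained by at least two steady states, so that the linear system in $(x_1,x_2)$ is singular. The second equation then gives $\kappa_3x_3^*=\kappa_4$, which fails generically because the root of $\mathcal{J}$ does not involve $\kappa_3,\kappa_4$.

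You instead parametrize the steady-state set by $x_3$. The only steady states your formula for $x_2$ misses satisfy $q(x_3)=0=\kappa_3x_3-\kappa_4$. These are excluded by the same genericity condition $q(\kappa_4/\kappa_3)\neq 0$. Your argument is self-contained and supplies the step the paper leaves implicit.

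\textbf{Two corrections in part (i).} First, your announced key step is false as stated.
\begin{itemize}
\item With only the rates $\kappa_3,\kappa_4,\kappa_6,\kappa_7$ present, one gets $q=2\kappa_6\kappa_7$.
\item With only $\kappa_2,\kappa_3,\kappa_4,\kappa_5$ present, one gets $q=-\kappa_2\kappa_5x_3$.
\end{itemize}
Both networks are consistent and fall under (i), yet $q$ has no positive root. Your sign argument still works with $x^\dagger$ replaced by $+\infty$, resp.\ $0$.

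Second, the sign analysis you flag as the main obstacle is not needed. Whenever a positive steady state exists, $\kappa_1+\kappa_5x_3+\kappa_7$ and $R$ are positive on $(0,\infty)$. So for generic $\kappa$ the set of steady-state values of $x_3$ is $\{x_3>0:\ q(x_3)\neq 0,\ (\kappa_3x_3-\kappa_4)/q(x_3)>0\}$, an open subset of $(0,\infty)$. This set is therefore empty or infinite, and either way $G$ has no non-vacuous ACR in $X_3$.
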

\begin{proof}
 Notice that $G$ is a $2$D non-redundant zero-one network with $3$ species. Without loss of generality, we assume that $\mathcal{N}_2=-\mathcal{N}_1$. Hence, $\mathcal{N}$ consists of at least $2$ linearly independent columns from the following matrix 
  \begin{equation}\label{vector pairs: B1234}
        \mathcal{N}_{max}\equals\left(\begin{array}{rrrrrrrr}
            -1&\;\;1&\;\;0&\;\;0&-1&\;\;1&-1&\;\;1\\
            \;\;1&-1&\;\;0&\;\;0&\;\;1&-1&\;\;1&-1\\
            \;\;0&\;\;0&-1&\;\;1&-1&\;\;1&\;\;1&-1
        \end{array}\right).
    \end{equation}
    Let $G_{max}$ be the non-redundant zero-one network corresponding to ${\mathcal N}_{max}$. Then,  the steady-state system augmented by the conservation law is \begin{equation}\label{sys:N1 N2 -N1}
        \begin{cases}
h_1 &= -\kappa_1x_1+\kappa_2x_2-\kappa_5x_1x_3+\kappa_6x_2-\kappa_7x_1+\kappa_8x_2x_3,\\
h_2 &=\;\;\;x_1+x_2-c,\\
h_3 &= -\kappa_3x_3+\kappa_4-\kappa_5x_1x_3+\kappa_6x_2+\kappa_7x_1-\kappa_8x_2x_3.
\end{cases}
    \end{equation}
    Notice that there exists no monomial  $x_1 x_2$ in \eqref{sys:N1 N2 -N1} (since ${\mathcal N}_1=-{\mathcal N}_2$). By collecting the terms containing $x_1$ and $x_2$, we can rewrite the steady-state equations $h_1=0$ and $h_3=0$ in the following form 
   \begin{equation}\label{sys:all with 34}
        \begin{cases}
        \frac{\partial h_1}{\partial x_2}~x_2 &= -\frac{\partial h_1}{\partial x_1}~x_1,\\
        \frac{\partial h_3}{\partial x_2}~x_2 &= -\frac{\partial h_3}{\partial x_1}~x_1+\kappa_3x_3-\kappa_4.  
        \end{cases}
    \end{equation} 
    We define \begin{equation}\label{al:defJx3}
    \mathcal{J}(\kappa,\;x_3) :=
    \begin{vmatrix}
        \dfrac{\partial h_1}{\partial x_1} & \dfrac{\partial h_1}{\partial x_2} \\[6pt]
        \dfrac{\partial h_3}{\partial x_1} & \dfrac{\partial h_3}{\partial x_2}
    \end{vmatrix}.
\end{equation}
    By \eqref{sys:N1 N2 -N1} and \eqref{al:defJx3}, it is easy to obtain 
        \begin{equation}\label{al:Jx3}
        \mathcal{J}=2\kappa_5\kappa_8~x_3^2+(\kappa_2\kappa_5+\kappa_1\kappa_8)~x_3-(\kappa_1\kappa_6+\kappa_2\kappa_7+2\kappa_6\kappa_7). 
    \end{equation}

    \textit{(i)} Suppose $\mathcal{N}$ is composed of vectors from at least three of $\mathcal{B}_1$, $\mathcal{B}_2$, $\mathcal{B}_3$, and $\mathcal{B}_4$. 
         Notice that if any vector in $\mathcal{B}_i$ ($i\in\{1,\;2,\;3,\;4\}$) is not contained in $\mathcal{N}$, then $\kappa_{2i-1}$ and $\kappa_{2i}$ vanish in $h$ and ${\mathcal J}$.  By \eqref{al:Jx3}, the equation $\mathcal{J}(\kappa,\;x_3)=0$ defines a function $x_3=x_3(\kappa)$, and 
         the function contains no variables in $\{\kappa_3,\;\kappa_4\}$.
         When $G$ is consistent, 
         we prove by contradiction that $G$ admits no non-vacuous ACR in species $X_3$ generically. Assume that there exists a generic choice of rate constants $\kappa^*$ such that $G$ has non-vacuous ACR in species $X_3$. Hence, we have $\mathcal{J}(\kappa^*,\;x_3(\kappa^*))=0$ for all positive steady states. By substituting $x_3=x_3(\kappa^*)$ into \eqref{sys:all with 34}, we have \begin{equation}\label{al:k3x3-x4}
            \kappa^*_3x_3(\kappa^*)-\kappa^*_4=0. 
        \end{equation}
        Notice that $\mathcal{N}$ contains vectors selected from $\mathcal{B}_2$. Hence, \eqref{al:k3x3-x4} contradicts the assumption that $\kappa^*$ is a generic choice of rate constants.

      \textit{(ii)}  Suppose $\mathcal{N}$ is composed of vectors from exactly two of $\mathcal{B}_1$, $\mathcal{B}_2$, $\mathcal{B}_3$, and $\mathcal{B}_4$ with the two selected sets containing $\mathcal{B}_2$. Then, the Jacobian matrix $\mathcal{J}(\kappa,\;x_3)$ defined in \eqref{al:Jx3} becomes rank-$1$, and by \eqref{sys:all with 34}, 
       we can obtain $x_3=\frac{\kappa_4}{\kappa_3}$. 
    Hence, $G$ has non-vacuous ACR in species $X_3$ for any generic choice of rate constants. 
    
      \it{(iii)} 
       Suppose $\mathcal{N}$ contains no vectors from $\mathcal{B}_2$. Hence, the reactions assigned to $\kappa_3$ and $\kappa_4$ do not appear in the network $G$. So, \eqref{sys:all with 34} becomes \begin{equation}\label{sys:all without 34}
        \begin{cases}
        \frac{\partial h_1}{\partial x_2}~x_2 &= -\frac{\partial h_1}{\partial x_1}~x_1,\\
        \frac{\partial h_3}{\partial x_2}~x_2 &= -\frac{\partial h_3}{\partial x_1}~x_1.
        \end{cases}
    \end{equation}
    Hence, by \eqref{al:defJx3} and \eqref{sys:all without 34}, we have that $\mathcal{J}(\kappa,\;x_3)$ vanishes at all positive steady states. From \eqref{al:Jx3}, one can see that $\mathcal{J}(\kappa,\;x_3)=0$ has at most one positive solution in $x_3$ ($x_3=x_3(\kappa)$) for any $(\kappa_1,\;\kappa_2,\;\kappa_5,\;\kappa_6,\;\kappa_7,\;\kappa_8)$ $\in {\mathbb R}^6_{\geq 0}$. Hence, if $G$ is consistent, then $G$ has non-vacuous ACR in species $X_3$ for any generic choice of rate constants.  
\end{proof}

\begin{lemma}\label{lm:2d,3s,equival,has ACR in X1,X2,X3 i.f.f.}
    Suppose $G$ is a $2$D non-redundant zero-one network with $s=s^*=3$. Let $G_{/i}$ be the network obtained by removing a single species $X_i$ from $G$. 
    Then $G$ has non-vacuous ACR in species $X_i$ for any generic choice of rate constants if and only if
    there exist a $2$D non-redundant zero-one network $\widetilde{G}$ with $3$ species $\widetilde{X}_1$, $\widetilde{X}_2$, $\widetilde{X}_3$ and a consistent subnetwork $\widetilde{G}_{\mathrm{sub}}$ of $\widetilde{G}$ such that $G$ is equivalent to $\widetilde{G}_{\mathrm{sub}}$ and $X_i$ in  $G$ corresponds to $\widetilde{X}_3$ in $\widetilde{G}_{\mathrm{sub}}$, 
    where the stoichiometric matrix $\widetilde{\mathcal{N}}$ of $\widetilde{G}$ is one of \eqref{mat:1}--\eqref{mat:5}.

     Furthermore,
 \begin{enumerate}
     \item[(i)]  if $G_{/i}$ is $2$D, then the above $\widetilde{G}$ corresponds to one of \eqref{mat:1}–\eqref{mat:2}, and 
     \item[(ii)] if $G_{/i}$ is $1$D, then the above $\widetilde{G}$ corresponds to one of \eqref{mat:3}--\eqref{mat:5}. 
 \end{enumerate}
    
\end{lemma}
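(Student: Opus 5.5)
Without loss of generality I relabel so that $i=3$. The proof splits according to whether $G_{/3}$ is $2$D or $1$D; this split also yields the refinements (i) and (ii). In each case I enumerate the admissible $3$-species stoichiometric matrices up to equivalence. Then I use Lemma \ref{lm:if 2s degenerate, then 3-th species has ACR} or Lemma \ref{lm:if 2s 1d, then 3-th species has ACR} to decide ACR in $X_3$, and match the survivors with subnetworks of \eqref{mat:1}--\eqref{mat:5}. The converse direction, that every consistent subnetwork of one of \eqref{mat:1}--\eqref{mat:5} has ACR in $\widetilde X_3$, comes from the same two lemmas. The key fact is that a subnetwork keeps dimension $2$ and, when consistent, inherits the properties used there: $s^*=3$, degeneracy of $G_{/3}$, or the $\mathcal{B}$-pattern.

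\emph{Case $G_{/3}$ is $2$D.} By Lemma \ref{lm:if 2s degenerate, then 3-th species has ACR}, ACR in $X_3$ holds generically if and only if $G_{/3}$ is degenerate. By Lemma \ref{lm:2d,2s,degnerate if and only if 1378=0}, this happens exactly when $(\mathcal N_1;\mathcal N_2)$ is equivalent to the $2\times4$ matrix in \eqref{al:stoichiometric of 2s degenerate} or to its $2\times3$ submatrix. Hence $G$ uses only columns $(1,0,a)^\top,(0,1,b)^\top,(-1,-1,c)^\top,(1,1,d)^\top$ with $a,b,c,d$ chosen so that no species appears on both sides. Concretely, the reactant/product pattern forces $a\in\{0,-1\}$, $b\in\{0,-1\}$, $c\in\{0,1\}$, $d\in\{0,-1\}$; when the $(1,1)$ column and the $(-1,-1)$ column both occur with opposite third entries, they can be treated as one reversible pair. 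The requirement $\mathcal N_3\notin\{\mathcal N_1,\mathcal N_2\}$, together with rank $2$ and the zero-one constraint, should force $\mathcal N_3$ into the span as either $-\mathcal N_1+\mathcal N_2$-type or $-\mathcal N_1$-type. I would check this with the list $\mathcal N_3\in\{\pm\mathcal N_1,\pm\mathcal N_2,\pm(\mathcal N_1-\mathcal N_2)\}$ from the proof of Lemma \ref{lm:2d,s,degenerate}, up to swapping $X_1,X_2$. Column-wise compatibility then rules out everything except third rows equal to $(-1,1,0,0)$ or $(-1,0,1,-1)$, which are \eqref{mat:1} and \eqref{mat:2}, together with their column subsets. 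Consistency is what forces a genuine subnetwork rather than an arbitrary column subset.

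\emph{Case $G_{/3}$ is $1$D.} After normalizing $\mathcal N_2=-\mathcal N_1$, the columns of $\mathcal N$ come from $\mathcal N_{max}$ in \eqref{vector pairs: B1234}, grouped as $\mathcal B_1,\dots,\mathcal B_4$. Lemma \ref{lm:if 2s 1d, then 3-th species has ACR} gives the classification: ACR holds if and only if $\mathcal B_2$ is absent, or $\mathcal B_2$ is present together with exactly one other $\mathcal B_j$. The maximal such patterns are $\mathcal B_1\cup\mathcal B_3\cup\mathcal B_4$, which is \eqref{mat:5}, $\mathcal B_1\cup\mathcal B_2$, which is \eqref{mat:3}, and $\mathcal B_2\cup\mathcal B_4$, which is \eqref{mat:4}. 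The pattern $\mathcal B_2\cup\mathcal B_3$ is equivalent to \eqref{mat:4} after swapping $X_1\leftrightarrow X_2$, which maps $\mathcal B_3$ to $\mathcal B_4$.

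The main obstacle is the exhaustive but error-prone enumeration in the $2$D case. It needs a careful column-by-column sign analysis to show that only \eqref{mat:1}--\eqref{mat:2} survive, and to confirm that $s^*=3$ excludes spurious options such as $\mathcal N_3=-\mathcal N_2$ in combinations with inconsistent sign patterns. I would first try to dispatch those options by exhibiting, for each remaining candidate third row, a column in which a species would appear on both sides.
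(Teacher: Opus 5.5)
Your overall route is the paper's. You normalize to $i=3$ and split on $\dim G_{/3}$. When $G_{/3}$ is $2$D you use Lemma \ref{lm:if 2s degenerate, then 3-th species has ACR} with Lemma \ref{lm:2d,2s,degnerate if and only if 1378=0}; when it is $1$D you use Lemma \ref{lm:if 2s 1d, then 3-th species has ACR}. The converse and the refinements (i)--(ii) come from the same lemmas. Your $1$D bookkeeping is correct, and more explicit than the paper's one-line appeal: the maximal admissible $\mathcal B$-patterns, with $\mathcal B_2\cup\mathcal B_3$ reduced to $\mathcal B_2\cup\mathcal B_4$ by $X_1\leftrightarrow X_2$. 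So is your observation that a $2$D subnetwork inherits $s^*=3$, degeneracy of $G_{/3}$, or the $\mathcal B$-pattern.

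Your $2$D enumeration, however, contains a false step. Non-redundancy imposes nothing beyond entries in $\{-1,0,1\}$; for example, $(1,0,1)^\top$ is the legitimate reaction $0\to X_1+X_3$. So the claim that the reactant/product pattern forces $a,b\in\{0,-1\}$ is wrong. It would even discard the column $(0,1,1)^\top$ of \eqref{mat:1}, contradicting your own final answer. For the same reason, your plan to eliminate candidates by exhibiting a species on both sides of a reaction cannot succeed. Also, $\mathcal N_3=-\mathcal N_2$ is not spurious: it is \eqref{mat:2} after swapping $X_1$ and $X_2$.

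The actual constraint is linear. Since $\mathcal N_1,\mathcal N_2$ span the row space, $\mathcal N_3=\alpha\mathcal N_1+\beta\mathcal N_2=(\alpha,\beta,-\alpha-\beta,\alpha+\beta)$, with the last entry absent in the $2\times3$ case. Hence $\alpha,\beta,\alpha+\beta\in\{-1,0,1\}$. Excluding $(\alpha,\beta)=(1,0),(0,1)$ by $s^*=3$, and $(0,0)$ (a species in no reaction), leaves two cases. The pairs $(-1,0),(0,-1)$ give \eqref{mat:2}, and $(\alpha,\beta)=\pm(1,-1)$ give \eqref{mat:1}, up to swapping $X_1,X_2$. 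This is exactly the paper's split into $ab\neq0$ versus $(a,b)\in\{(-1,0),(0,-1)\}$. It is also your own list $\{\pm\mathcal N_1,\pm\mathcal N_2,\pm(\mathcal N_1-\mathcal N_2)\}$ with $\mathcal N_1,\mathcal N_2$ removed. With it, the ``main obstacle'' you anticipate disappears and no column-by-column sign analysis is needed.
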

\begin{proof}  
 ($\Rightarrow$) Without loss of generality,  assume that $i=3$. First, if $G_{/3}$ is $2$D,  then by Lemma \ref{lm:if 2s degenerate, then 3-th species has ACR}, $G_{/3}$ is degenerate. Let ${\mathcal N}_{/3}$ denote the stoichiometric matrix of $G_{/3}$. By Lemma \ref{lm:2d,2s,degnerate if and only if 1378=0}, we know that ${\mathcal N}_{/3}$ is equivalent to $\left(\begin{smallmatrix}
       \;\;1&\;\;0&-1&\;\;1\\\;\;0&\;\;1&-1&\;\;1
   \end{smallmatrix}\right)$ or $\left(\begin{smallmatrix}
            \;\;1&\;\;0&-1\\\;\;0&\;\;1&-1
        \end{smallmatrix}\right)$.
    Notice that $s^*=3$. Assume that $\mathcal{N}_3=a\mathcal{N}_1+b\mathcal{N}_2$. Hence, either $a$ and $b$ satisfy $ab\neq0$, or $(a,\;b)$ equals $(-1,\;0)$ or $(0,\;-1)$. 
    If $ab\neq0$, then $\mathcal{N}$ must be equivalent to a submatrix of \eqref{mat:1}, say $\widetilde{\mathcal{N}}_{\mathrm{sub}}$. Notice that the network  corresponding to $\widetilde{\mathcal{N}}_{\mathrm{sub}}$, say $\widetilde{G}_{\mathrm{sub}}$, admits non-vacuous ACR in its $3$rd species (say $\widetilde{X}_3$) generically. Hence, $X_3$ in $G$ corresponds to $\widetilde{X}_3$ in $\widetilde{G}_{\mathrm{sub}}$.  
    If $(a,\;b)=(-1,\;0)$ or $(a,\;b)=(0,\;-1)$, then $\mathcal{N}$ must be equivalent to a submatrix of \eqref{mat:2}, say $\widetilde{\mathcal{N}}_{\mathrm{sub}}$.  Hence, $G$ is equivalent to  the network  corresponding to $\widetilde{\mathcal{N}}_{\mathrm{sub}}$, say $\widetilde{G}_{\mathrm{sub}}$, and $X_3$ in  $G$ corresponds to $\widetilde{X}_3$ in $\widetilde{G}_{\mathrm{sub}}$. 

    Second, if $G_{/3}$ is $1$D, then by  Lemma \ref{lm:if 2s 1d, then 3-th species has ACR}, 
    $\mathcal{N}$ must be equivalent to a submatrix of \eqref{mat:3}--\eqref{mat:5}, say $\widetilde{\mathcal{N}}_{\mathrm{sub}}$. So, $G$ is equivalent to  the network $\widetilde{G}_{\mathrm{sub}}$ corresponding to $\widetilde{\mathcal{N}}_{\mathrm{sub}}$, and $X_3$ in  $G$ corresponds to $\widetilde{X}_3$ in $\widetilde{G}_{\mathrm{sub}}$.  
    
    ($\Leftarrow$) If $\widetilde{G}$ corresponds to one of \eqref{mat:1}--\eqref{mat:2}, then by Lemma \ref{lm:2d,2s,degnerate if and only if 1378=0} and Lemma \ref{lm:if 2s degenerate, then 3-th species has ACR}, any consistent subnetwork $\widetilde{G}_{\mathrm{sub}}$   admits non-vacuous ACR in species $\widetilde{X_3}$ generically. Since $G$ is equivalent to $\widetilde{G}_{\mathrm{sub}}$ and $X_i$ in $G$ corresponds to $\widetilde{X}_3$ in $\widetilde{G}_{\mathrm{sub}}$, $G$ has non-vacuous ACR in species $X_i$ for any generic choice of rate constants. 
    If $\widetilde{G}$ corresponds to one of  \eqref{mat:3}--\eqref{mat:5}, then the conclusion similarly holds by Lemma \ref{lm:if 2s 1d, then 3-th species has ACR}. 
\end{proof}

\begin{remark}\label{rem:consistent subnetworks mat1-4}
All stoichiometric matrices for consistent subnetworks derived from \eqref{mat:1}--\eqref{mat:5} are compiled in Table~\ref{tab:stoich_3s_3s*_ACR}. 
\begin{table}[!tbh]
\centering
\caption{Stoichiometric matrices characterizing non‑vacuous ACR networks}
\label{tab:stoich_3s_3s*_ACR}
\small
\begin{tabular}{ll}
\toprule
\refstepcounter{matnum}\hypertarget{rommat:\arabic{matnum}}{}\textbf{(\thematnum)} $\;\;\left(\begin{array}{rrrr}
     1 &\;\; 0 & -1 &\;\; 1 \\ 0 & 1 & -1 &  1 \\
    -1 & 1 & 0 & 0
\end{array}\right)$
&
\refstepcounter{matnum}\hypertarget{rommat:\arabic{matnum}}{}\textbf{(\thematnum)} $\;\;\left(\begin{array}{rrr}
      1 &  \;\;0 &  -1 \\ 0 & 1 & -1 \\
    -1 & 1 & 0
\end{array}\right)$ \\[1.2em]

\refstepcounter{matnum}\hypertarget{rommat:\arabic{matnum}}{}\textbf{(\thematnum)} $\;\;\left(\begin{array}{rrrr}
     1 &\;\; 0 & -1 & 1 \\ 0 &  1 &  -1 &  1 \\
    -1 & 0 & 1 & -1
\end{array}\right)$
&
\refstepcounter{matnum}\hypertarget{rommat:\arabic{matnum}}{}\textbf{(\thematnum)} $\;\;\left(\begin{array}{rrr}
     1 &\;\; 0 &   -1 \\ 0 & 1 & -1 \\
    -1 & 0 & 1
\end{array}\right)$ \\

\refstepcounter{matnum}\hypertarget{rommat:\arabic{matnum}}{}\textbf{(\thematnum)} $\;\;\left(\begin{array}{rrrr}
    -1 & 1 & 0 &\;\; 0 \\ 1 &  -1 & 0 &  0 \\0 & 0 & -1 & 1
\end{array}\right)$
&
\refstepcounter{matnum}\hypertarget{rommat:\arabic{matnum}}{}\textbf{(\thematnum)} $\;\;\left(\begin{array}{rrrr}
     0 &\;\;  0 &  -1 &  1 \\ 0 &  0 & 1 & -1 \\-1 & 1 & 1 & -1
\end{array}\right)$ \\[1.2em]

\refstepcounter{matnum}\hypertarget{rommat:\arabic{matnum}}{}\textbf{(\thematnum)} $\;\;\left(\begin{array}{rrr} -1 & 1 & 1 \\ 1 & -1 & -1 \\ 0 & 1 & -1 \end{array}\right)$
&
\refstepcounter{matnum}\hypertarget{rommat:\arabic{matnum}}{}\textbf{(\thematnum)} $\;\;\left(\begin{array}{rrrr} -1 & 1 & -1 & 1 \\ 1 & -1 & 1 & -1 \\ 0 & 0 & -1 & 1 \end{array}\right)$ \\[1.2em]

\refstepcounter{matnum}\hypertarget{rommat:\arabic{matnum}}{}\textbf{(\thematnum)} $\;\;\left(\begin{array}{rrrr} -1 & 1 & -1 & 1 \\ 1 & -1 & 1 & -1 \\ -1 & 1 & 1 & -1 \end{array}\right)$
&
\refstepcounter{matnum}\hypertarget{rommat:\arabic{matnum}}{}\textbf{(\thematnum)} $\left(\begin{array}{rrrr} -1 & 1 & -1 & -1 \\ 1 & -1 & 1 & 1 \\ 0 & 0 & -1 & 1 \end{array}\right)$ \\[1.2em]

\refstepcounter{matnum}\hypertarget{rommat:\arabic{matnum}}{}\textbf{(\thematnum)} $\left(\begin{array}{rrrr} -1 & -1 & 1 & 1 \\ 1 & 1 & -1 & -1 \\ 0 & -1 & 1 & -1 \end{array}\right)$
&
\refstepcounter{matnum}\hypertarget{rommat:\arabic{matnum}}{}\textbf{(\thematnum)} $\left(\begin{array}{rrrr} -1 & 1 & -1 & 1 \\ 1 & -1 & 1 & -1 \\ 0 & 1 & 1 & -1 \end{array}\right)$ \\[1.2em]

\refstepcounter{matnum}\hypertarget{rommat:\arabic{matnum}}{}\textbf{(\thematnum)} $\left(\begin{array}{rrrrr} -1 & 1 & -1 & -1 & 1 \\ 1 & -1 & 1 & 1 & -1 \\ 0 & 0 & -1 & 1 & -1 \end{array}\right)$
&
\refstepcounter{matnum}\hypertarget{rommat:\arabic{matnum}}{}\textbf{(\thematnum)} $\left(\begin{array}{rrrrr} -1 & 1 & 1 & -1 & 1 \\ 1 & -1 & -1 & 1 & -1 \\ 0 & 0 & 1 & 1 & -1 \end{array}\right)$ \\[1.2em]

\refstepcounter{matnum}\hypertarget{rommat:\arabic{matnum}}{}\textbf{(\thematnum)} $\left(\begin{array}{rrrrr} -1 & -1 & 1 & -1 & 1 \\ 1 & 1 & -1 & 1 & -1 \\ 0 & -1 & 1 & 1 & -1 \end{array}\right)$
&
\refstepcounter{matnum}\hypertarget{rommat:\arabic{matnum}}{}\textbf{(\thematnum)} $\left(\begin{array}{rrrrrr}
     -1 &  1 &  -1 &  1 &  -1 &  1 \\
     1 & -1 & 1 & -1 &  1 &  -1 \\
     0 & 0 & -1 & 1 & 1 & -1
\end{array}\right)$ \\
\bottomrule
\end{tabular}
\end{table}
Specifically, \matref{1}--\matref{2} correspond to the subnetworks of \eqref{mat:1}, \matref{3}--\matref{4} to those of \eqref{mat:2}, \matref{5} and \matref{6} to \eqref{mat:3} and \eqref{mat:4}, respectively, and \matref{7}--\matref{16} to the subnetworks originating from \eqref{mat:5}. 
\end{remark}

\begin{corollary}\label{coro:2d,3s,N1N2-N1 has no ACR in X1 and X2}
     If $\mathcal{N}$, the stoichiometric matrix of a non‑redundant zero‑one network $G$, is one of \matref{1}--\matref{16}, then $G$ has non‑vacuous ACR only in species $X_3$ for any generic choice of rate constants. 
\end{corollary}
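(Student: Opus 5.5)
The statement has two halves. Non-vacuous ACR in $X_3$ for every matrix \matref{1}--\matref{16} is already contained in Lemma~\ref{lm:2d,3s,equival,has ACR in X1,X2,X3 i.f.f.} (direction $\Leftarrow$): each \matref{k} is the stoichiometric matrix of a consistent subnetwork of one of \eqref{mat:1}--\eqref{mat:5}, with $X_3$ playing the role of $\widetilde{X}_3$. So the real work is to exclude generic non-vacuous ACR in $X_1$ and in $X_2$. I plan to do this with the Obstruction Principle (Theorem~\ref{thm:main}) where possible, and with Lemma~\ref{lm: symmetry for no ACR} or a direct argument otherwise.

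I would first split the sixteen matrices into two groups.

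\textbf{Group (a): \matref{1}--\matref{4}}, coming from \eqref{mat:1}--\eqref{mat:2}. Here all three rows are pairwise linearly independent. Removing $X_1$ (resp.\ $X_2$) therefore leaves a $2$D network whose stoichiometric matrix is formed by rows $\{\mathcal N_2,\mathcal N_3\}$ (resp.\ $\{\mathcal N_1,\mathcal N_3\}$). I would check by inspection that neither pair is equivalent to one of the two degenerate patterns in Lemma~\ref{lm:2d,2s,degnerate if and only if 1378=0}. The check uses columns and permutations: for example, in \matref{1} the pair $\mathcal N_2,\mathcal N_3$ has the column $(0,1)^\top$, whereas the degenerate patterns would require a $(1,1)^\top$- or $(-1,-1)^\top$-type structure together with $e_1,e_2$. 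A small snag is that removing a species can create duplicated or zero columns. Duplicates are harmless: they amount to the same reaction with summed rate, and consistency is unaffected. A zero column would correspond to a reaction that disappears; I would record this and note that it does not change the nondegeneracy check. With both reduced networks shown nondegenerate, Theorem~\ref{thm:main} rules out generic ACR in $X_1$ and $X_2$.

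\textbf{Group (b): \matref{5}--\matref{16}}, coming from \eqref{mat:3}--\eqref{mat:5}. In every one of these we have $\mathcal N_2=-\mathcal N_1$, so removing $X_1$ or $X_2$ drops the dimension to one, and Theorem~\ref{thm:main} does not apply. Lemma~\ref{lm: symmetry for no ACR} does not apply either, since $\mathcal N_1\neq\mathcal N_2$. Instead I would use the conservation law $x_1+x_2=c$ together with the explicit steady-state description from the proof of Lemma~\ref{lm:if 2s 1d, then 3-th species has ACR}. There $x_3$ is pinned to the value $x_3(\kappa)$, and the first equation of \eqref{sys:all with 34} then gives $x_2/x_1=\rho(\kappa,x_3(\kappa))$, a fixed positive ratio. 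Combined with $x_1+x_2=c$, this yields $x_1=c/(1+\rho)$ and $x_2=c\rho/(1+\rho)$. Since $c>0$ ranges over an interval as initial conditions vary, while $\kappa$ is held fixed, $x_1$ and $x_2$ take a continuum of values across positive steady states. Hence there is no ACR in $X_1$ or $X_2$ for any $\kappa$ for which the network is consistent. The one point to verify is that consistency for a single $c$ gives positive steady states for every $c>0$. This holds because the steady-state conditions other than the conservation law do not involve $c$, and the solution set is invariant under the scaling $(x_1,x_2)\mapsto t(x_1,x_2)$; the equations are linear in $(x_1,x_2)$ for fixed $x_3$, as no $x_1x_2$ monomial occurs.

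I expect the main obstacle to be the case-by-case verification in group (a) that the removed-species $2$-species networks avoid the degenerate patterns. In particular, I would need to handle column permutations and row-order equivalence correctly in Lemma~\ref{lm:2d,2s,degnerate if and only if 1378=0}. For this I would first try the invariant ``the pair contains the columns $e_1$, $e_2$, and $-(e_1+e_2)$,'' which characterizes both degenerate patterns.
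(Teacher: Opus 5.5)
\textbf{The gap is in group (a), for $X_2$ in \matref{3} and \matref{4}.} You assert that in \matref{1}--\matref{4} the three rows are pairwise linearly independent. This is false for \matref{3} and \matref{4}: they come from \eqref{mat:2}, where $\mathcal{N}_3=-\mathcal{N}_1$ (the case $(a,b)=(-1,0)$ in the proof of Lemma~\ref{lm:2d,3s,equival,has ACR in X1,X2,X3 i.f.f.}). Removing $X_2$ therefore leaves the rows $\mathcal{N}_1,-\mathcal{N}_1$, so $G_{/2}$ is $1$D and Theorem~\ref{thm:main} cannot be invoked.

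Your group (b) argument does not cover this case either. There the collinear pair is $\{X_1,X_3\}$, and the species in question, $X_2$, is the unpaired one. That is exactly the species your scaling argument assumes is pinned, and here it is not. So, as written, nothing rules out ACR in $X_2$ for these two networks. For $X_1$ in \matref{3}--\matref{4} your argument is fine: $G_{/1}$ has rows $\mathcal{N}_2,-\mathcal{N}_1$, is $2$D, and avoids the degenerate pattern.

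\textbf{Closing the gap.} A direct computation suffices. \matref{4} is $X_3\to X_1$, $0\to X_2$, $X_1+X_2\to X_3$, and \matref{3} adds $X_3\to X_1+X_2$ with rate $\kappa_4$. The steady-state equations give $x_3=\kappa_2/\kappa_1$ and $\kappa_3x_1x_2=(\kappa_1+\kappa_4)\kappa_2/\kappa_1$, with $\kappa_4=0$ for \matref{4}. Hence the positive steady states are exactly $\{(x_1,K/x_1,\kappa_2/\kappa_1):x_1>0\}$, with $K>0$ depending only on $\kappa$. So $x_2$ takes every positive value. Alternatively, order the species as $(X_1,X_3,X_2)$. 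The columns then lie in $\mathcal{B}_1\cup\mathcal{B}_2\cup\mathcal{B}_3$, which is case (i) of Lemma~\ref{lm:if 2s 1d, then 3-th species has ACR}.

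\textbf{Comparison with the paper.} Once this case is added, your route genuinely differs from the paper's. The paper does not argue species by species. It applies the classification Lemma~\ref{lm:2d,3s,equival,has ACR in X1,X2,X3 i.f.f.} with $i=1,2$, and observes that moving $X_1$ or $X_2$ into the third row never produces a matrix equivalent to one in the table. Your argument is more hands-on and exhibits the non-constant coordinates explicitly.

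\textbf{A minor point in group (b).} Linearity in $(x_1,x_2)$ alone does not give invariance under $(x_1,x_2)\mapsto t(x_1,x_2)$, because of the term $-\kappa_3x_3+\kappa_4$ in $h_3$. You need this term to vanish at $x_3=x_3(\kappa)$. It does in \matref{5}--\matref{6}, where $x_3=\kappa_4/\kappa_3$, and trivially in \matref{7}--\matref{16}.
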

\begin{proof}
    Lemma \ref{lm:2d,3s,equival,has ACR in X1,X2,X3 i.f.f.} indicates that $G$ has non-vacuous ACR in species $X_3$ for any generic choice of rate constants if and only if by permuting the columns, $\mathcal{N}$ is the stoichiometric matrix of one of the consistent subnetworks derived from \eqref{mat:1}--\eqref{mat:5}. If $\mathcal{N}$ is one of \matref{1}--\matref{16}, then by Remark \ref{rem:consistent subnetworks mat1-4}, $\mathcal{N}$ is a submatrix of \eqref{mat:1}--\eqref{mat:5} that yields a consistent network. 
    Notice that no two matrices among \matref{1}--\matref{16} can be equivalent by permuting their rows. So,  $G$ has non-vacuous ACR in
    neither $X_1$ nor $X_2$ for any generic choice of rate constants. 
\end{proof}
For each network corresponding to the stoichiometric matrices \matref{1}--\matref{16}, one can  solve the steady‑state system over ${\mathbb Q}(\kappa)[x]$ and verify that the corresponding network has non-vacuous ACR  in species $X_3$ for any generic choice of rate constants. The  supporting codes are available online\footnotemark \footnotetext{\quad\url{https://github.com/Cynthia-091225/Detect-for-ACR/blob/main/verification.mw}.}.

\begin{lemma}\label{lm:2d,s,3s^*}
    Suppose $G$ is a $2$D non-redundant zero-one network with $s$ species ($s\geq4$) and $s^*=3$. Let $G_{/4,\ldots,s}$ denote the network obtained by removing species $X_4$, $\cdots$, $X_s$ from $G$. Assume that $G_{/4,\ldots,s}$ is still $2$D and satisfies $s^*=3$.  If for an index $i\in\{1,\;2,\;3\}$, we have
$\mathcal{N}_k\neq\mathcal{N}_i$ in the stoichiometric matrix $\mathcal{N}$ of $G$ for all $k\in\{1,\;\ldots,\;s\}\setminus \{i\}$, then for any generic choice of rate constants, 
$G$ has non-vacuous ACR  in species $X_i$ if and only if 
    $G_{/4,\ldots,s}$ has non-vacuous ACR  in species $X_i$. 
   
\end{lemma}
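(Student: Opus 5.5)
The plan is to reduce $G$ to $G_{/4,\ldots,s}$ through the monomial substitution already used in the proof of Theorem~\ref{thm:2d,s,2s^*,has ACR i.f.f.}, and then to check that this reduction maps positive steady states onto positive steady states for the same rate constants, leaving the $i$-th coordinate untouched. The argument should give the equivalence for every $\kappa$, not only for generic ones.

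\textbf{Step 1 (structure of the rows).} Since $G_{/4,\ldots,s}$ is $2$D with $s^*=3$, the rows $\mathcal{N}_1,\mathcal{N}_2,\mathcal{N}_3$ are pairwise distinct. Since $G$ also has $s^*=3$, every $\mathcal{N}_k$ with $k\geq 4$ equals one of $\mathcal{N}_1,\mathcal{N}_2,\mathcal{N}_3$. By hypothesis it is never $\mathcal{N}_i$.

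Next I will use that $G$ is non-redundant and zero-one. In each reaction $j$, the reactant coefficient of a species is $1$ exactly when the corresponding entry of $\mathcal{N}$ is $-1$, and $0$ otherwise. So $\mathcal{N}$ determines $\mathcal{Y}$ row by row, and equal rows of $\mathcal{N}$ give equal rows of $\mathcal{Y}$.

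I will also check that every column of $\mathcal{N}$ stays nonzero after deleting rows $4,\ldots,s$, because those rows are copies of rows $1$–$3$. Hence the reactions of $G_{/4,\ldots,s}$ are in bijection with those of $G$ and carry the same rate constants. If two reactions happen to coincide, I will treat them as a multiset; their rate constants simply add, which does not affect the argument.

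\textbf{Step 2 (the substitution).} For $j\in\{1,2,3\}$, set $Y_j=\prod_{\{k\,:\,\mathcal{N}_k=\mathcal{N}_j\}}x_k$. By Step 1, $Y_i=x_i$. Let $\tilde f$ be the steady-state system of $G_{/4,\ldots,s}$. Then every monomial $\prod_k x_k^{\alpha_{jk}}$ of $G$ equals $\prod_{j'\leq 3}Y_{j'}^{\alpha_{jj'}}$. It follows that $f_k(\kappa,x)=\tilde f_{j}(\kappa,Y_1,Y_2,Y_3)$ whenever $\mathcal{N}_k=\mathcal{N}_j$, exactly as in \eqref{sys:2d,2s*,s,Nk=N1}.

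\textbf{Step 3 (correspondence of steady states).} First, if $x^*\in\mathbb{R}^s_{>0}$ is a steady state of $G$, then $(Y_1^*,Y_2^*,Y_3^*)$ is a positive steady state of $G_{/4,\ldots,s}$ with third-species value $Y_i^*=x_i^*$.

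Conversely, if $y^*\in\mathbb{R}^3_{>0}$ is a steady state of $G_{/4,\ldots,s}$, I will lift it as follows: set $x_j=y_j^*$ for $j\leq 3$ and $x_k=1$ for $k\geq 4$. Then $Y_j=y^*_j$, so this $x$ is a positive steady state of $G$ with $x_i=y_i^*$.

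Therefore, for each fixed $\kappa$, the sets $\{x_i^*\}$ of $i$-th coordinates of positive steady states coincide for the two networks. In particular, $G$ is consistent if and only if $G_{/4,\ldots,s}$ is. Hence non-vacuous ACR in $X_i$ holds for $G$ if and only if it holds for $G_{/4,\ldots,s}$, for the same $\kappa$, and thus for generic $\kappa$.

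The only delicate point is Step 1. There I must make sure that deleting the duplicate species neither destroys a reaction (no zero column appears) nor changes the reactant monomials other than through the substitution. The uniqueness hypothesis on $\mathcal{N}_i$ is precisely what keeps $x_i$ from being absorbed into a product. Without it, Lemma~\ref{lm: symmetry for no ACR} already rules out ACR in $X_i$.
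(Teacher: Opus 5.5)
Your proposal is correct and follows essentially the same route as the paper. The paper's proof consists only of the identity $(f_1,f_2,f_3)=f_{/4,\ldots,s}|_{x_1=Y_1,\,x_2=Y_2}$, where $Y_j$ is the product of the concentrations whose rows equal $\mathcal{N}_j$. You additionally make explicit what the paper leaves implicit: the bijection of reactions and rate constants, the redundancy of the equations for $k\geq 4$, and the lift $x_k=1$ ($k\geq 4$), which together show that the two networks have the same set of $i$-th steady-state coordinates for every $\kappa$.
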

\begin{proof}
   Since $G_{/4,\ldots,s}$ still satisfies $s^*=3$, 
   the first $3$ rows of ${\mathcal N}$ are pairwise distinct. 
   Without loss of generality,  assume that $i=3$.
   Define $\delta_j\;:=\;\{k\;|\;\mathcal{N}_k=\mathcal{N}_j,\;k\in\{1,\;\ldots,\;s\}\}$, where $j\in\{1,\;2\}$. Let $Y_1=\prod_{k\in\delta_1}x_k$ and $Y_2=\prod_{k\in\delta_2}x_k$. Let $f$ and $f_{/4,\ldots,s}$ denote the steady-state systems of $G$ and $G_{/4,\ldots,s}$, respectively. Then the conclusion follows from the fact that $(f_1,\;f_2,\;f_3)=f_{/4,\ldots,s}|_{x_1=Y_1,\;x_2=Y_2}$. 
\end{proof}

\noindent\textbf{\textit{Proof of Theorem \ref{thm:2d,3s,3s^* with ACR table}}}
Without loss of generality, assume that $i=3$ (i.e., the ACR species we are investigating is $X_3$), and assume that 
the first $3$ rows of $\mathcal{N}$ are pairwise distinct since $s^*=3$.

 ($\Leftarrow$) Suppose there exist a $2$D non-redundant zero-one network $\widetilde{G}$ with $3$ species $\widetilde{X}_1$, $\widetilde{X}_2$, $\widetilde{X}_3$, a consistent subnetwork $\widetilde{G}_{\mathrm{sub}}$ of $\widetilde{G}$, and a natural species refinement map $\phi$ such that $G=\phi(\widetilde{G}_{\mathrm{sub}})$ and $\phi(\widetilde{X}_3)=X_3$, where the stoichiometric matrix $\widetilde{\mathcal{N}}$ of $\widetilde{G}$ is one of \eqref{mat:1}--\eqref{mat:5}. By Lemma \ref{lm:2d,3s,equival,has ACR in X1,X2,X3 i.f.f.}, the network $\widetilde{G}_{\mathrm{sub}}$ has non-vacuous ACR in species $\widetilde{X}_3$ for any generic choice of rate constants. Since $G=\phi(\widetilde{G}_{\mathrm{sub}})$ and $\phi(\widetilde{X}_3)=X_3$,  the network $G_{/4,\ldots,s}$ is equivalent to  $\widetilde{G}_{\mathrm{sub}}$ and $X_3$ in $G_{/4,\ldots,s}$ corresponds to $\widetilde{X}_3$ in $\widetilde{G}_{\mathrm{sub}}$. So, $G_{/4,\ldots,s}$  has non-vacuous ACR in species $X_3$ generically. By Lemma \ref{lm:2d,s,3s^*}, the same conclusion holds for $G$, which completes the proof.
    
    ($\Rightarrow$) Suppose that $G$ admits non-vacuous ACR in species $X_3$ for any generic choice of rate constants. By Lemma \ref{lm: symmetry for no ACR}, for all $k\in\{1,\;\ldots,\;s\}\setminus\{3\}$, $\mathcal{N}_k\in\{\mathcal{N}_1,\;\mathcal{N}_2\}$. By Lemma \ref{lm:2d,s,3s^*}, the $3$-species network $G_{/4,\ldots,s}$ also admits non-vacuous ACR in species $X_3$ generically. 
    By Lemma \ref{lm:2d,3s,equival,has ACR in X1,X2,X3 i.f.f.}, 
    there exist a $2$D non-redundant zero-one network $\widetilde{G}$ with $3$ species $\widetilde{X}_1$, $\widetilde{X}_2$, $\widetilde{X}_3$ and a consistent subnetwork $\widetilde{G}_{\mathrm{sub}}$ of $\widetilde{G}$ such that $G_{/4,\ldots,s}$ is equivalent to $\widetilde{G}_{\mathrm{sub}}$ and $X_3$ in  $G_{/4,\ldots,s}$ corresponds to $\widetilde{X}_3$ in $\widetilde{G}_{\mathrm{sub}}$, 
    where the stoichiometric matrix $\widetilde{\mathcal{N}}$ of $\widetilde{G}$ is one of \eqref{mat:1}--\eqref{mat:5}.  Define a natural species refinement map $\phi$ on the species set $\{\widetilde{X}_1,\;\widetilde{X}_2,\;\widetilde{X}_3\}$ of  $\widetilde{G}_{\mathrm{sub}}$, by \begin{equation*}
        \phi(\widetilde{X}_p)=\sum_{k\in I_p}X_k,\quad p=1,\;2,\qquad \phi(\widetilde{X}_3)=X_3, 
    \end{equation*}
    where $I_p\equals\{k|\mathcal{N}_k=\mathcal{N}_p,\;k\in\{1,\;\ldots,\;s\}\setminus\{3\}\}$ and $\mathcal{N}_{\ell}$ denotes the $\ell$-th row of the stoichiometric matrix of $G$. Then,  $G=\phi(\widetilde{G}_{\mathrm{sub}})$ is a natural species refinement of $\widetilde{G}_{\mathrm{sub}}$ with $\phi(\widetilde{X}_3)=X_3$, which completes the proof. 
    \hfill$\square$

\subsubsection{Networks with $\mathbf{s^*\geq4}$}\label{sssec:2d,s^*>=4}

\begin{theorem}\label{thm:2d,s*≥4 no ACR}
    If $G$ is a $2$D non-redundant zero-one network with  $s^*\geq4$, then $G$ has no non-vacuous ACR for any generic choice of rate constants. 
\end{theorem}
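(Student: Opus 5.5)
The plan is to apply Corollary~\ref{coro: all s-1 nondegenerate then s no ACR} species by species. By Theorem~\ref{thm:main}, it suffices to show that for every species $X_i$ the network $G_{/i}$ is $2$D and nondegenerate. The first step is to list the possible rows. Fix two linearly independent rows, say $\mathcal{N}_1,\mathcal{N}_2$. Since $G$ is zero-one and non-redundant, every entry of $\mathcal{N}$ lies in $\{-1,0,1\}$. As in the proof of Lemma~\ref{lm:2d,s,degenerate}, every row therefore lies in
\[
\{\pm\mathcal{N}_1,\;\pm\mathcal{N}_2,\;\pm(\mathcal{N}_1-\mathcal{N}_2),\;\pm(\mathcal{N}_1+\mathcal{N}_2)\},
\]
after pruning the combinations that would create entries of absolute value $2$. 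In particular, the distinct rows fall into at most a few ``directions'' (lines through the origin), each containing at most two rows $\pm v$.

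\textbf{Step 1.} Show that $G_{/i}$ is still $2$D for every $i$. Since $s^*\geq 4$, removing the single species $X_i$ leaves at least three distinct rows. Three distinct rows cannot all lie on one line, because a line contains at most the two vectors $\pm v$ with entries in $\{-1,0,1\}$. Hence the rank stays $2$.

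\textbf{Step 2.} Show that $G_{/i}$ is nondegenerate. By Lemma~\ref{lm:degenerate equivals to A is zero poly}, applied as in the proof of Lemma~\ref{lm:2d,s,degenerate}, it suffices to find two rows $\mathcal{N}_j,\mathcal{N}_k$ with $j,k\neq i$ such that three conditions hold: the two rows are independent, the $2$-species network they define is consistent, and that network is not one of the two degenerate patterns~\eqref{al:stoichiometric of 2s degenerate} of Lemma~\ref{lm:2d,2s,degnerate if and only if 1378=0}. The principal minor $\det(A(\lambda)[\{j,k\},\{j,k\}])$ is then a nonzero polynomial. One subtlety: a pair of rows of $\mathcal{N}$ may carry zero columns. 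We drop those columns, since they do not affect $A$ restricted to the pair.

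By Lemma~\ref{lm:2d,s,degenerate}, a degenerate $2$D network must have $s^*=2$. So $G_{/i}$ can be degenerate only if all its remaining rows take just two values. After deleting one species from a network with $s^*\ge4$, at least three distinct rows survive. Lemma~\ref{lm:2d,s,degenerate} therefore applies directly to $G_{/i}$ (with two independent surviving rows placed first) and gives nondegeneracy. That lemma assumed only that $G_{/i}$ is a $2$D non-redundant zero-one network, and deleting a species preserves both the zero-one and the non-redundant properties. This makes Step 2 essentially immediate.

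\textbf{Step 3.} Conclude. Each $G_{/i}$ is $2$D and nondegenerate, so Theorem~\ref{thm:main}, in the form of Corollary~\ref{coro: all s-1 nondegenerate then s no ACR}, shows that for generic $\kappa$ the network $G$ has no non-vacuous ACR in any species. The intersection of finitely many generic conditions is still generic.

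The main obstacle I expect is the hidden requirement in Lemma~\ref{lm:2d,s,degenerate}: removing a species might leave a network in which some reaction becomes trivial (its column is zero), or in which the remaining rows no longer satisfy the normalization used there. I would handle this by deleting zero columns, noting that this changes neither $\mathcal{N}$'s row space nor the relevant minors of $A(\lambda)$. I would also double-check Step 1's claim, that three distinct $\{-1,0,1\}$-rows span a $2$D space, against the case where two of them are $v$ and $-v$ and the third is a multiple of $v$. This case is impossible, because the only nonzero multiples of $v$ with entries in $\{-1,0,1\}$ are $\pm v$ (the zero row would make $X_j$ absent, and is excluded or handled separately).
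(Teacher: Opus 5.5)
Your proof is correct and follows the paper's route: for each $i$, $G_{/i}$ keeps at least three distinct rows, so it is $2$D and, by Lemma~\ref{lm:2d,s,degenerate}, nondegenerate, and Theorem~\ref{thm:main} then excludes ACR in every $X_i$. Your collinearity argument for the $2$D claim (with zero rows excluded) fills in a step the paper only asserts; the preliminary list $\{\pm\mathcal{N}_1,\pm\mathcal{N}_2,\pm(\mathcal{N}_1\pm\mathcal{N}_2)\}$ is not valid for an arbitrary independent pair (rows $(1,1)$, $(1,-1)$, $(1,0)$ are a counterexample), but you never use it, and your zero-column worry is moot because when $\mathcal{N}_i$ lies in the span of the other rows, any column that vanishes outside row $i$ vanishes entirely.
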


\noindent\textbf{\textit{Proof of Theorem \ref{thm:2d,s*≥4 no ACR}}}\quad
For any $i\in\{1,\;\ldots,\;s\}$, let $G_{/i}$ denote the network obtained by removing species $X_i$ from $G$. Since $G$ satisfies $s^*\geq4$, $G_{/i}$ satisfies $s^*\geq3$. Notice that $G_{/i}$ must be $2$D, and by Lemma \ref{lm:2d,s,degenerate}, $G_{/i}$ is nondegenerate. Hence, by Theorem \ref{thm:main}, $G$ has no non-vacuous ACR in species $X_i$ for  generic  rate constants. Therefore, $G$ admits no non-vacuous ACR generically. \hfill$\square$
    
\begin{remark}
    Suppose $G$ is a $2$D network with $s$ species and $s^*\geq4$. Notice that $s^*\geq4$ indicates that there are no less than four distinct rows in the stoichiometric matrix $\mathcal{N}$. In other words, among the $s-2$ conservation laws, we have at least two conservation laws that are not in the form of $x_i=x_j+c$ ($i,\;j\in\{1,\;\ldots,\;s\}$). 
\end{remark}

\section{Discussion}\label{sec:discussion}

For non-redundant zero-one networks of dimensions higher than two, we conjecture that if $s^*$ exceeds a certain value $s^*_{max}(s, r)$, such networks have no non-vacuous ACR for any generic choice of rate constants.  One only needs to show  that a non-redundant zero-one network is always nondegenerate once $s^*$ reaches $s^*_{max}-1$.






\section*{Acknowledgments}

The authors were supported by the National Natural Science Foundation of China (NSFC No. 12671550). 

\bigskip
\bibliography{sn-bibliography}

\begin{appendices}

\section{A self-contained proof of Theorem \ref{thm:main}}\label{app:proof}
\textbf{Proof. } Without loss of generality, we assume that the removed species is $X_s$. Let $\mathcal{N}$ and $\mathcal{N}_{/s}$ denote the stoichiometric matrices of $G$ and $G_{/s}$, respectively. Notice that the flux cones of $\mathcal{N}$ and $\mathcal{N}_{/s}$ are the same, since ${\rm rank}(\mathcal{N})={\rm rank}(\mathcal{N}_{/s})=r$ and $\mathcal{N}_{/s}$ is obtained by deleting the $s$-th row of $\mathcal{N}$. Let $A(\lambda)$ and $A_{/s}(\lambda)$ denote the partial transformed Jacobian matrices of $G$ and $G_{/s}$, respectively. Notice that \begin{equation}\label{al:tildeA=As-1}
       A_{/s}(\lambda)=A(\lambda)[\{1,\ldots,s-1\},\;\{1,\ldots,s-1\}].\;
   \end{equation}
   Since $G_{/s}$ is nondegenerate, by Lemma \ref{lm:degenerate equivals to A is zero poly}, there exists $I^*\subseteq\{1,\ldots,s-1\}$ satisfying $|I^*|=r$ such that $det(A_{/s}[I^*,\;I^*])$ is not the zero polynomial. Hence, by \eqref{al:tildeA=As-1}, $det(A[I^*,\;I^*])$ is not the zero polynomial. Let $J$ denote the transformed Jacobian matrix of $G$. Define \begin{equation}\label{eq:tildeB}
       B_{/s}(p,\;\lambda)\equals\sum_{\substack{I\subseteq\{1,\;\ldots,\;s-1\}\\|I|=r}}det(J[I,\;I]),\;\;
B(p,\;\lambda)\equals\sum_{\substack{I\subseteq\{1,\;\ldots,\;s\}\\|I|=r}}det(J[I,\;I]).
   \end{equation}
   Recall that $J(p,\;\lambda)=A(\lambda)~diag(p)$.  So, neither $B_{/s}(p,\;\lambda)$ nor $B(p,\;\lambda)$ is the zero polynomial. 
   Let \begin{equation}\label{al:thm m}
       \Gamma_j(\kappa,\;p,\;\lambda)\equals\kappa_j-\prod_{i=1}^s p_i^{\alpha_{ji}}\sum_{k=1}^n\lambda_k\ell_j^{(k)},\quad j=1,\;\ldots,\;m.\;
   \end{equation}
   Consider the following $m+2$ equations, \begin{equation*}\begin{cases}
           &B_{/s}(p,\;\lambda)=0,\;\\
           &B(p,\;\lambda)=0,\;\\
           &\Gamma_j(\kappa,\;p,\;\lambda)=0,\;j=1,\;\ldots,\;m.
       \end{cases} \end{equation*} 
   Let $\mathcal{I}_1\equals\langle B,\;\Gamma_1,\;\ldots,\;\Gamma_m\rangle$ and $\mathcal{I}_2\equals\langle B_{/s},\;\Gamma_1,\;\ldots,\;\Gamma_m\rangle$ denote the ideals generated by the corresponding polynomials in $\mathbb{Q}[\kappa]$. Let $V(\mathcal{I}_1)$ and $V(\mathcal{I}_2)$ denote the algebraic varieties corresponding to $\mathcal{I}_1$ and $\mathcal{I}_2$, respectively. Let $\pi:\mathbb{C}^m\times\mathbb{C}^s\times\mathbb{C}^n\rightarrow\mathbb{C}^m$ be the regular projection mapping. Hence, we have $\pi(V(\mathcal{I}_i))\subseteq\mathbb{C}^m$ ($i\in\{1,\;2\}$). 
   Let $\Theta_i\equals\{g\in\mathbb{Q}[\kappa]\;\vert\;\forall b\in\pi(V(\mathcal{I}_i)),\;g(b)=0\}$. 
   Define 
\begin{equation}\label{eq:Sinproof}
S_i\equals\overline{\pi\left(V(\mathcal{I}_i)\right)}=\left\{a\in\mathbb{C}^m\;\vert\;\forall g\in \Theta_i,\;g(a)=0\right\}\subseteq\mathbb{C}^m.  
  \end{equation}
   Notice that we have $S_1\subsetneq\mathbb{C}^m$ and $S_2\subsetneq\mathbb{C}^m$ since $B_{/s}(p,\;\lambda)$ and $B(p,\;\lambda)$  are both non-zero polynomials. Therefore, $\mathbb{R}_{>0}^m\setminus(S_1\cup S_2)\neq\emptyset$. For any $\kappa^*\in\mathbb{R}_{>0}^m\setminus(S_1\cup S_2)$, by \eqref{al:thm m}, we can find a corresponding $(p^*,\;\lambda^*)$ such that $\Gamma_j(\kappa^*,\;p^*,\;\lambda^*)=0$ for all $j\in\{1,\;\ldots,\;m\}$. By the definition of $S_1$ and $S_2$ in \eqref{eq:Sinproof}, we have $B_{/s}(p^*,\;\lambda^*)\neq0$ and $B(p^*,\;\lambda^*)\neq0$ (otherwise, $(p^*,\;\lambda^*)\in S_1\cup S_2$). For the network $G$, let $h$ denote the steady-state system augmented by the conservation laws, and let $f$ denote the steady-state system. By \cite[Proposition 5.3]{Wiuf2013}, we have \begin{equation}\label{al:thm detJach=sumdetJacf}
       det(Jac_h)=\sum_{\substack{I\subseteq\{1,\;\ldots,\;s\}\\|I|=r}}det(Jac_f[I,\;I]).\;
   \end{equation}
   For the pair $(p^*,\;\lambda^*)$, let $x^*=\left(\frac{1}{p_1^*},\ldots,\frac{1}{p_s^*}\right)$. Notice that \begin{equation}\label{al:thm Jacfkx=Jplambda}
       Jac_f(\kappa^*,\;x^*)=J(p^*,\;\lambda^*).\;
   \end{equation}
   Hence, by \eqref{eq:tildeB}, \eqref{al:thm detJach=sumdetJacf} and \eqref{al:thm Jacfkx=Jplambda}, we have \begin{equation}\label{al:h and B}
       det(Jac_h)(\kappa^*,\;x^*)=B(p^*,\;\lambda^*)\neq0.\;
   \end{equation}
   Assume that the first $r$ rows of $\mathcal{N}$ are linearly independent. By the definition of $h$ in \eqref{eq:h_eta}, $h_{r+1}$, $\cdots$, $h_s$ represent the conservation laws. By substituting $x^*$ into conservation laws given by $h_{r+1},\;\ldots,\;h_s$, we can obtain a corresponding total-constant vector, denoted by $c^*=(c_{r+1}^*,\ldots,c_s^*)$. By \eqref{al:h and B}, notice that $x^*$ is a nondegenerate positive steady state. In some neighborhoods of $(\kappa^*,\;c^*)$, by the implicit function theorem, $x=(x_1,\ldots,x_s)$ can be regarded as a function of $\kappa$ and $c$. By Cramer's rule, we have \begin{equation}\label{al:xs partial cs neq 0}
        \frac{\partial x_s}{\partial c_s}(\kappa^*,\;c^*)     =\frac{(-1)^{2s}~\frac{\partial h_s}{\partial x_s}~ det(Jac_h[\{1,\ldots,s-1\},\;\{1,\ldots,s-1\}])}{det(Jac_h)}(\kappa^*,\;x^*).\;
    \end{equation}
    Again, by \cite[Proposition 5.3]{Wiuf2013}, we have \begin{equation}\label{al:detJachs-1=sumdetJacfs-1}
        det(Jac_h[\{1,\ldots,s-1\},\;\{1,\ldots,s-1\}]) =\sum_{\substack{I\subseteq\{1,\;\ldots,\;s-1\}\\|I|=r}}det(Jac_f[I,\;I]).\;
    \end{equation}
    Hence, by \eqref{eq:tildeB}, \eqref{al:thm Jacfkx=Jplambda} and \eqref{al:detJachs-1=sumdetJacfs-1}, we have \begin{equation}\label{al:thm detJachs-1neq0}
       det(Jac_h[\{1,\ldots,s-1\},\;\{1,\ldots,s-1\}])(\kappa^*,\;x^*)=B_{/s}(p^*,\;\lambda^*)\neq0.\;
   \end{equation}
    Notice that $h_s$ can be written as $h_s=x_s-\sum_{j=1}^r\gamma_jx_j-c_s$. Hence, $\frac{\partial h_s}{\partial x_s}\neq0$. So, by \eqref{al:xs partial cs neq 0} and \eqref{al:thm detJachs-1neq0},  we have $\frac{\partial x_s}{\partial c_s}(\kappa^*,\;c^*)\neq0$. Therefore, the network has no non-vacuous ACR in species $X_s$  for any $\kappa^*\in\mathbb{R}_{>0}^m\setminus(S_1\cup S_2)$. Hence, we conclude that $G$ has no non-vacuous ACR in species $X_s$ for any generic choice of rate constants. \hfill$\square$
 
\section{Algorithm and Experiments}\label{subsec:alg}

In this section, we record an algorithm based on the criterion in Section~\ref{sec:criterion} to detect non‑vacuous ACR, see Algorithm~\ref{alg:main}. The supporting code, are available at \\
\url{https://github.com/Cynthia‑091225/Detect‑for‑ACR}. The ACR‑related results and runtime summary for applying Algorithm~\ref{alg:main} to the biochemical networks shown in Fig.~\ref{fig:networks_background} are provided in Table~\ref{tab:algorithm summary}, which demonstrate the efficiency of Algorithm~\ref{alg:main}.

\begin{algorithm}
    \caption{Detecting non-vacuous ACR}\label{alg:main}
    \begin{description}
        \item[Input.] A network $G$ 
        \item[Output.] “Yes, $X_i$” or “No”
        \item[Step 1.] {Compute  
        $A(\lambda)\equals \mathcal{N}\;diag\left(\sum_{j=1}^n \lambda_j\ell^{(j)}\right)\; \mathcal{Y}^{\top}$. }
        \item[Step 2.] {For each $i\in \{1,\;\cdots,\;s\}$, do the following steps. }
        \begin{description}
            \item[{Step 2.1}] {Let $\mathcal{N}_{/i}$ denote the matrix obtained by deleting the $i$-th row of $\mathcal{N}$. }
            \item[{Step 2.2}] If ${\rm rank}\left(\mathcal{N}_{/i}\right)=r-1$, then compute \[\mathcal{L}_{/i}\equals\left\{\det\left(A(\lambda)[\{1,\ldots,r\},\;I]\right)\;|\;I\subseteq \{1,\;\cdots,\;s\}\setminus\{i\}\;\text{and}\;|I|=r\right\}. \]
            \hspace*{\dimexpr\labelwidth+4\labelsep\relax} If ${\rm rank}\left(\mathcal{N}_{/i}\right)=r$, then compute 
            \[\mathcal{L}_{/i}\equals\left\{\det\left(A(\lambda)[I,\;I]\right)\;|\;I\subseteq \{1,\;\cdots,\;s\}\setminus\{i\}\;\text{and}\;|I|=r\right\}. \]
        \end{description}
    \item[Step 3.] If $\mathcal{L}_{/i}=\{0\}$ for some $i$, then return “Yes, $X_i$”. \\\hspace*{\dimexpr\labelwidth+\labelsep\relax}If $\mathcal{L}_{/i}\neq\{0\}$ for all $i$, then return “No”.
    \end{description}
 (\# Here, assume that the first $r\equals~{\rm rank}(\mathcal{N})$ rows of $\mathcal{N}$ are linearly independent. )
\end{algorithm}

\begin{table}[!t]
\caption{Summary of ACR Detection for Biochemical Networks.}\label{tab:algorithm summary}
\centering
\begin{tabular}{@{}lccccll@{}}
\toprule
Network & $s$ & $r$ & $m$ & Total runtime & Output ACR & Verified ACR\\
\midrule
(a) & 3 & 1 & 2 & 0.03100 & Yes, $Z$ & Yes, $Z$\\
(b) & 3 & 2 & 2 & 0.10900 & Yes, $X$ & Yes, $X$\\
(c) & 4 & 2 & 4 & 0.14000 & No  & No\\
(d) & 5 & 3 & 6 & 0.35900 & Yes, $I$ & Yes, $I$\\
(e) & 7 & 5 & 9 & 1.67200 & Yes, $Y_p$ & Yes, $Y_p$\\
(f) & 8 & 6 & 11 & 4.21900 & Yes, $X_pY$ & Yes, $X_pY$\\
(g) & 8 & 6 & 12 & 9.17200 & Yes, $S_3$ & Yes, $S_3$\\
(h) & 10 & 6 & 12 & 11.01500 & No  & No\\
\bottomrule
\end{tabular}
\footnotetext{\footnotesize \textbf{Note}: (i)  The column "Network" corresponds to the networks illustrated in the online supplementary material at \url{https://github.com/Cynthia-091225/Detect-for-ACR/blob/main/Fig_networks.jpg}.  (ii) The column ``Total runtime'' denotes the runtime (in seconds) of running Algorithm \ref{alg:main}. (iii) The column ``Output ACR'' indicates the output of Algorithm \ref{alg:main}. (iv) The column ``Verified ACR'' indicates the verified results by precise algebraic computation or other classic methods.}
\end{table}

\end{appendices}


\end{document}